\tikzstyle{block} = [draw=black, thick, text width=2cm, minimum height=1cm, align=center]  
\tikzstyle{arrow} = [thick,->,>=stealth]
\definecolor{dblue}{RGB}{59,104,118}
\definecolor{lgreen}{RGB}{150,170,50}
\theoremstyle{definition}
\newtheorem{theorem}{Theorem}
\newtheorem{example}[theorem]{Example}
\newtheorem{corollary}[theorem]{Corollary}
\newtheorem{lemma}[theorem]{Lemma}
\newtheorem{definition}[theorem]{Definition}
\newtheorem{proposition}[theorem]{Proposition}
\DeclareMathOperator\tr{Tr}
\newcommand\mean[1]{\left\langle #1 \right\rangle}
\def\id{\mathbb{I}}%
\def\sos{\mathit{q}}%
\def\sos#1{\mathit{q}  \left(#1\right)   }%
\def\SOS{\mathcal{Q}}%
\def\SOS#1{\mathcal{Q}\left( #1 \right)}%
\def\cSOS#1{\overline{\mathcal{Q}}\left( #1 \right)}%
\def\JNR#1{\mathcal{J}\left( #1 \right)}%
\def\stab{\operatorname{STAB}}
\def\th{\operatorname{TH}}
\def\saur{\mathcal{S}_{ac}}%
\def\sar{\mathcal{S}^{{\scriptscriptstyle\leq}}_{ac}}%
\def\saura{\mathcal{S}_a}%
\def\sara{\mathcal{S}^{{\scriptscriptstyle\leq}}_a}%
\def\eqabove#1  {\stackrel{\mathclap{\footnotesize\mbox{#1} }} {=}     }%
\def\leqabove#1 {\stackrel{\mathclap{\normalfont\mbox{#1} }} {\leq}  }%
\def\EXP#1{\langle #1 \rangle_\rho }%
\begin{document}

\title{Bounding the joint numerical range of Pauli strings by graph parameters}

\author{Zhen-Peng Xu}
\email{zhen-peng.xu@ahu.edu.cn}
\affiliation{School of Physics and Optoelectronics Engineering,\protect\\ Anhui University, 230601 Hefei, People’s Republic of China}
\affiliation{Naturwissenschaftlich--Technische Fakult\"{a}t,
Universit\"{a}t Siegen, 57068 Siegen, Germany}

\author{Ren\'{e} Schwonnek}
\email{rene.schwonnek@itp.uni-hannover.de}
\affiliation{Institut f\"{u}r Theoretische Physik, Leibniz Universit\"{a}t Hannover, Appelstrasse 2, 30167, Germany}

\author{Andreas Winter}
\email{andreas.winter@uab.cat}
%\affiliation{ICREA---Instituci\'o Catalana de Recerca i Estudis Avan\c{c}ats,\protect\\ Pg.~Lluis Companys, 23, 08010 Barcelona, Spain} 
\affiliation{ICREA {\&} Grup d'Informaci\'{o} Qu\`{a}ntica, Departament de F\'{\i}sica,\protect\\ Universitat Aut\`{o}noma de Barcelona, 08193 Bellaterra (Barcelona), Spain}
\affiliation{Institute for Advanced Study, Technische Universit\"at M\"unchen,\protect\\ Lichtenbergstra{\ss}e 2a, 85748 Garching, Germany}
\affiliation{QUIRCK--Quantum Information Independent Research Centre Kessenich,\protect\\ Gerhard-Samuel-Stra{\ss}e 14, 53129 Bonn, Germany}

\begin{abstract}
% The interplay between the quantum state space and a specific set of measurements can be effectively captured by examining the set of jointly attainable expectation values. This set is commonly referred to as the (convex) joint numerical range. In this work, we explore geometric properties of this construct for measurements represented by tensor products of Pauli observables, also known as Pauli strings.
% The structure of pairwise commutation and anti-commutation relations among a set of Pauli strings determines a graph $G$, sometimes also called the frustration graph. We investigate the connection between the parameters of this graph and the structure of minimal ellipsoids encompassing the joint numerical range. Such an outer approximation can be very practical since ellipsoids can be handled analytically even in high dimensions. 

% We find counterexamples to a conjecture from [C. de Gois, K. Hansenne, O. G\"uhne, Phys. Rev. A 107, 062211 (2023)], and answer an open question in [M. B. Hastings and R. O’Donnell, {Proc.~STOC 2022, pp.~776–789}], which implies a new graph parameter that we call $\beta(G)$. Besides, we develop this approach in different directions, such as comparison with graph-theoretic approaches in other fields, applications in quantum information theory, numerical methods, properties of the new graph parameter, etc. Our approach suggests many open questions that we discuss briefly at the end.
%
The relations among a given set of observables on a quantum system are effectively captured by their so-called joint numerical range, which is the set of tuples of jointly attainable expectation values. Here, we explore geometric properties of this construct for Pauli strings, whose pairwise commutation and anti-commutation relations determine a graph $G$. We investigate the connection between the parameters of this graph and the structure of minimal ellipsoids encompassing the joint numerical range, and develop this approach in different directions. As a consequence, we find counterexamples to a conjecture by de Gois, Hansenne, and G\"uhne [Phys. Rev. A 107, 062211 (2023)], and answer an open question of Hastings and O'Donnell [Proc. STOC 2022, pp. 776-789], which implies a new graph parameter that we call $\beta(G)$. 
Furthermore, we provide new insights into the perennial problem of estimating the ground-state energy of a many-body Hamiltonian. Our methods give lower bounds on the ground-state energy, which are typically hard to come by, and might therefore be useful in a variety of related fields. 
%Since our approach connects some basic elements tightly together, such as graph parameters, joint numerical ranges, representations of (anti)commutation relations and ground-state energies, the impact of our results can directly influence the fields of graph theory, quantum information theory, (quasi-)Clifford algebra and quantum chemistry.
\end{abstract}

\date{16 April 2024} 

\maketitle

\section{Introduction}
\label{sec:intro}
Pauli strings, i.e.~tensor products of families of Pauli operators acting on multiple qubits, are undeniably among the most essential and omnipresent objects in the field of quantum information theory.  In addition to their role as unitary transformations, they commonly also serve as fundamental building blocks for constructing observables. Examples for such constructed observables are widespread, ranging from virtually every Hamiltonian that is considered in the field of quantum computing~\cite{kaye2006introduction,stolze2008quantum,briegel2009measurement,cafaro2011geometric,bharti2022noisy},  to typical measurements in quantum communication protocols~\cite{bennett2020quantum,hahn2019quantum,nema2020pauli,mastriani2021satellite}, to the rich theory of spin systems in condensed matter physics~\cite{luttinger1960ground,chapman2023unified,chapman2020characterization,kitaev2015simple,sachdev1993gapless,pirandola2020advances,fannes1992finitely}.

A pivotal object of investigation in this context is the set of jointly attainable expectation values. 
For a given  set of $n$ Pauli strings $\mathcal{S}=\{S_1,\dots,S_n\}$ acting on the Hilbert space $\mathcal{H}$,
we will consider
\begin{align}
    J(\mathcal{S})= \left\{ \left(\EXP{S_1} , \dots, \EXP{S_n} \right) | \rho\in\mathcal{D}(\mathcal{H}) \right\},\label{eq:jnrdef}
\end{align}
which is a compact convex subset of $\mathbb{R}^n$ (see Fig.~\ref{fig:paulis} for examples); here and subsequently, $\mathcal{D}(\mathcal{H})$ denotes the set of all states on the Hilbert space ${\cal H}$, i.e. density matrices, which are positive semidefinite and of unit trace.\newpage

For a general sequence {${\cal S}$} of observables, {$J(\mathcal{S})$ in Eq.} \eqref{eq:jnrdef} is commonly referred to as (convex) joint numerical range~\cite{halmos2012hilbert,gawron2010restricted,puchala2011product,dunkl2011numerical,gutkin2013joint,szymanski2018classification,szymanski2019geometric}, or convex support. For special sets of observables, this set may have however more specific names depending on the context. The most prominent example of this surely is the well-known Bloch ball/sphere, which can be understood as the joint numerical range of the three Pauli operators acting on a single qubit, i.e. $\mathcal{S}=\{X,Y,Z\}$.

\begin{figure}[ht]
    \centering
    \includegraphics[width=0.7\linewidth]{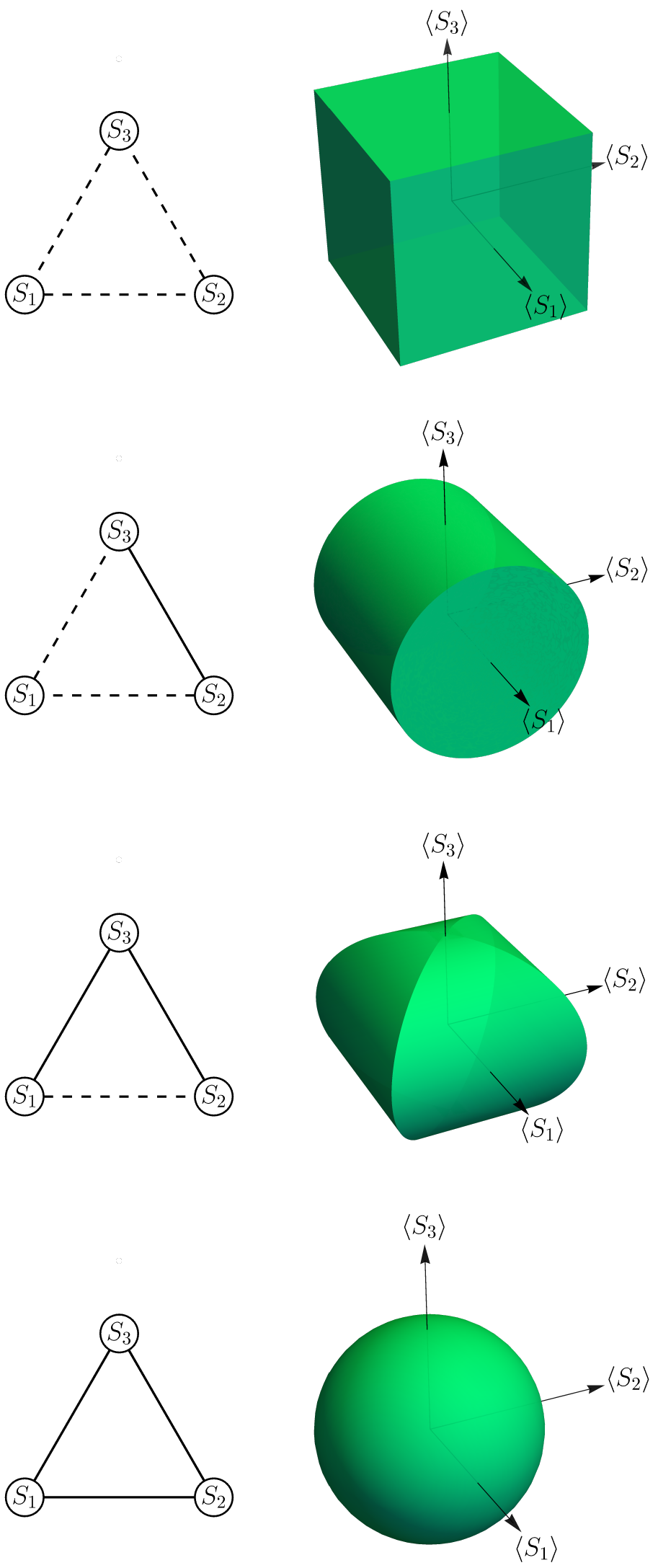}
    \caption{Joint numerical ranges of three Pauli strings whose commutation and anti-commutation relations are represented by the 4 possible graphs of three vertices. The real lines represent the anti-commutation relations and the dashed lines stand for commutativity relations. }
    \label{fig:paulis}
\end{figure}
For a more complex set of strings,
the geometry of $J(\mathcal{S})$ is {no} longer necessarily captured by a sphere. Nevertheless, one can {still} ask for the smallest radius of a sphere or more generally an ellipsoid encompassing it.  From a purely geometric perspective, this is the question  tackled in the present work.

The joint numerical range geometrically encodes a lot of relevant information about the interplay between observables and quantum states. Most strikingly, Hamiltonians that can be written as linear combinations of {elements in ${\cal S}$} can be directly understood as linear functionals acting on $J(\mathcal S)$. Correspondingly, the quest of finding ground-state energies and properties of states attaining them can be cast as the characterization of tangential hyperplanes on $J(\mathcal S)$. Moreover, upper and lower bounds to ground-state energies correspond to inner and outer approximations of $J(\mathcal S)$. 

This insight, however, directly reveals that the precise characterization of the joint numerical range can, even for simple objects like Pauli strings, be an intrinsically hard problem, i.e. at least as hard as solving ground state problems on multiple qubits. This clearly justifies {the} demand for practical approximation techniques.

For the ground-state energy problem itself, there is a large toolbox of {known} methods. Most of them provide lower bounds (inner bounds on $J(\mathcal{S})$) by restricting the optimization to a suitable variational class of states. Prominent examples to mention here are classical methods  building on matrix product states~\cite{perez2007matrix,schuch2008computational,fannes1992finitely}, as well as implementations on quantum computers where algorithms like the Quantum Approximate Optimization Algorithm (QAOA)~\cite{hadfield2019quantum,farhi2014quantum,binkowski2023elementary} or the Variational Quantum Eigensolver (VQE)~\cite{li2023quantum,kandala2017hardware} are very popular.

Methods for outer approximations, on the other hand, are rare. In contrast to variational methods, the handling of global structures is needed, which is typically much harder to achieve. 
Nevertheless, they can be of central importance. On one hand, outer bounds give complements for inner approximations and, by this, an interval for the overall accuracy of an approximation attempt. On the other, {the existence of} a bound that cannot be surpassed by any quantum state, also including those that are not in a variational class, can be critical in applications such as cryptography or entanglement detection. 

Notable examples for outer approximations can be found in  \cite{hastings2022optimizing,kull2022lower,eisert2023note}. Here, the ansatz is to exploit the algebraic structure of the problem for setting up hierarchies of non-commutative polynomial relaxations \cite{ligthart2023convergent,pironio2010convergent}. Even though the scope of these methods is to solve a particular ground state problem, the feasible set of such a relaxation, sometimes called the relaxation of the quantum set, can be seen as an outer approximation to the joint numerical range $J(\mathcal S)$. On the theoretical side, those methods usually come with a convergence guarantee. In practice, they however tend to show a very bad scaling behavior and typically saturate any practical limit of computational resources very quickly.   

In the present work, we move towards a {(semi-)}analytical understanding of $J(\mathcal S)$.
For an $m$-dimensional ellipsoid $\mathcal{E}_r(w)$ with principal axes of length $r/\sqrt{w_i}$, where $w=(w_1,\dots,w_m)$ is a positive weight vector, we ask for the smallest $r\geq 0$ such that
\begin{align}
    J(\mathcal{S}) \subseteq  \mathcal{E}_r(w)
\end{align}
holds.
{S}uch a minimal ellipsoid, or at least an upper bound on $r$, then allows to compute bounds on ground-state energies analytically.  This is especially interesting when {we deal} with high-dimensional objects, i.e. for long Pauli strings where the Hilbert space dimension scales exponentially.  We have to clarify that the deployment of a new numerical toolbox is, however, not the scope of the present work, and we leave this for a future investigation. Here, our main interest is to develop the foundations for a comprehensive understanding of the presented structures. 

Finding a valid $r$, the optimal value of which we will refer to as \emph{(generalized) radius}, corresponds to the non-linear optimization 
\begin{align}
    r^2\geq \sup_{\rho} \sum w_i \EXP{S_i}^2 \label{eq:minradius}
\end{align}
Formally, this can be seen as an instance of so-called spectrahedral inclusion problems~\cite{kellner2013containment,kellner2015semidefinite}, for which there are unfortunately no out-of-the-box solutions. 
Hence, a lot of the particular structure of the present problem has to be leveraged in the solution. 

The structure of minimal ellipsoids, as well as the structure of $J(\mathcal{S})$ itself (see Fig.~\ref{fig:paulis}), is closely related to a graph $G$ that encodes the commutation and anti-commutation relations within the strings of a set $\mathcal{S}$. 
This graph is also known under the names ``frustration graph''~\cite{chapman2020characterization,chapman2023unified}, ``anti-commutation graph''~\cite{gokhale2019minimizing} or ``anti-compatibility graph''~\cite{kirby2019contextuality} in the literature.
It can be shown~\cite{de2022uncertainty,hastings2022optimizing} (see Proposition~\ref{pro:main_results} below) that the minimal radius of an ellipsoid is lower bounded by the weighted independence number of $G$ and upper bounded by its weighted Lov\'{a}sz number. While being efficiently computable, the Lov\'{a}sz number {as an} upper bound is known to be generally not tight~\cite{hastings2022optimizing,de2022uncertainty}. The tightness of the lower bound was however outstanding. 
For the spheroidal case $w=(1,1,\dots)$ it was conjectured~\cite{de2022uncertainty}, and extensively  discussed during the coffee breaks of some workshops, that the independence number of $G$ describes the minimal radius. 
We show by an explicit example that this conjecture is false.

As a consequence, we introduce the minimal radius of an ellipsoid as a new graph parameter $\beta(G,w)$ and embark on its  exploration. We develop a number of equivalent formulations of it, we show that it is monotonic non-increasing when passing to an induced subgraph, multiplicative under the so-called lexicographic product, and several other properties. We calculate it for cyclic graphs $C_n$ and develop numerical tools to evaluate it for the complements of odd cycles $\overline{C}_{2n+1}$ supporting our conjecture that $\beta(\overline{C}_{2n+1})>2$ for all  $n\geq 3$. Lastly, we also elaborate on the connection of our results to the quest for  uncertainty relations between Pauli strings as their structure also depends directly on the structure of minimal ellipsoids \cite{de2022uncertainty,kaniewski2014entropic}. 

Our results connect multiple topics in quantum theory, graph theory, and algebra analysis, and so we believe that they might provide new insights in those fields as well.

\section{Operator representations of graphs, graph parameters and main results}
\label{sec:results}
In order to get a leverage on minimal ellipsoids we have to find global structures of the problem that can be efficiently used. Regarding a set $\mathcal{S}$ of Pauli strings as {the} representation of an algebra is along those lines. To this end, we introduce the frustration graph and consider an example. 

It is easy to see from the elementary properties of the four Pauli matrices (from now on including the identity), that each possible pair of Pauli strings either commutes or anti-commutes. 
For a set of strings, this information can be encoded in a graph $G$, where each vertex represents a string and an edge is drawn whenever two strings anti-commute. We will refer to this graph as frustration graph. 
Basic examples are the graphs shown in Fig.~\ref{fig:paulis}. For example, consider the third graph from above. This example was built by taking strings
\begin{align}
    S_1=X\otimes Y,\ 
    S_2= Z \otimes Z, \text{ and }
    S_3=Z\otimes Y.
    \label{eq:examplestrings}
\end{align}
The strings $S_1$ and $S_2$ commute, whereas the string $S_3$ anti-commutes with both of them. 
The joint numerical range is depicted next to it and arises from the intersection of two cylinders. For completeness, it makes sense to regard a cylinder as an asymptotic ellipsoid with one of its principal axes approaching infinity.  

Note that fixing a graph $G$ does not uniquely fix the set of strings representing it and there are more than only unitary degrees of freedom in them. In this sense, there are more sets of strings than there are graphs.
For example, any triple of strings will by construction give one of the graphs shown in Fig.~\ref{fig:paulis}. 

For the questions considered in this work, we can, however, make a basic proposition, formulated in more detail in Theorem~\ref{thm:standard}. It states that the radius of a minimal ellipsoid will be the same across all possible sets of strings representing the same graph.

\begin{proposition}
    Let $\mathcal{S}_1$ and $\mathcal{S}_2$ be sets of Pauli strings with the same frustration graph. Then we have that for any ellipsoid $\mathcal{E}_r(w)$,
    \begin{align}
        \mathcal{E}_r(w)\supseteq J(\mathcal{S}_1) \text{ if and only if } 
        \mathcal{E}_r(w)\supseteq J(\mathcal{S}_2).
    \end{align}
\end{proposition}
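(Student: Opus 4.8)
The plan is to reduce the inclusion $J(\mathcal{S})\subseteq\mathcal{E}_r(w)$ to a purely algebraic statement about the operators $S_i$, and then observe that this algebraic statement depends only on the frustration graph. First I would recall from Eq.~\eqref{eq:minradius} that $\mathcal{E}_r(w)\supseteq J(\mathcal{S})$ is equivalent to
\begin{align}
    r^2 \geq \sup_{\rho\in\mathcal{D}(\mathcal{H})} \sum_i w_i \tr(\rho S_i)^2 .
\end{align}
The key move is to linearize the square: $\tr(\rho S_i)^2 = \tr(\rho^{\otimes 2}(S_i\otimes S_i))$, so that the supremum becomes $\sup_\rho \tr\bigl(\rho^{\otimes 2} \sum_i w_i\, S_i\otimes S_i\bigr)$. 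Alternatively, and more cleanly, one uses the standard fact that $\sup_\rho \sum_i w_i\tr(\rho S_i)^2$ equals the largest eigenvalue of a suitable operator built from the $S_i$ — concretely one can write it as $\max_{\psi}\sum_i w_i \langle\psi|S_i|\psi\rangle^2$ and relate this to the operator norm of the map $\rho\mapsto \sum_i w_i \tr(\rho S_i) S_i$, whose relevant spectral data is determined by the Gram-type matrix with entries involving products $S_iS_j$.

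The crucial structural input is that for Pauli strings the product $S_iS_j$ is, up to a phase $\pm 1$ or $\pm i$, again a Pauli string, and in particular $S_iS_j = (-1)^{\langle i,j\rangle} S_jS_i$ where $\langle i,j\rangle\in\{0,1\}$ records exactly the edge relation in the frustration graph $G$. Thus the anticommutation pattern, i.e.\ the graph, controls the only nontrivial algebraic relations among the $S_i$ that enter the optimization. I would make this precise by showing that the quantity $\sup_\rho \sum_i w_i\tr(\rho S_i)^2$ can be expressed entirely in terms of data that is invariant under replacing $\{S_i\}$ by any other family $\{S_i'\}$ of Pauli strings with the same commutation/anticommutation graph — for instance, because the relevant operator is a linear combination of products $S_{i_1}\cdots S_{i_k}$ whose scalar coefficients and whose "support structure" (whether a given product is $\pm\id$ or a nontrivial traceless Pauli) are dictated solely by the $\mathbb{F}_2$-bilinear form $\langle\cdot,\cdot\rangle$, hence by $G$. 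This is essentially the content of the more detailed Theorem~\ref{thm:standard} referenced in the text, so I would either invoke it or reprove the special case needed here.

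The main obstacle I anticipate is making the invariance argument watertight: two sets of strings with the same frustration graph need not be related by a single unitary (the text explicitly warns of this), so one cannot simply conjugate $\mathcal{S}_1$ into $\mathcal{S}_2$. The fix is to argue at the level of the abstract $*$-algebra generated by formal generators $g_i$ subject only to $g_i^2=\id$ and $g_ig_j=(-1)^{\langle i,j\rangle}g_jg_i$: every set of Pauli strings with graph $G$ furnishes a $*$-representation of this algebra, and the optimization $\sup_\rho\sum_i w_i\tr(\rho S_i)^2$ is bounded by the same value for every representation because it only references the universal algebra element $T_w$ (built from the $g_i$) and its operator norm, which is a representation-independent quantity provided one checks that for Pauli strings the representation is "large enough" to saturate it — e.g.\ it contains a faithful copy of the relevant subalgebra, or at least achieves the universal norm. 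I would therefore spend the bulk of the proof (i) writing $\sup_\rho\sum_i w_i\tr(\rho S_i)^2 = \lambda_{\max}(T_w)$ for an explicit $T_w$ in the algebra, (ii) checking $T_w$ is the image of a fixed universal element, and (iii) verifying that any Pauli-string representation attains $\|T_w\|_\infty^{\mathrm{univ}}$, after which the biconditional $\mathcal{E}_r(w)\supseteq J(\mathcal{S}_1)\iff r^2\geq\lambda_{\max}(T_w)\iff \mathcal{E}_r(w)\supseteq J(\mathcal{S}_2)$ is immediate.
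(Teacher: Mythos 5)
There is a genuine gap at the heart of your step (i). The quantity $\sup_\rho\sum_i w_i\tr(\rho S_i)^2$ is \emph{quadratic} in $\rho$, so it cannot be written as $\sup_\rho\tr(\rho T_w)=\lambda_{\max}(T_w)$ for any fixed operator $T_w$; after linearizing on the doubled space it becomes $\max_\rho\tr\bigl[\rho^{\otimes 2}\sum_i w_i(S_i\otimes S_i)\bigr]$, i.e.\ a maximization over \emph{symmetric product} states, which is in general strictly smaller than $\lambda_{\max}\bigl(\sum_i w_i\,S_i\otimes S_i\bigr)$ --- this is exactly why Appendix~\ref{sec:numerical} has to set up a separability/rank-constrained SDP hierarchy rather than a single eigenvalue computation. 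The ``Gram-type matrix with entries involving products $S_iS_j$'' is also a red herring: distinct Pauli strings are always trace-orthogonal, so the matrix $\bigl(\tr(S_iS_j)\bigr)_{ij}$ is diagonal for \emph{every} choice of strings and carries no information about the graph. A correct linearization does exist, namely $\sup_\rho\sum_iw_i\langle S_i\rangle_\rho^2=\max\{\|\sum_ic_iS_i\|^2 : \sum_ic_i^2/w_i\le1\}$ by Cauchy--Schwarz duality in $c$, but this is a maximum over a \emph{family} of operator norms, not the norm of one universal element.

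Even after that repair, your step (iii) --- representation-independence --- is precisely where the real work lies, and you defer it. The quasi-Clifford algebra of a graph is not simple, so two Pauli-string realizations of the same graph are genuinely inequivalent (generally non-faithful) representations, and the norm of a fixed linear combination $\sum_ic_iS_i$ can depend on the realization. What rescues the statement is the structure theorem (Theorem~\ref{thm:saur}): every SAUR is unitarily equivalent to $\bar S_i\otimes D_i$ with $\{\bar S_i\}$ the standard SAUR and $\{D_i\}$ commuting, whence $\|\sum_ic_iS_i\|=\max_k\|\sum_i s_{ik}c_i\bar S_i\|$ for signs $s_{ik}=\pm1$, and the sign flips are absorbed because the constraint set $\{c : \sum_ic_i^2/w_i\le1\}$ is invariant under them. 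This step (or, equivalently, the convexity argument in the paper's proof of Theorem~\ref{thm:standard}, which is what the Proposition actually rests on) is the missing core of your proof; without it, ``the operator norm is representation-independent provided the representation is large enough to saturate it'' is a restatement of the claim to be proved rather than an argument for it.
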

In other words, we can turn to investigating structures on a graph $G$ rather than on an explicit set of strings. 
This can be done by introducing an algebraic framework.
For a given graph $G$ we will consider operators with the abstract algebraic properties
\begin{itemize}
    \item[(i)] $S_i=S^*_i$,
    \item[(ii)] $S_i^2=\id$,
    \item[(iia)] $S_i^2\leq\id$,
    \item[(iii)] $\{S_i,S_j\}=0$ if $i\sim j$,
    \item[(iv)] $[S_i,S_j]=0$  $i \nsim j$,
\end{itemize}
where $i\sim j$ denotes vertices that are connected by an edge in a given graph $G$, i.e.~$i$ and $j$ are adjacent. 

It is clear that Pauli strings will obey all these properties. 
In order to investigate bounds on the radius of a minimal ellipsoid, we can, however, turn things around and consider all sets of operators fulfilling these conditions, or at least some of them. By this, we get a constructive tool for setting up relaxations of the joint numerical range. 
The detailed discussion of such operator representations of a graph will be the mathematical core of this work.
\begin{definition}
    For a given graph $G$, a set $\{S_i\}$ of operators in some Hilbert space $\mathcal{H}$ will be called 
    \begin{itemize}
        \item a self-adjoint unitary representation for anti-commutativity (SAURA) if (i), (ii), and (iii) hold. The set of all SAURAs will be denoted by $\saura(G)$.
        \item a self-adjoint unitary representation (SAUR) if (i), (ii), (iii){, and (iv)} hold.
        The set of all SAURs will be denoted by $\saur(G)$.
        \item a self-adjoint representation for anti-commutativity (SARA) if (i), (iia), and (iii) hold. 
        The set of all SARAs will be denoted by $\sara(G)$.
        \item a self-adjoint representation  (SAR) if (i), (iia),  (iii){, and (iv)} hold. 
        The set of all SARs will be denoted by $\sar(G)$.
    \end{itemize}
\end{definition}
It is clear from the definition above that the different sets of graph representations include each other, i.e. every SAUR is a SAR and a SAURA, and they are all a SARA. This can be captured in the diagram
\begin{center}
    \begin{tikzpicture}
      \matrix (m)
        [
            matrix of math nodes,
            row sep    = 0em,
            column sep = 0em
        ]
        {
            \sara(G)   &   \supseteq & \saura(G)\\
            \rotatebox[origin=c]{270}{$\supseteq$}   & &  \rotatebox[origin=c]{270}{$\supseteq$} \\
            \sar(G)  &    \supseteq &   \saur(G)\\
        };
   \end{tikzpicture}
\end{center}

One can also consider the {$C^*$}-algebras generated by the different sets of relations. Each of these algebras naturally comes with a state space, too, and hence also with a joint numerical range. These joint numerical ranges are by construction outer approximations to the joint numerical range of an explicit set of Pauli strings. 
The algebra generated by all SAURs is known as quasi-Clifford algebra \cite{gastineau1982quasi}.
It is a finite-dimensional algebra and can be seen as a generalization of the Clifford algebra in the sense that we obtain the Clifford algebras as the sub-cases with a fully connected graph. The representation theory of this object has been worked out in \cite{gastineau1981systems,gastineau1982quasi}, a brief summary can be found in \cite{hastings2022optimizing}. In a nutshell, we have that the Pauli strings with which we started are exactly generators of representations of this algebra. Bounds on minimal ellipsoids derived for this algebra are optimal. Bounds arising from considering the other algebras give relaxations, i.e. upper bounds on the radius. 

The algebra corresponding to all SAURAs can generally be infinite-dimensional, yet still separable. As we will show later (Lemma~\ref{lm:sar_dec}), the  algebra corresponding to SAR can be understood as tensor product of the quasi-Clifford algebra of the SAURs with a classical (i.e. commuting) algebra. An according result for the algebra corresponding to SAURA does however not hold. 

For a set $\Omega$ of operator representations of a graph, where each representation $\mathcal{S}$ acts on a Hilbert space $\mathcal{H}_\mathcal{S}$, we can extend the notion of a joint numerical range by considering
\begin{align}
    \JNR{\Omega}:= \left\{  \left(\mean{S_i}_\rho\right)\   |\ \mathcal{S}=\{S_i\}\in\Omega, \rho \in \mathcal{D}(\mathcal{H}_\mathcal{S}) \right\},
\end{align}
which is still a subset of $\mathbb{R}^n$, but now containing all possible tuples of expectations attainable by {the} set of  operators corresponding to a particular operator representation. 
In order to analyze ellipsoids, it makes sense to also consider the set of squared expectations
\begin{align}
        \SOS{\Omega}:= \left\{  \left(\mean{S_i}^2_\rho\right)\   |\ \mathcal{S}=\{S_i\}\in\Omega, \rho \in \mathcal{D}(\mathcal{H}_\mathcal{S}) \right\}.
\end{align}
Given this set, the minimal radius of an ellipsoid [recall \eqref{eq:minradius}] corresponding to some $w$ is the square root of the maximal value of a linear functional 
\begin{align}
    \sos{\Omega,w} := \sup_{(v_i) \in \SOS{\Omega}}\sum_i w_i v_i. \label{def:q}
\end{align}

In the spheroidal case, when all elements of $w$ are just $1$, we omit $w$ in the notation and write $\sos{\Omega}$. Furthermore, we denote by $\cSOS{\Omega}$ the convex hull of $\SOS{\Omega}$. If $\Omega$ contains only one element, say $\Omega=\{x\}$, we will abuse notation and write $q(x)$ instead of $q(\{x\})$. 

{For $\Omega$ a $\saur(G), \sar(G), \saura(G)$ or $\sara(G)$, it holds that
\begin{equation}
    \JNR{\Omega} = \{ (\pm \sqrt{q_i}) | (q_i) \in \SOS{\Omega}\},
\end{equation}
since $\{\pm S_i\} \in \Omega$ if $\{S_i\} \in \Omega$. In this case, the characterization of $\JNR{\Omega}$ is equivalent to the one of $\SOS{\Omega}$.
}
We remark that $\JNR{\Omega}$ is always convex and contains what is called the stable set polytope $\stab(G)$ that, in those four cases, is given by the convex hull of the set 
\begin{equation*}
  \{ (v_i) \mid v_i v_j = 0 \text{ if } i\sim j, v_i, v_j\in \{0,1\} \}.
\end{equation*}

Eq. \eqref{def:q} basically states that the full structure of ellipsoids encompassing the different $\JNR{\Omega}$, and by this also any $\JNR{\mathcal{S}}$ of interest, is encoded in the convex hulls of  the different $\SOS{\Omega}$. 

Now we state our main results and explain the structure of the paper.
\begin{proposition}{(Main results)}\label{pro:main_results}
Let $G$ be a graph. Let $\stab(G)$ be its stable set polytope and  let $\th(G)$ be its Lov\'{a}sz-theta body. For $\mathcal Q$ evaluated on different representations of $G$, as introduced above, we have the following ordering relations:
\begin{align}\label{eq:mainineq}
     \stab(G)\,
     \overset{\textcolor{lgreen}{(a)} }{\subseteq}\, \begin{matrix} \cSOS{ \saur(G) } \\ \rotatebox[origin=c]{90}{$=$}_{\textcolor{lgreen}{(b)} } \\  \cSOS{ \sar(G) } \end{matrix}\,
     \overset{\textcolor{lgreen}{(c)} }{\subseteq}\, \begin{matrix} \SOS{ \saura(G)  } \\ \rotatebox[origin=c]{90}{$=$}_{\textcolor{lgreen}{(d)} } \\  \SOS{ \sara(G)  } \end{matrix} 
     \overset{\textcolor{lgreen}{(e)} }{=}\,  \th(G)
\end{align}
with the Lov\'asz theta body defined as 
\begin{equation*}
    \th(G) = \{ |\langle \phi_0|v_i\rangle|^2 \mid \langle v_i|v_i\rangle = 1, \langle v_i|v_j\rangle = 0 \text{ if } i\sim j\},
\end{equation*}
where $|\phi_0\rangle$ is an arbitrary but fixed normalized vector in $\mathbf{R}^n$.
\end{proposition}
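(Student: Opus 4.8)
The plan is to prove the chain \eqref{eq:mainineq} one link at a time, with the two equalities against $\th(G)$ in (e) as the substantive core; (d) will follow from (e), (b) from the structure theory of the SAR-algebra, and (a), (c) by short arguments. For (e), $\SOS{\saura(G)} = \th(G)$, I would treat the two inclusions separately. For ``$\subseteq$'': given $\{S_i\}\in\saura(G)$ on $\mathcal H$ and a state $\rho$, purify to $|\Psi\rangle\in\mathcal H\otimes\mathcal H_E$ (note $\{S_i\otimes\id_E\}\in\saura(G)$) and set $|v_i\rangle:=(S_i\otimes\id_E)|\Psi\rangle$. Then $\langle v_i|v_i\rangle=1$, $\langle\Psi|v_i\rangle=\langle S_i\rangle$, and for $i\sim j$ the operator $S_iS_j$ is anti-self-adjoint, so $\langle v_i|v_j\rangle$ is purely imaginary. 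The Gram matrix $M$ of $(|\Psi\rangle,|v_1\rangle,\dots,|v_n\rangle)$ is Hermitian positive semidefinite, hence so is $\operatorname{Re}M$ (for real $u$ one has $u^{T}(\operatorname{Re}M)u=u^{*}Mu\ge 0$, since the antisymmetric $\operatorname{Im}M$ contributes nothing); $\operatorname{Re}M$ has unit diagonal, real first row, and vanishing entries on the edges of $G$, so realising it as the Gram matrix of real vectors $\phi_0,\tilde v_1,\dots,\tilde v_n$ exhibits $(\langle S_i\rangle^2)=(|\langle\phi_0|\tilde v_i\rangle|^2)$ as a point of $\th(G)$. For ``$\supseteq$'': from an orthonormal representation $\{v_i\}$ with distinguished unit vector $\phi_0$, fix Majorana operators $\gamma_1,\dots,\gamma_N$ on a finite-dimensional space ($\gamma_k^{*}=\gamma_k$, $\{\gamma_k,\gamma_l\}=2\delta_{kl}$) and put $S_i:=\sum_k(v_i)_k\gamma_k$. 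Then $S_i^{*}=S_i$, $S_i^2=\langle v_i|v_i\rangle\,\id=\id$, and $\{S_i,S_j\}=2\langle v_i|v_j\rangle\,\id=0$ for $i\sim j$, so $\{S_i\}\in\saura(G)$; on a $+1$-eigenvector $|\chi\rangle$ of $\sum_k(\phi_0)_k\gamma_k$ the Clifford relations give $\langle\gamma_k\rangle_\chi=(\phi_0)_k$, hence $\langle S_i\rangle_\chi=\langle\phi_0|v_i\rangle$ and $(\langle S_i\rangle_\chi^2)$ is the desired point of $\th(G)$.

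For (d), $\saura(G)\subseteq\sara(G)$ gives $\SOS{\saura(G)}\subseteq\SOS{\sara(G)}$; conversely, for $\{S_i\}\in\sara(G)$ I would rerun the purification argument of (e)-``$\subseteq$'', where now $\langle v_i|v_i\rangle\le1$, and pad each $|v_i\rangle$ with a component $\sqrt{1-\|v_i\|^2}\,e_i$ along a fresh basis vector to restore normalisation without changing off-diagonal entries; taking $\operatorname{Re}M$ again places $(\langle S_i\rangle^2)$ in $\th(G)=\SOS{\saura(G)}$ by (e). For (b), $\saur(G)\subseteq\sar(G)$ gives one inclusion, and for the other it suffices to prove $\SOS{\sar(G)}\subseteq\cSOS{\saur(G)}$ and pass to convex hulls. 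Each $S_i^2$ is central in the algebra generated by $\{S_i\}\in\sar(G)$ --- it commutes with $S_j$ for $i\nsim j$ by (iv), and with $S_j$ for $i\sim j$ because $\{S_i,S_j\}=0$ forces $[S_i^2,S_j]=0$ --- so Lemma~\ref{lm:sar_dec} writes $S_i$ as a direct integral of $x_i\,W_i^{(x)}$ with $x_i\in[0,1]$ and $\{W_i^{(x)}\}$ a SAUR, and $\langle S_i\rangle=\int x_i\langle W_i^{(x)}\rangle\,d\nu(x)$ for a probability measure $\nu$. Since $\SOS{\saur(G)}$ is downward closed (replace each $W_i$ by $W_i\otimes Z_i$ on fresh ancilla qubits and tune the ancilla state to scale $\langle W_i\rangle$ by any factor in $[-1,1]$), every $(x_i^2\langle W_i^{(x)}\rangle^2)$ lies in $\SOS{\saur(G)}$, whence $(\int x_i^2\langle W_i^{(x)}\rangle^2\,d\nu)\in\cSOS{\saur(G)}$ --- which is compact, all SAURs factoring through the finite-dimensional quasi-Clifford algebra --- and Jensen's inequality, $\langle S_i\rangle^2\le\int x_i^2\langle W_i^{(x)}\rangle^2\,d\nu$ coordinatewise, together with downward closure of $\cSOS{\saur(G)}$, completes it.

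For (a), $\cSOS{\saur(G)}$ is convex, so it is enough to realise each indicator vector $\chi_I$ of an independent set $I$ (these are the extreme points of $\stab(G)$). Fix any SAUR $\{T_i\}$ of $G$ (e.g.\ Pauli strings) and a joint eigenvector $|\psi\rangle$ of the commuting family $\{T_i:i\in I\}$: then $\langle T_i\rangle_\psi^2=1$ for $i\in I$, while $\langle T_j\rangle_\psi=0$ for every $j\notin I$ with a neighbour in $I$ (such $T_j$ anti-commutes with some $T_i$, $i\in I$, hence maps $|\psi\rangle$ into the orthogonal complementary eigenspace). The remaining $j$, those with $I\cup\{j\}$ still independent, are handled by replacing $T_j$ with $T_j\otimes Y$ on a fresh qubit in state $|0\rangle$, which preserves the SAUR relations and kills $\langle T_j\rangle$; the resulting SAUR and product state realise $\chi_I$. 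Part (c) is then immediate: by (b), $\cSOS{\sar(G)}=\cSOS{\saur(G)}=\mathrm{conv}\,\SOS{\saur(G)}\subseteq\mathrm{conv}\,\SOS{\sara(G)}=\th(G)=\SOS{\saura(G)}$, using $\saur(G)\subseteq\sara(G)$, (d), (e), and convexity of $\th(G)$.

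The main obstacle I anticipate is the equality of the two \emph{convex hulls} in (b): unlike (d), where self-adjoint contractions reduce to self-adjoint unitaries by the elementary Gram-matrix padding, here one must genuinely invoke the structure theorem for the SAR-algebra (Lemma~\ref{lm:sar_dec}) --- that centrality of the $S_i^2$ forces a quasi-Clifford-times-commutative decomposition --- and splice it carefully with compactness of $\SOS{\saur(G)}$ and Jensen's inequality so that the direct-integral average provably lands inside the convex hull rather than just its closure. The Clifford-algebra construction in (e)-``$\supseteq$'' is the other nontrivial ingredient, though once one views an orthonormal representation as the linear assignment $v\mapsto\sum_k v_k\gamma_k$ into a Clifford algebra it is essentially forced.
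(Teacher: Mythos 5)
Your proposal is correct, and in two places it takes a genuinely different route from the paper. For the upper bounds (e)--(d), the paper works at the level of support functionals: it bounds $\sum_i w_i\langle A_i\rangle^2$ by the largest eigenvalue of the matrix $\bigl(\sqrt{w_iw_j}\,\langle\{A_i,A_j\}\rangle/2\bigr)$, invokes the eigenvalue characterization of $\vartheta(G,w)$ together with its monotonicity in $w$ to handle the SARA case, and then converts the inequality of weighted optima back into a set inclusion. Your purification-plus-Gram-matrix argument --- observing that $\langle\Psi|S_iS_j\otimes\id|\Psi\rangle$ is purely imaginary on edges, so that $\operatorname{Re}M$ is a real PSD matrix realizable as an orthonormal representation with handle $|\Psi\rangle$ --- produces a point of $\th(G)$ directly, and the padding by $\sqrt{1-\|v_i\|^2}\,e_i$ absorbs the SARA relaxation with no extra machinery; this is set-level rather than functional-level and arguably cleaner (the paper's passage from support functions on the positive orthant to set inclusion silently uses that $\th(G)$ is a convex corner). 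For (b), the difference cuts the other way: the paper proves the \emph{stronger} statement $\SOS{\sar(G)}=\SOS{\saur(G)}$ without convex hulls, because after the block decomposition of Lemma~\ref{lm:sar_dec} it realizes the convex combination $\sum_t p_t\lambda_i^{(t)}\langle W_i^{(t)}\rangle_{\rho_t}$ \emph{exactly} by taking a direct sum of the rescaled SAURs (direct sums of SAURs are SAURs) with the block-diagonal state $\oplus_t p_t\tau_t$, combined with the ancilla rescaling of Lemma~\ref{lm:star}; your Jensen-plus-downward-closure-plus-compactness argument only lands $(\langle S_i\rangle^2)$ in the convex hull, which suffices for the proposition as stated but loses the exact matching of expectation values (and hence of $\JNR{\cdot}$) that the paper's Theorem~\ref{thm:sarsaur} delivers. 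Your constructions for (e)-``$\supseteq$'' (Clifford generators $S_i=\sum_k(v_i)_k\gamma_k$, with $\langle\gamma_k\rangle_\chi=(\phi_0)_k$ forced by Cauchy--Schwarz against the bound $\sum_k\langle\gamma_k\rangle^2\le1$) and for (a) (joint eigenstate of the commuting family indexed by an independent set, with the $T_j\otimes Y$ ancilla trick to kill the remaining expectations) coincide in substance with the paper's Lemma~\ref{lm:vec2mat} and its common-eigenstate remark.
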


Consequently, for any non-negative weight vector $w$,
\begin{align}\label{eq:mainineq2}
     \alpha(G,w)\,
     {\le}\, \begin{matrix} \sos{ \saur(G), w} \\ \rotatebox[origin=c]{90}{$=$} \\  \sos{ \sar(G), w} \end{matrix}\,
     {\le}\, \begin{matrix} \sos{ \saura(G), w} \\ \rotatebox[origin=c]{90}{$=$} \\  \sos{ \sara(G), w} \end{matrix} 
     {=}\,  \vartheta(G,w),
\end{align}
where $\alpha(G,w)\! =\! q(\stab(G),w), \vartheta(G,w)\! =\! q(\th(G),w)$ are the weighted independence number and Lov\'asz number of $G$, and $\sos{ \saur(G), w}$ is denoted as $\beta(G,w)$ later.

The equality of $\SOS{ \saur(G) }$ and $\SOS{ \sar(G) }$ \textcolor{lgreen}{(b)}, as well as that of $\SOS{ \saura(G) }$ and $\SOS{ \sara(G) }$ \textcolor{lgreen}{(d)}, are outlined in Sec.~\ref{sec:sar} as Theorem~\ref{thm:sarsaur} and Theorem~\ref{thm:sarasaura}, respectively. Hence, we can focus the main part of our investigations to SAURA (Sec.~\ref{sec:saura}) and SAUR (Sec.~\ref{sec:saur}).

A relation between $\sos{ \saur(G), w}$ and $\vartheta(G,w)$ was already described in~\cite{de2022uncertainty,hastings2022optimizing}. From the algebraic perspective of this work, this relation is directly captured by the fact  that $\cSOS{ \saur(G) }$ is contained in $\th(G)$, their equivalence \textcolor{lgreen}{(b)} is summarized as Theorem~\ref{thm:thetavartheta} in Sec.~\ref{sec:saura}. In Sec.~\ref{beta}, we prove that the inclusion of $\stab(G)$ in $\SOS{ \saur(G) }$ can be strict for some graphs, which implies that $\beta(G,w)$ is a new graph parameter. For this, we have to firstly show in Theorem~\ref{thm:standard} in Sec.~\ref{sec:saur} that $\SOS{ \saur(G) } = \SOS{ \{S_i\} }$ for $\{S_i\}$ to be any SAUR of $G$.  We than continue to study the properties of the beta number under graph operations in Sec.~\ref{beta}. More explicitly, the properties for graph additions as in Theorem~\ref{thm:additionc} and Theorem~\ref{thm:additiond}, for graph products in Theorem~\ref{thm:lexi} and Theorem~\ref{thm:times}, for edge removal in Theorem~\ref{thm:eremoval}, and for cycle graphs in Theorem~\ref{thm:cycles}. At last, we apply those results to obtain better bounds than previous ones for uncertainty relations as in Thereom~\ref{thm:uncertainty}, and for ground-state energy as in Theorem~\ref{thm:ground}.

Establishing a bridge to the Lov\'asz-theta body $\th(G)$ and the Lov\'asz number, as described in Proposition \ref{pro:main_results}, is especially interesting from a numerical perspective since $\vartheta(G,w)$ can be computed via semi-definite programming (SDP). Using this as a tool for outer bounds is advantageous since the size of this SDP scales with $n$, {i.e.,} the number of vertices of $G$, and by this ultimately with the number of strings in a set $\mathcal{S}$. This has to be put into contrast with the problem of computing ground-state energies, i.e. hyperplanes of $J(\mathcal{S})$, whose size scales with the Hilbert space dimension $2^n$, where $n$ is the length of the strings.

Furthermore, for a given set of observables $\{S_i\}$, there is a complete hierarchy of SDP relaxations for $\sos{\{S_i\}}$ as explained in Appendix~\ref{sec:numerical}. Two practical see-saw methods for the numerical estimation are also supplemented in Appendix~\ref{sec:numerical}.  Finally, the estimation can be improved once we know the purity of the state as discussed in Appendix~\ref{sec:further}.

\section{Self-adjoint unitary representation for anti-commutativity}\label{sec:saura}
As introduced in Sec.~\ref{sec:results}, for a given graph $G$, a set $\{S_i\}$ of self-adjoint unitaries is said to be a self-adjoint unitary representation for anti-commutativity (SAURA) of $G$ if $i\sim j$ implies $\{S_i, S_j\} = 0$.
An essential task is to characterize $\sos{\mathcal{S}_a(G)}$ and $\SOS{\mathcal{S}_a(G)}$. 
\begin{lemma}~(Cf.~\cite{hastings2022optimizing})
For any graph $G$, it holds that
\begin{equation}\label{eq:sumofsquare2}
    \alpha(G) \le \sos{\mathcal{S}_a(G)} \le \vartheta(G).
\end{equation}
\end{lemma}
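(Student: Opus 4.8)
The plan is to establish the two inequalities in \eqref{eq:sumofsquare2} separately, treating the lower bound $\alpha(G) \le \sos{\saura(G)}$ as the easy direction and the upper bound $\sos{\saura(G)} \le \vartheta(G)$ as the substantive one. For the lower bound, I would exhibit an explicit SAURA together with a state that attains the weighted sum of squares equal to $\alpha(G)$. Concretely, fix a maximum independent set $I \subseteq V(G)$ with $|I| = \alpha(G)$. One can always realize a SAURA in which the operators $S_i$ for $i \in I$ mutually commute (since no two vertices of $I$ are adjacent, property (iii) imposes no anti-commutation among them), for instance by taking a faithful representation of the quasi-Clifford algebra via Pauli strings and then choosing a common $+1$ eigenstate $\rho = |\psi\rangle\langle\psi|$ of all $S_i$, $i \in I$. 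Then $\mean{S_i}_\rho^2 = 1$ for each $i \in I$, so $\sum_i \mean{S_i}_\rho^2 \ge |I| = \alpha(G)$, giving the bound; the weighted version is identical with $\rho$ chosen to align with the largest-weight independent set.

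For the upper bound, the key step is to pass from an arbitrary SAURA $\{S_i\}$ on a Hilbert space $\mathcal{H}$ and an arbitrary state $\rho$ to a feasible point of the Lovász-theta body $\th(G)$ realizing at least $\sum_i \mean{S_i}_\rho^2$. The natural construction: purify or simply work with $|\psi\rangle$ if $\rho$ is pure (a standard convexity/extreme-point argument reduces to pure states, since $\SOS{\cdot}$'s relevant supremum is attained at extreme points and the squared-expectation functional, while not linear, is maximized over the convex set $\JNR{\saura(G)}$ at a boundary point coming from a pure state — more carefully, one lifts to vectors directly). Set $|v_i\rangle := S_i|\psi\rangle$, which is a unit vector because $S_i^2 = \id$, and observe that for $i \sim j$ the anti-commutation $\{S_i, S_j\} = 0$ forces $\langle v_i | v_j \rangle = \langle \psi | S_i S_j | \psi \rangle = -\langle \psi | S_j S_i | \psi\rangle = -\langle v_i|v_j\rangle$, hence $\langle v_i | v_j\rangle = 0$. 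Taking $|\phi_0\rangle := |\psi\rangle$, we get $|\langle \phi_0 | v_i\rangle|^2 = |\langle \psi | S_i |\psi\rangle|^2 = \mean{S_i}_\psi^2$, so the tuple $(\mean{S_i}_\psi^2)$ lies in $\th(G)$ (after the routine embedding of the relevant subspace into $\mathbf{R}^n$). Therefore $\SOS{\saura(G)} \subseteq \th(G)$, and applying the linear functional $w$ gives $\sos{\saura(G)} \le \vartheta(G)$.

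The main obstacle I anticipate is the reduction to pure states in the upper bound — strictly one is maximizing the \emph{non-linear} functional $\sum_i w_i \mean{S_i}_\rho^2$ over states $\rho$, and one must argue the supremum is approached by (or attained at) pure states, or else carry out the orthogonal-vector construction at the level of a GNS-type vector representation of the $C^*$-algebra generated by $\{S_i\}$ rather than at the level of $\rho$ directly. The clean way is: the $C^*$-algebra of a SAURA admits a representation where any state is a vector state after enlarging the Hilbert space (GNS), and the orthogonality argument $\langle v_i | v_j\rangle = 0$ only used the algebraic relation (iii) together with self-adjointness of $S_i$, not purity of the original $\rho$; so one obtains vectors $|v_i\rangle = S_i|\Omega\rangle$ in the GNS space with $|\langle\Omega|v_i\rangle|^2 = \mean{S_i}_\rho^2$. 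This sidesteps the extreme-point subtlety entirely. The remaining bookkeeping — that these finitely many vectors can be taken to live in $\mathbf{R}^n$ as in the stated definition of $\th(G)$ — is standard (Gram matrix realification, as in the usual equivalence of the several definitions of the Lovász theta body), and I would relegate it to a remark.
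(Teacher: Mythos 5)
Your lower bound matches the paper's reasoning: pass to a SAUR (every graph has one, e.g.\ the standard or edge SAUR, and a SAUR is in particular a SAURA), take a common eigenstate of the pairwise-commuting operators indexed by a maximum independent set $I$, and note $\mean{S_i}_\rho^2=1$ for $i\in I$. One caveat: a common \emph{$+1$} eigenstate need not exist (a product of some of the $S_i$, $i\in I$, could equal $-\id$), but any common eigenstate gives $\mean{S_i}_\rho^2=1$, which is all you need. Your worry about reducing to pure states in the upper bound is also easily dispatched without GNS: $\sum_i w_i\mean{S_i}_\rho^2$ is convex in $\rho$, so its supremum over the compact convex set of states is attained at a pure state.

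The upper bound as written, however, has a genuine error at the key step. With $|v_i\rangle:=S_i|\psi\rangle$ one has $\langle\psi|S_jS_i|\psi\rangle=\langle v_j|v_i\rangle=\overline{\langle v_i|v_j\rangle}$, \emph{not} $\langle v_i|v_j\rangle$; anti-commutation therefore yields only $\langle v_i|v_j\rangle=-\overline{\langle v_i|v_j\rangle}$, i.e.\ the inner product is purely imaginary, not zero. (Concretely: $S_1=X$, $S_2=Y$, $|\psi\rangle=|0\rangle$ gives $\langle v_1|v_2\rangle=i$.) So $\{|v_i\rangle\}$ is not an orthogonality representation, and the ``routine realification'' you relegate to a remark is exactly where the missing content lies: you must pass to the \emph{real part} of the Gram matrix, $\tfrac12\langle\psi|\{S_i,S_j\}|\psi\rangle=\mathrm{Re}\,\langle v_i|v_j\rangle$, which does vanish on edges, remains positive semidefinite, and leaves the handle overlaps unchanged since $\mathrm{Re}\,\langle\psi|v_i\rangle=\mean{S_i}_\psi$ is already real; realizing this real PSD matrix by real vectors then places $(\mean{S_i}_\psi^2)$ in $\th(G)$. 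Once repaired, your argument is essentially the paper's: in Theorem~\ref{thm:sara} the quantity $\sum_i w_i\mean{A_i}^2$ is bounded by the largest eigenvalue of the matrix with entries $\sqrt{w_iw_j}\,\mean{\{A_i,A_j\}}/2$ --- precisely the symmetrized Gram matrix above --- which is then compared with the maximum-eigenvalue characterization of $\vartheta(G,w)$. Working with the anticommutator from the outset is what sidesteps the imaginary parts your version overlooks.
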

In fact, the relation $\sos{\{S_i\}} \ge \alpha(G)$ for any SAUR $\{S_i\}$ of a graph $G$  can be directly  proven by choosing the state $\rho$ as a common eigenstate of a set of the commuting observables from $S_i$, where $i$ ranges over an maximal independent set of $G$.
For any graph $G$, it holds that $\alpha(G) \le \vartheta(G)$. In the case that $G$ is a perfect graph, we have $\alpha(G) = \vartheta(G)$~\cite{knuth1994sandwich}. Hence, the upper bounds in Eq.~\eqref{eq:sumofsquare2} is tight for perfect graphs. Especially, in the case that $G$ is a clique graph with $n$ vertices where any two vertices are connected, the relation $\sos{\mathcal{S}_a(G)} = 1$ holds.  
For a given set of observables, numerical estimation of $\sos{\{S_i\}}$ can in principle provide a more exact bound. Different numerical methods to estimate $\sos{\{S_i\}}$ are presented in Appendix \ref{sec:numerical}.
Here we take the graph-theoretic approach, and provide an exact characterization of $\sos{\mathcal{S}_a(G)}$ and  $\SOS{\mathcal{S}_a(G)}$.

Graph-theoretic approach has been explored extensively in quantum contextuality~\cite{budroni2022kochen}, where the orthogonality representation (OR) of a graph is used~\cite{cabello2014graph}. For a given graph with $n$ vertices, a set of unit vectors $\{|v_i\rangle\}_{i=1}^n$ is said to be an OR of the graph $G$ if $\langle v_i|v_j\rangle = 0$ when $i\sim j$. 
An OR of a given graph implies an SAURA of the same graph.
\begin{lemma}\label{lm:vec2mat}
    For a given set of $d$-dimensional vectors $\{\langle v_i| = (v_{i1}, \ldots, v_{id})\}$, denote $S_i = \sum_k v_{ik} A_k$ where $\{A_k\}_{k=1}^d$ is a set of anti-commuting self-adjoint unitaries, it holds that $\{S_i, S_j^\dagger\}/2 = \langle v_i|v_j\rangle \id$. 
\end{lemma}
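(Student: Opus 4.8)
The claim is a direct computation, so the plan is to expand $\{S_i, S_j^\dagger\}$ using the definition $S_i = \sum_k v_{ik} A_k$ and exploit the anti-commutation relations of the $A_k$. First I would note that since the $\{A_k\}$ are self-adjoint, $S_j^\dagger = \sum_\ell \overline{v_{j\ell}} A_\ell$, so
\begin{align}
    \{S_i, S_j^\dagger\} = \sum_{k,\ell} v_{ik}\overline{v_{j\ell}} \{A_k, A_\ell\}.
\end{align}
The defining property of a family of anti-commuting self-adjoint unitaries is $\{A_k, A_\ell\} = 2\delta_{k\ell}\id$: for $k\neq \ell$ this is anti-commutativity, and for $k=\ell$ it is $A_k^2 = \id$. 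Substituting, the double sum collapses to $\sum_k v_{ik}\overline{v_{jk}}\,\id$, which is exactly $2\langle v_i|v_j\rangle\id$ with the convention $\langle v_i|v_j\rangle = \sum_k \overline{v_{jk}} v_{ik}$ (or the conjugate, depending on the inner-product convention in use; since the lemma is later applied mostly to real vectors this is immaterial). Dividing by $2$ gives the stated identity $\{S_i, S_j^\dagger\}/2 = \langle v_i|v_j\rangle\id$.

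There is essentially no obstacle here beyond bookkeeping: the only subtlety worth flagging is the existence of an arbitrarily large family of anti-commuting self-adjoint unitaries $\{A_k\}_{k=1}^d$, which is needed for the construction to make sense for every dimension $d$; this is standard (one can take generators of a Clifford algebra, realized e.g. by Pauli strings of length $\lceil d/2\rceil$), and the excerpt already invokes such families freely. I would also remark, as an immediate corollary relevant to the sequel, that taking $i=j$ gives $S_i^2 = \langle v_i|v_i\rangle\id = \id$ for unit vectors, and taking $i\sim j$ gives $\{S_i,S_j\}=0$, so that $\{S_i\}$ is indeed a SAURA of $G$ whenever $\{|v_i\rangle\}$ is an orthogonality representation of $G$ — which is precisely the assertion ``an OR of a given graph implies an SAURA of the same graph'' that this lemma is designed to support.
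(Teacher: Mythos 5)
Your proof is correct and is exactly the intended argument: the paper treats this lemma as an immediate bilinear expansion using $\{A_k,A_\ell\}=2\delta_{k\ell}\id$ and does not spell it out further (it is invoked in this form inside the proof of Theorem~\ref{thm:thetavartheta2}). Your handling of the conjugates and the remark on the existence of large anti-commuting families are both consistent with the paper's conventions, so nothing is missing.
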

We remark that there are indeed SAURA of a graph that cannot be constructed from an OR. For example, the operators $\{X\id, Y\id, Z\id, ZZ\}$ and their anti-commutativity graph. Indeed, those four operators are linearly independent, but there is no operator which anti-commutes with all of them at the same time. Hence, there is no anti-commuting basis for those four operators. In this sense, there are more SAURAs than ORs of a given graph. 
Nevertheless, such a construction of SAURAs from ORs is the key step in the following proof.
\begin{theorem}\label{thm:thetavartheta}
    $\SOS{\saura(G)} = \th(G)$ for any graph $G$.
\end{theorem}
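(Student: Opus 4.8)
The plan is to prove the set equality $\SOS{\saura(G)} = \th(G)$ by establishing the two inclusions separately, using Lemma~\ref{lm:vec2mat} for one direction and a spectral/compression argument for the other.

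\textbf{Inclusion $\th(G) \subseteq \SOS{\saura(G)}$.} I would start from the definition of the theta body: a point in $\th(G)$ is of the form $(|\langle\phi_0|v_i\rangle|^2)_i$ where $\{|v_i\rangle\}$ is an orthonormality representation of $G$ (unit vectors, orthogonal along edges) living in some $\mathbf{R}^N$, and $|\phi_0\rangle$ is a fixed unit vector. Given such an OR, invoke Lemma~\ref{lm:vec2mat}: fix a family $\{A_k\}_{k=1}^N$ of anti-commuting self-adjoint unitaries on some Hilbert space $\mathcal{H}$ (these exist, e.g.\ from a Clifford/Jordan--Wigner construction), and set $S_i = \sum_k v_{ik} A_k$. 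Since each $|v_i\rangle$ is real and a unit vector, $S_i$ is self-adjoint and $S_i^2 = \langle v_i|v_i\rangle \id = \id$; and $i\sim j$ gives $\langle v_i|v_j\rangle = 0$, hence $\{S_i,S_j\} = 0$. So $\{S_i\}$ is a SAURA of $G$. It remains to realize the expectation values $\langle\phi_0|v_i\rangle$: choose a vector state $|\psi\rangle \in \mathcal{H}$ such that $\langle\psi|A_k|\psi\rangle = \langle\phi_0|e_k\rangle$ for each standard basis vector $e_k$. Such a state exists because the vector $(\langle A_k\rangle_\psi)_k$ of expectations of anti-commuting unitaries can be any point in the unit ball of $\mathbf{R}^N$ (this is the standard fact that $\|\sum_k c_k A_k\| = \|c\|_2$, so by convex duality the joint numerical range of $(A_1,\dots,A_N)$ is exactly the unit ball — actually the sphere for a pure state, which is enough since $|\phi_0\rangle$ is a unit vector). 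Then $\langle S_i\rangle_\psi = \sum_k v_{ik}\langle A_k\rangle_\psi = \langle v_i|\phi_0\rangle$, so $\langle S_i\rangle_\psi^2 = |\langle\phi_0|v_i\rangle|^2$, giving the desired point in $\SOS{\saura(G)}$.

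\textbf{Inclusion $\SOS{\saura(G)} \subseteq \th(G)$.} Take any SAURA $\{S_i\}$ on $\mathcal{H}$ and any state $\rho$; I want to show $(\langle S_i\rangle_\rho^2)_i \in \th(G)$. The idea is to produce an orthonormality representation of $G$ and a vector $|\phi_0\rangle$ reproducing these numbers. Purify or work with $\rho = |\psi\rangle\langle\psi|$ by dilation (expectations only depend on the reduced state, so a pure state on a larger space suffices). Consider the vectors $|w_i\rangle := S_i|\psi\rangle \in \mathcal{H}$. Each is a unit vector since $S_i$ is a self-adjoint unitary; and for $i\sim j$, $\langle w_i|w_j\rangle = \langle\psi|S_iS_j|\psi\rangle = -\langle\psi|S_jS_i|\psi\rangle = -\langle w_j|w_i\rangle = -\overline{\langle w_i|w_j\rangle}$, so $\langle w_i|w_j\rangle$ is purely imaginary — not quite orthogonal. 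To fix this, I would symmetrize over the complex conjugation / use the trick of replacing $|w_i\rangle$ by $S_i|\psi\rangle$ together with a reflection, or better: pass to the real Hilbert space obtained by $|w_i\rangle \mapsto (\mathrm{Re}\,|w_i\rangle, \mathrm{Im}\,|w_i\rangle)$ style doubling, under which a purely imaginary inner product still fails to vanish. The cleaner route is to define the Gram-type data directly: set $|\phi_0\rangle = |\psi\rangle$ and note $\langle\phi_0|w_i\rangle = \langle\psi|S_i|\psi\rangle = \langle S_i\rangle_\rho$, which is real (as $S_i$ is self-adjoint). So I only need unit vectors with $\langle v_i|v_j\rangle = 0$ on edges and $\langle\phi_0|v_i\rangle = \langle S_i\rangle_\rho$. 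Take $|v_i\rangle := \frac{1}{2}(|w_i\rangle + \overline{|w_i\rangle})$ where the conjugate is taken in the basis where all $S_i$ can be chosen real symmetric? That's not generally possible. Instead: embed into a real Hilbert space $\mathcal H_{\mathbf R}$ by viewing $\mathbb C$-Hilbert space as a real one of twice the dimension; real part of a purely imaginary number is $0$, so $\mathrm{Re}\langle w_i|w_j\rangle = 0$ on edges, and $\mathrm{Re}\langle\psi|w_i\rangle = \langle S_i\rangle_\rho$. Since the real inner product on $\mathcal H_{\mathbf R}$ is exactly $\mathrm{Re}\langle\cdot|\cdot\rangle$ and the $|w_i\rangle$ remain unit vectors there, the family $\{|w_i\rangle\}$ in $\mathcal H_{\mathbf R}$ together with $|\phi_0\rangle = |\psi\rangle$ is precisely an orthonormality representation witnessing $(\langle S_i\rangle_\rho^2)_i \in \th(G)$. (If $\mathcal H_{\mathbf R}$ is infinite-dimensional one restricts to the finite-dimensional span of the $|w_i\rangle$ and $|\psi\rangle$; the theta body is defined by finite-dimensional ORs, or one notes the value is unchanged.)

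\textbf{Main obstacle.} The genuinely delicate point is the second inclusion and specifically the reality issue: $S_i|\psi\rangle$ gives unit vectors whose pairwise inner products along edges are purely imaginary rather than zero, and one must argue carefully that passing to the realification (or an equivalent symmetrization) both kills the imaginary part along edges \emph{and} preserves normalization and the value $\langle S_i\rangle_\rho$ of the overlap with $|\phi_0\rangle$ — and that the resulting real-vector configuration is a legitimate OR in the sense of the $\th(G)$ definition given in Proposition~\ref{pro:main_results}. I would also double check the first inclusion's claim that the joint numerical range of the anti-commuting unitaries $\{A_k\}$ contains the whole unit sphere of $\mathbf R^N$ for pure states — this follows from $\|\sum c_k A_k\|=\|c\|_2$ plus the fact that $\sum c_k A_k$ is itself a scaled self-adjoint unitary with eigenvalue $\pm\|c\|_2$, so a suitable eigenvector attains $\langle \sum c_k A_k\rangle = \|c\|_2$; varying $c$ over the sphere and using that these operators' expectation map is continuous and odd gives the full sphere. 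Everything else is routine.
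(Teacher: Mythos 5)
Your proof is correct, and while the first half follows the paper, the second half takes a genuinely different route. For $\th(G)\subseteq\SOS{\saura(G)}$ you do what the paper does: build the SAURA from the optimal orthonormal representation via Lemma~\ref{lm:vec2mat} and choose a state whose expectation vector on the anti-commuting generators $\{A_k\}$ equals the handle $|\phi_0\rangle$; the paper realizes this either with the mixed state $(\id+\sum_k u_kA_k)/d$ or with a top-eigenvector-plus-Cauchy--Schwarz argument, and your version (top eigenvector of $\sum_k u_kA_k$ together with saturation of $\sum_k\mean{A_k}^2\le 1$) is the same idea. For $\SOS{\saura(G)}\subseteq\th(G)$ the paper instead reduces the set equality to equality of the weighted support functionals $\sos{\saura(G),w}=\vartheta(G,w)$ and gets the upper bound from the spectral estimate $\sum_i w_i\mean{A_i}^2\le\lambda_{\max}(\mathcal{A})$ with $\mathcal{A}_{ij}=\sqrt{w_iw_j}\mean{\{A_i,A_j\}}/2$, combined with Knuth's maximum-eigenvalue characterization of $\vartheta$; this part is deferred to the proof of Theorem~\ref{thm:sarasaura}. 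You instead exhibit, for each point of $\SOS{\saura(G)}$, an explicit orthonormal representation: $|w_i\rangle=S_i|\psi\rangle$ with handle $|\psi\rangle$, observing that $\langle w_i|w_j\rangle$ is purely imaginary on edges and passing to the realification, where $\mathrm{Re}\langle\cdot|\cdot\rangle$ annihilates exactly those inner products while preserving norms and the real overlaps $\langle\psi|S_i|\psi\rangle=\mean{S_i}_\rho$. This is more elementary and self-contained, and it gives membership in $\th(G)$ point by point rather than only equality of support functions over the non-negative orthant, so it does not need to presuppose convexity of $\SOS{\saura(G)}$ or invoke the later SARA theorem. What the paper's spectral route buys in exchange is generality: it applies verbatim to SARAs with $A_i^2\le\id$, whereas your vectors $S_i|\psi\rangle$ require $S_i^2=\id$ to be unit vectors. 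Your handling of the remaining details---purifying $\rho$, restricting to the finite-dimensional real span, and rotating $|\psi\rangle$ onto the fixed $|\phi_0\rangle$---is sound.
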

The proof is provided in Appendix~\ref{sec:proofs}, and the same for the other results following here.
Thus, we have a new physical explanation of the graph parameter $\vartheta(G)$.
\begin{figure}[htpb]
  \centering
  \includegraphics[width=0.18\textwidth]{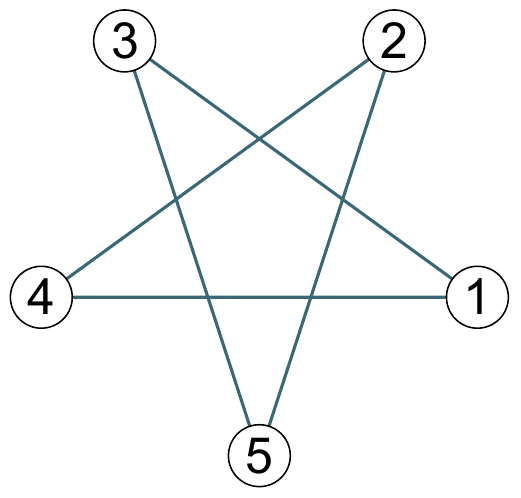}
\caption{Pentagon graph $C_5$.}
\label{fig:c5}
\end{figure}
\begin{example}\label{ex:pentagon}
    For the pentagon graph, its Lov\'{a}sz number can be achieved with the state $\langle u| = (1,0,0)$ and the following orthogonal representation $\{|v_i\rangle\}_{i=1}^5$, where
    \begin{align}
        \langle v_i| = \left(\tau, \tau'\cos(2\pi i/5), \tau'\sin(2\pi i/5) \right), 
    \end{align}
    and
    $\tau = (1/5)^{1/4}, \tau' = \sqrt{1-\tau^2}$.

    Hence, by choosing
    \begin{align} 
        &S_i = \sum_k v_{ik} \sigma_k, \forall i=1, \ldots, 5,\\
        &\rho = (\id + \sigma_1)/2,
    \end{align}
    where $\sigma_1 = X, \sigma_2 = Y, \sigma_3 = Z$ are Pauli matrices, then
    we have $\sum_i \mean{S_i}_\rho^2 = \sqrt{5}$. 
\end{example}

There is another concise proof of Theorem~\ref{thm:thetavartheta}. In fact, $\mean{S_i}_{\rho} = x\langle v_i|u\rangle$ if we take $\rho = (\id + x\sum_k u_k A_k)/d$, where $|x| \le 1$.  We note that $\rho \succeq 0$ is a legal state since the maximal eigenvalue of $\sum_k u_k A_k$ is no more than $1$. Thus, $\sum_iw_i\mean{S_i}_{\rho}^2 = x^2\sum_i w_i(\langle v_i|u\rangle)^2 = x^2\vartheta(G,w)$ by definition. By taking $x=\pm 1$, we complete the proof. 

Further results along the same line are provided in Appendix~\ref{sec:further}.

\section{Self-adjoint unitary representations}\label{sec:saur}
The set of operators, where any pair either commutes or anti-commutes, plays an important role as exemplified by the Pauli strings. The commutation and anti-commutation relations of such a set $\{S_i\}$ can be encoded into a so-called frustration graph $G$~\cite{chapman2020characterization,chapman2023unified}, where $i\sim j$ if $\{S_i, S_j\} = 0$, and $i\not\sim j$ if $[S_i, S_j]=0$.
By checking extensive examples, it is conjectured in Ref.~\cite{de2022uncertainty} that 
\begin{equation}\label{eq:oldConjecture}
    \sos{\{S_i\}} = \alpha(G).
\end{equation}
Whether Eq.~\eqref{eq:oldConjecture} can be violated is also an open question in Ref.~\cite{hastings2022optimizing}.
Conversely, for a given graph $G$, we can consider its representation by a set $\{S_i\}$ of self-adjoint unitaries, in the sense that $\{S_i,S_j\} = 0$ if $i\sim j$ and $[S_i,S_j] = 0$ if $i\not\sim j$. This representation is called self-adjoint unitary representation (SAUR)~\cite{samoilenko2012spectral}. By taking the graph-theoretic approach instead of starting from a special set, we denote $\beta(G,w) = \sos{\saur(G),w}$, where $\saur(G)$ is the set of all SAURs of $G$. The conjecture in Eq.~\eqref{eq:oldConjecture} is equivalent to $\beta(G) = \alpha(G)$. In Ref.~\cite{hastings2022optimizing}, no such an example is known that $\beta(G) > \alpha(G)$.
To continue, we first introduce the standard SAUR of a given graph, which is defined deductively. The standard SAUR can help us to reduce the complexity of considerations, since we only need to focus on the standard SAUR to obtain $\beta(G)$ as we will see later.
\begin{definition}
    For a given graph $G$ and one of its edges $(i_0,j_0)$, other vertices except $i_0$ and $j_0$ can be devided into four groups $V_0, V_1, V_2, V_3$, such that 
    \begin{itemize}
        \item $i\not\sim i_0$ and $i\not\sim j_0$ for any $i\in V_0$;
        \item $i\not\sim i_0$ and $i\sim j_0$ for any $i\in V_1$;
        \item $i\sim i_0$ and $i\sim j_0$ for any $i\in V_2$;
        \item $i\sim i_0$ and $i\not\sim j_0$ for any $i\in V_3$.
    \end{itemize}
    The subgraph $G'$ of $G$ with vertices in $\cup_{i=0}^4 V_i$ is said to be a Pauli-$(i_0,j_0)$-induced subgraph of $G$ if
    \begin{itemize}
        \item when $i\in V_{k_1}, j\in V_{k_2}$ where $k_1\neq k_2 \in \{1,2,3\}$, we have $i\sim j$ (or $i\not\sim j$) in $G'$ if and only if $i\not\sim j$ (or $i\sim j$) in $G$; 
        \item otherwise, $i\sim j$ in $G'$ if and only if $i\sim j$ in $G$.
    \end{itemize}
\end{definition}
\begin{definition}
    For a given graph $G$ and one of its edges $(i_0,j_0)$, denote $G'$ the Pauli-$(i_0,j_0)$-induced subgraph of $G$, from a standard SAUR $\{S'_i\}$ of $G'$, we call the following SAUR a standard one of $G$:
    \begin{equation}
      \left(\bigcup_{k=0}^3 \{\sigma_k\otimes S'_i\}_{i\in V_k}\right) \cup \{X_{i_0}\otimes \id, Z_{j_0}\otimes \id\}.
    \end{equation}
    If $G$ has no edge, we assign $\id$ or $1$ to all its vertices.
\end{definition}
If we take the pentagon $C_5$ in Fig.~\ref{fig:c5} as the original graph, and $(1,3)$ as the edge, then the Pauli-$(1,3)$-induced subgraph $G'$ is a triangle. Continually, the Pauli-$(4,5)$-induced subgraph $G''$ of $G'$ is just the vertex $2$. Hence, the  standard SAURs $\{S''_i\}, \{S'_i\}$ and $\{S_i\}$ of $G'', G'$ and $G$, respectively, are 
\begin{align}
    &S''_2 = 1, S'_2 = Y, S'_4 = X, S'_5 = Z,\\
    &S_1 = X\id, S_2 = \id Y, S_3 = Z\id, S_4 = Z X, S_5 = X Z, 
\end{align}
where we have omitted the symbol of tensor product, and $X\id$ means $X\otimes \id$ etc.
\begin{theorem}[Cf.~\cite{samoilenko2012spectral}]\label{thm:saur}
    For a given graph $G$, one SAUR $\{S_i\}$ and one standard SAUR $\{\bar{S}_i\}$ of $G$, there is a unitary $U$ such that $US_iU^\dagger = \bar{S}_i \otimes D_i$, where $\{D_i\}$ is a set of commuting self-adjoint unitaries.
\end{theorem}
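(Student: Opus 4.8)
The plan is to prove Theorem~\ref{thm:saur} by induction on the number of edges of $G$, mirroring the deductive definition of the standard SAUR. The base case is a graph with no edges: then all $S_i$ commute and are self-adjoint unitaries, the standard SAUR assigns $\id$ (or the scalar $1$) to every vertex, and one takes $D_i = S_i$ with $U = \id$. For the inductive step, fix an edge $(i_0,j_0)$ of $G$ and let $G'$ be the Pauli-$(i_0,j_0)$-induced subgraph, which has strictly fewer edges (it lives on the vertex set $\bigcup_{k=0}^3 V_k$, losing at least $i_0$, $j_0$ and the edge between them).

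The heart of the argument is a structural normal form for the pair $S_{i_0}, S_{j_0}$ and the way every other $S_i$ interacts with it. Since $S_{i_0}^2 = S_{j_0}^2 = \id$, $S_{i_0} = S_{i_0}^*$, $S_{j_0} = S_{j_0}^*$ and $\{S_{i_0},S_{j_0}\}=0$, the unital $*$-algebra they generate is a copy of $M_2(\mathbb{C})$; I would invoke the classification of representations of this algebra (equivalently, write $S_{i_0}, S_{j_0}$ as $X, Z$ up to unitary conjugation with a multiplicity space) to get a unitary $U_0$ with $U_0 S_{i_0} U_0^\dagger = X \otimes \id$ and $U_0 S_{j_0} U_0^\dagger = Z \otimes \id$. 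The key computational step is then to track what happens to each remaining $S_i$: since each $S_i$ is a self-adjoint unitary that either commutes or anticommutes with each of $S_{i_0}, S_{j_0}$, after conjugation $U_0 S_i U_0^\dagger$ must be of the form $\sigma_{k(i)} \otimes S'_i$ where $\sigma_{k(i)} \in \{\id, X, Y, Z\}$ is dictated by which of $V_0,V_1,V_2,V_3$ the vertex $i$ lies in (commuting with both $\to \id$; commuting with $S_{i_0}$ only $\to X$; anticommuting with both $\to Y$; anticommuting with $S_{j_0}$ only $\to Z$), and $S'_i$ is a self-adjoint unitary on the multiplicity space. This is exactly the point where the twisting in the definition of the Pauli-induced subgraph appears: for $i \in V_{k_1}$, $j \in V_{k_2}$ with distinct $k_1,k_2 \in \{1,2,3\}$, the Pauli prefactors $\sigma_{k_1}, \sigma_{k_2}$ anticommute, so $[S_i, S_j] = 0$ in $G$ forces $\{S'_i, S'_j\} = 0$ on the multiplicity space and vice versa — hence $\{S'_i\}$ is a SAUR precisely of $G'$, not of the naive induced subgraph. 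I would then apply the induction hypothesis to $\{S'_i\}$ and the standard SAUR $\{\bar S'_i\}$ of $G'$ to get a unitary $V$ with $V S'_i V^\dagger = \bar S'_i \otimes D_i$ for commuting self-adjoint unitaries $D_i$, and assemble $U = (\id_2 \otimes V)U_0$, checking that $\{\id_2 \otimes D_i\} \cup \{\id, \id\}$ (for $i_0, j_0$) is the required commuting family.

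The main obstacle I anticipate is the careful bookkeeping in the normal-form step: verifying that after conjugating by $U_0$, each $S_i$ really does decompose as $\sigma_{k(i)} \otimes (\text{something self-adjoint unitary})$ with the prefactor forced and the tail on the multiplicity factor. The cleanest way is to work inside the commutant of $\{S_{i_0}, S_{j_0}\} \cong M_2$, which is $\id_2 \otimes \mathcal{B}(\mathcal{H}_{\mathrm{mult}})$; then $\sigma_{k(i)}^\dagger (U_0 S_i U_0^\dagger)$ commutes with both $X\otimes\id$ and $Z\otimes\id$, hence lies in this commutant, giving $S'_i$, and one reads off $(S'_i)^2 = \id$, $(S'_i)^* = S'_i$ from the corresponding properties of $S_i$ and $\sigma_{k(i)}$. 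The commutation/anticommutation analysis on the $S'_i$ is then a short $\sigma_{k_1}\sigma_{k_2} = \pm \sigma_{k_2}\sigma_{k_1}$ case check, and should be organized as a single table rather than spelled out line by line. One should also note explicitly that the decomposition is into a genuine tensor product only after passing to the $C^*$-algebra / representation picture — on the level of a single finite-dimensional Hilbert space this is the statement that $\mathcal{H} \cong \mathbb{C}^2 \otimes \mathcal{H}_{\mathrm{mult}}$ via $U_0$, which is where the hypothesis that we are dealing with honest self-adjoint unitaries (rather than the SAR relaxation) is used.
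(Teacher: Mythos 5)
The paper does not actually prove Theorem~\ref{thm:saur}: it is imported from the literature (the ``Cf.~[samoilenko2012spectral]'' tag, and the related structure theory of quasi-Clifford algebras in Gastineau-Hills' work), and no argument for it appears in Appendix~\ref{sec:proofs}. So your proposal has to be judged on its own. The core of it is right, and it is the standard argument: two anticommuting self-adjoint unitaries generate a copy of $M_2(\mathbb{C})$, so $\mathcal{H}\cong\mathbb{C}^2\otimes\mathcal{H}_{\mathrm{mult}}$ with $S_{i_0}\mapsto X\otimes\id$, $S_{j_0}\mapsto Z\otimes\id$; every other $S_i$ is forced into the form $\sigma_{k(i)}\otimes S_i'$ by the commutant argument you describe; and the sign flip $\sigma_{k_1}\sigma_{k_2}=-\sigma_{k_2}\sigma_{k_1}$ for distinct $k_1,k_2\in\{1,2,3\}$ is exactly what turns the induced subgraph into the Pauli-$(i_0,j_0)$-induced subgraph, so $\{S_i'\}$ is a SAUR of $G'$ and the recursion closes. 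Two slips, though. First, and this is the one genuine flaw: your induction measure is wrong. You induct on the number of edges and assert that $G'$ has strictly fewer edges than $G$, but the twisting in the definition of $G'$ converts non-edges between distinct $V_{k_1},V_{k_2}$ ($k_1\neq k_2\in\{1,2,3\}$) into edges, and this can outweigh the edges lost at $i_0$ and $j_0$. For instance, with $|V_1|=|V_2|=|V_3|=2$ and no internal edges in $G$, one has $|E(G)|=9$ but $|E(G')|=12$. The recursion still terminates, because $G'$ always has exactly two fewer \emph{vertices} than $G$; induct on $|V(G)|$ with edgeless graphs as the base case and the argument goes through unchanged. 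Second, a bookkeeping slip in your prefactor table: ``commuting with $S_{i_0}$ only'' and ``anticommuting with $S_{j_0}$ only'' describe the same class ($V_1$), so your list covers only three of the four patterns; the $Z$ prefactor belongs to $V_3$, i.e.\ the operators anticommuting with $S_{i_0}$ and commuting with $S_{j_0}$. The correct assignment is forced in any case, so this is cosmetic, but the table as written is inconsistent.

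One further point worth making explicit when you write this up: the uniqueness half of the statement --- that the tail factors $D_i$ can be taken to be \emph{commuting} self-adjoint unitaries --- comes out of your induction for free only because $D_{i_0}=D_{j_0}=\id$ at each level and the remaining $D_i$ are inherited from the inductive hypothesis on $G'$; it is worth one sentence to note that the family assembled across all levels of the recursion is still mutually commuting, since they all live on the same final multiplicity factor reached when the recursion bottoms out.
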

Though different sequences of edges lead to different standard SAUR, Theorem~\ref{thm:saur} implies that they are in the same dimension and connected by unitaries.
The standard SAUR is succinct, however, it loses the information of the symmetry in the graph. To reflect the structure of the graph, we introduce the edge SAUR.
\begin{definition}
    For a given directed graph $\hat{G}$ with $n$ vertices and the edge set $E$, the set of self-adjoint operators $\{A_i\}_{i=1}^n$ with $A_i = \otimes_{e\in E} O_{e,i}$ is called the edge SAUR of $\hat{G}$ , where $O_{e,i} = X$ if $i$ is the start of $e$, and $O_{e,i} = Z$ if $i$ is the end of $e$, otherwise, $O_{e,i} = \id$.
\end{definition}
For an undirected $G$, we can lift the concept of an edge SAUR by simply choosing directions. The resulting representation is indeed unique. Switching between different choices of directions results into permuting $X$'s and $Z$'s. Since every edge corresponds to a single qubit in a edge SAUR, this operation corresponds to a unitary transformation. 
For different SAURs of the same graph $G$, their joint expectation values are related.
\begin{theorem}\label{thm:standard}
    For a given graph $G$, $\sos{\{S_i\}} = \sos{\{\bar{S}_i\}}$ where $\{S_i\}$ is a SAUR of $G$ and $\{\bar{S}_i\}$ is a standard one.
\end{theorem}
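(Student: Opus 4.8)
The plan is to reduce the claim to the structural description of SAURs provided by Theorem~\ref{thm:saur}. That result gives a unitary $U$ with $US_iU^\dagger=\bar S_i\otimes D_i$, where $\{D_i\}$ is a family of commuting self-adjoint unitaries. Since simultaneously conjugating every operator by $U$ and replacing $\rho$ by $U\rho U^\dagger$ preserves each expectation value $\mean{S_i}_\rho$, and hence the functional $\sum_i\mean{S_i}_\rho^2$, we obtain $\sos{\{S_i\}}=\sos{\{\bar S_i\otimes D_i\}}$ for free. So the whole problem becomes showing that tensoring the standard SAUR $\{\bar S_i\}$ with a commuting family $\{D_i\}$ changes nothing: $\sos{\{\bar S_i\otimes D_i\}}=\sos{\{\bar S_i\}}$. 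I would prove the two inequalities separately.

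For $\sos{\{\bar S_i\otimes D_i\}}\ge\sos{\{\bar S_i\}}$ (``an ancilla cannot hurt''), I would feed in product states. Pick any state $\rho$ for $\{\bar S_i\}$ and, since the $D_i$ commute, a joint eigenvector $|d\rangle$ with $D_i|d\rangle=d_i|d\rangle$, $d_i\in\{\pm1\}$. On $\rho\otimes|d\rangle\langle d|$ one has $\mean{\bar S_i\otimes D_i}=d_i\mean{\bar S_i}_\rho$, so $\sum_i\mean{\bar S_i\otimes D_i}^2=\sum_i\mean{\bar S_i}_\rho^2$; taking the supremum over $\rho$ gives the bound.

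The harder direction is $\sos{\{\bar S_i\otimes D_i\}}\le\sos{\{\bar S_i\}}$ (``commuting dressing cannot help''), and this is the step I expect to be the only non-routine one. Here I would use the joint spectral decomposition of the commuting self-adjoint unitaries $\{D_i\}$: their joint spectrum is a finite subset of $\{\pm1\}^n$, giving finitely many orthogonal projections $P_\lambda$ (indexed by $\lambda=(\lambda_i)$) with $D_i=\sum_\lambda\lambda_i P_\lambda$ and $\sum_\lambda P_\lambda=\id$. For an arbitrary state $\sigma$ on the tensor-product space, set $p_\lambda=\tr[\sigma(\id\otimes P_\lambda)]$ and let $\rho_\lambda$ be the state on the $\{\bar S_i\}$-factor obtained by projecting $\sigma$ onto its $\lambda$-block, renormalising, and tracing out the ancilla. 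Using $\bar S_i\otimes P_\lambda=(\id\otimes P_\lambda)(\bar S_i\otimes\id)(\id\otimes P_\lambda)$, so that the off-diagonal blocks of $\sigma$ do not contribute, one computes $\mean{\bar S_i\otimes D_i}_\sigma=\sum_\lambda\lambda_i\,p_\lambda\,\mean{\bar S_i}_{\rho_\lambda}$. Then $|\mean{\bar S_i\otimes D_i}_\sigma|\le\sum_\lambda p_\lambda|\mean{\bar S_i}_{\rho_\lambda}|$ and, since $\sum_\lambda p_\lambda=1$, convexity of $t\mapsto t^2$ yields $\mean{\bar S_i\otimes D_i}_\sigma^2\le\sum_\lambda p_\lambda\mean{\bar S_i}_{\rho_\lambda}^2$. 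Summing over $i$ and exchanging the order of summation, $\sum_i\mean{\bar S_i\otimes D_i}_\sigma^2\le\sum_\lambda p_\lambda\big(\sum_i\mean{\bar S_i}_{\rho_\lambda}^2\big)\le\sum_\lambda p_\lambda\,\sos{\{\bar S_i\}}=\sos{\{\bar S_i\}}$, and the supremum over $\sigma$ closes the argument.

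The only genuine subtlety is verifying the block identity $\mean{\bar S_i\otimes D_i}_\sigma=\sum_\lambda\lambda_i p_\lambda\mean{\bar S_i}_{\rho_\lambda}$; the convexity step is then elementary, and finiteness of the joint spectrum of $\{D_i\}$ is what makes that step a bona fide finite convex combination even when the ambient Hilbert space is infinite-dimensional. Finally, I note that the argument goes through verbatim with any non-negative weight vector $w$ in place of $(1,1,\dots)$, so that in fact $\sos{\{S_i\},w}=\sos{\{\bar S_i\},w}$, which is the form used in later sections when identifying $\beta(G,w)$.
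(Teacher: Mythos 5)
Your proposal is correct and follows essentially the same route as the paper's proof: reduce via Theorem~\ref{thm:saur} to $\{\bar S_i\otimes D_i\}$, use the simultaneous diagonalizability of the commuting $D_i$ to kill the cross terms so that each $\mean{\bar S_i\otimes D_i}$ becomes a convex combination of sign-flipped expectations of $\bar S_i$ in states on the first factor, and then apply Jensen's inequality for $t\mapsto t^2$; the lower bound via a product state with a joint eigenvector of the $D_i$ is also the paper's argument. The only cosmetic difference is that you treat a general mixed state directly through the conditional states $\rho_\lambda$, whereas the paper first restricts to pure states by convexity of the objective and then expands in the joint eigenbasis.
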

Hence, we have $\beta(G) = \sos{\{\bar{S}_i\}}$ where $\{\bar{S}_i\}$ is any standard SAUR of $G$.
A similar result of Theorem~\ref{thm:standard} for the weighted version can be proven in the same way. Consequently, $\SOS{\saur(G)} = \SOS{\{{S}_i\}}$ where $\{{S}_i\}$ is any SAUR of $G$. However, $\JNR{\saur(G)}$ might be strictly larger than $\JNR{\{S_i\}}$ due to the sign of each expectation value. To recover the whole set of $\JNR{\saur(G)}$, it is enough to consider the complete SAUR as defined below.
\begin{definition}
    For a given graph $G$ with $n$ vertices and its standard SAUR $\{\bar{S}_i\}$, the set $\{S_i\}$ consisting of
    \begin{equation}
        S_i = \bar{S}_i \otimes \Big(\bigotimes_{k=1}^n Z^{\delta_{ik}} \Big)
    \end{equation}
    is said to be a complete SAUR, where $\delta_{ik}=1$ if $i=k$, otherwise, $\delta_{ik}=0$.
\end{definition}
As we can see, the auxiliary part $\{\otimes_{k=1}^n Z^{\delta_{ik}}\}$ can recover all signs in $\{-1, 1\}^{\otimes n}$. Effectively, this provides a cover of $\JNR{\saur(G)}$ with multiple copies of $\JNR{\{\bar{S}_i\}}$. Generally, $\JNR{\{\bar{S}_i\}}$ can be neither point or reflection symmetric, but $\JNR{\saur(G)}$ has both point and reflection symmetries.

%%%%%%%%%%%%%%%%%%%%%%%%%%%%%%%%%%%%%%%%%%%%%%%%%%%%%%%
\section{The beta parameter}\label{beta}
For the characterization of the graph parameter $\beta(G)$, we consider some properties of it in this section. First, we show that $\beta(G)$ is indeed different from $\alpha(G)$, although it can occur that $\alpha(G)=\beta(G)$ for certain graphs.

\begin{corollary}\label{cor:gh10}
  For the graph $G_{10}$ and $C_5$ in Fig.~\ref{fig:gh10}, 
  \begin{equation}
      \beta(G_{10}) = \alpha(G_{10}) = \beta(C_5) = \alpha(C_5) = 2.
  \end{equation}
\end{corollary}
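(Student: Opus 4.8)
\emph{Proof plan.} The strategy is to squeeze $\beta$ between $\alpha(G)$ and $2$, using that both graphs have independence number $2$.

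\textbf{Lower bounds.} For every graph $G$ one has $\beta(G)\ge\alpha(G)$: if $I$ is a maximum independent set, then the operators $\{S_i\}_{i\in I}$ of any SAUR pairwise commute, hence share a $+1$‑eigenvector $|\psi\rangle$, and $\sum_i\langle S_i\rangle_{|\psi\rangle}^2\ge|I|=\alpha(G)$ (this is the remark after the Lemma in Sec.~\ref{sec:saura}). Reading off Fig.~\ref{fig:gh10}, each of $C_5$ and $G_{10}$ has a non‑edge and no independent set of size $3$ (equivalently, the complements are triangle‑free), so $\alpha(C_5)=\alpha(G_{10})=2$ and therefore $\beta(C_5),\beta(G_{10})\ge 2$.

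\textbf{Upper bound for $C_5$ (the crux).} By Theorem~\ref{thm:standard} it suffices to bound $q$ on the explicit standard SAUR $S_1=X\id,\ S_2=\id Y,\ S_3=Z\id,\ S_4=ZX,\ S_5=XZ$ on $\mathbb{C}^2\otimes\mathbb{C}^2$ written down after Theorem~\ref{thm:saur}. Since $q$ is a supremum of a convex function over the compact set of states, it is attained at a pure $\rho$; put $c_i=\langle S_i\rangle_\rho$, $s=\sum_i c_i^2=\sum_i\langle S_i\rangle_\rho^2$, and $A=\sum_i c_iS_i$. Using $S_i^2=\id$ and $S_iS_j=-S_jS_i$ on the edges of $C_5$, the off‑diagonal contributions over edges cancel, leaving
\[
 A^2 = s\,\id + 2B, \qquad B:=\sum_{\{i,j\}\notin E(C_5)} c_ic_j\,S_iS_j .
\]
The five operators $S_iS_j$ attached to the non‑edges are traceless self‑adjoint unitaries that pairwise anticommute (a short check on the list above; equivalently, $S_iS_j$ and $S_kS_l$ anticommute iff the number of edges of $C_5$ between $\{i,j\}$ and $\{k,l\}$ is odd, which holds for every pair of non‑edges of $C_5$). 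Hence $B^2=\bigl(\sum_{\{i,j\}\notin E(C_5)}(c_ic_j)^2\bigr)\id$, so $B$ has spectrum $\{\pm\|B\|\}$; as $A^2\succeq 0$ its smallest eigenvalue $s-2\|B\|$ is $\ge 0$, i.e.\ $\|B\|\le s/2$. Finally $A$ is traceless (every vertex of $C_5$ has a neighbour), so $s=\langle A\rangle_\rho\le\lambda_{\max}(A)$, and $\lambda_{\max}(A)^2\le\lambda_{\max}(A^2)=s+2\|B\|\le 2s$; thus $s^2\le 2s$, i.e.\ $s\le 2$. With the lower bound this gives $\beta(C_5)=\alpha(C_5)=2$.

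\textbf{Upper bound for $G_{10}$, and the main obstacle.} The same scheme applies: on a standard SAUR of $G_{10}$ one again gets $A^2=s\,\id+2B$ with $B=\sum_{\{i,j\}\notin E(G_{10})}c_ic_jS_iS_j$ self‑adjoint, hence $s^2\le s+2\lambda_{\max}(B)$. The delicate point—both here and for $C_5$—is exactly the estimate $\lambda_{\max}(B)\le s/2$: the naive bound $\lambda_{\max}(B)\le\sum_{\{i,j\}\notin E}|c_ic_j|\le\tfrac{\Delta}{2}s$ (with $\Delta$ the maximum degree of the complement) only yields $\beta\le 1+\Delta$, which is far from $2$. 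What actually works is the collapse $B^2\propto\id$, which rests on the non‑edge products $S_iS_j$ forming a mutually anticommuting family—possible precisely because the complement is triangle‑free ($\alpha=2$) and the representation is large enough to host such a family, and it is exactly this structural feature that fails for the graphs elsewhere in the paper with $\beta(G)>\alpha(G)$. For $G_{10}$ the upper bound is then closed in one of three ways: (i) verifying directly from Fig.~\ref{fig:gh10} that its non‑edge products still pairwise anticommute, so that $\lambda_{\max}(B)=\sqrt{\sum(c_ic_j)^2}\le s/2$ as above; or (ii) if $G_{10}$ is perfect (or otherwise has $\vartheta(G_{10})=2$), invoking $\beta(G_{10})\le\vartheta(G_{10})$ from Proposition~\ref{pro:main_results}, since then $\alpha(G_{10})=\vartheta(G_{10})=2$; or (iii) reducing $G_{10}$ to $C_5$ together with smaller pieces through the graph‑addition and induced‑subgraph results of Sec.~\ref{beta}. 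Any of these, combined with the lower bound, gives $\beta(G_{10})=\alpha(G_{10})=2$, completing the corollary.
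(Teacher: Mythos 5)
Your lower bounds are fine, and your upper bound for $C_5$ is correct and genuinely different from the paper's route. The paper proves the corollary by reducing everything to $\beta(G_{10})\le 2$ and then verifying by a direct computation over all pure two-qubit states that $\sum_{i=1}^{10}\langle S_i\rangle_\psi^2=2$ identically for the standard SAUR of $G_{10}$; $\beta(C_5)\le2$ falls out only as a by-product. Your argument --- edge terms cancel in $A^2=s\,\id+2B$, the five non-edge products (which are, up to sign, $\id\otimes X$, $\id\otimes Z$, $X\otimes Y$, $Y\otimes Y$, $Z\otimes Y$) pairwise anticommute so $B^2=\|B\|^2\id$, and positivity of $A^2$ forces $\|B\|\le s/2$, whence $s^2\le\lambda_{\max}(A^2)\le 2s$ --- is self-contained and explains structurally why the answer is $2$.

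The $G_{10}$ half, however, has a genuine gap: none of your three closures is valid as stated. The standard SAUR of $G_{10}$ consists of the ten two-qubit Pauli strings complementary to the five listed above; each vertex then has exactly three non-neighbours, so $G_{10}$ has $15$ non-edges. (i) Fifteen Hermitian unitaries on $\mathbb{C}^4$ cannot pairwise anticommute (at most $2n+1=5$ can on $n=2$ qubits), so "verifying that the non-edge products still pairwise anticommute'' must fail. (ii) $G_{10}$ contains $C_5$ as an induced subgraph (Fig.~\ref{fig:gh10}), hence is not perfect, and by monotonicity of $\vartheta$ under induced subgraphs $\vartheta(G_{10})\ge\vartheta(C_5)=\sqrt5>2$, so the Lov\'asz bound cannot close the gap. (iii) $G_{10}$ is $6$-regular, connected and co-connected, so it is neither a join, nor a disjoint union, nor $C_5[K_2]$ (which is $5$-regular), and induced-subgraph monotonicity of $\beta$ only yields \emph{lower} bounds. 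The good news is that your framework can be completed with one extra observation: since $\alpha(G_{10})=2$ and the $S_i$ are genuine Pauli strings, the product of any commuting pair $S_iS_j$ must equal $\pm$ one of the five \emph{excluded} Paulis (otherwise $\{S_i,S_j,\pm S_iS_j\}$ would be an independent triple in $G_{10}$). Hence $B$ lies in the real span of the five pairwise anticommuting operators $\{\id X,\id Z,XY,YY,ZY\}$, so $B$ is traceless with $B^2=\|B\|^2\id$ even though the individual non-edge products no longer anticommute, and the rest of your $C_5$ argument goes through verbatim to give $s\le2$. As written, though, the proposal does not establish $\beta(G_{10})\le2$; either this repair or the paper's explicit computation is still needed.
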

\begin{figure}[htpb]
  \centering
  \includegraphics[width=0.275\textwidth]{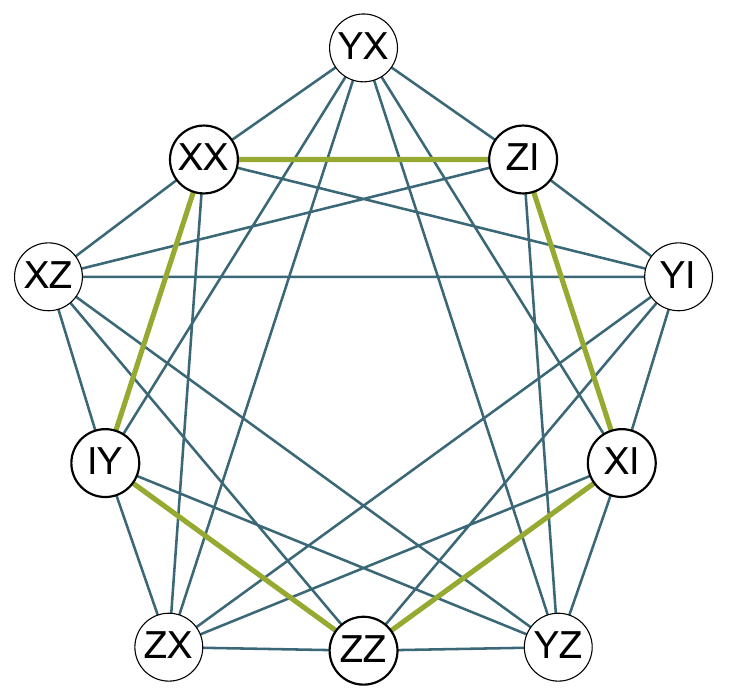}
  \caption{A graph $G_{10}$ with $10$ vertices and one of its standard realizations, which contains a pentagon $C_5$ as an induced subgraph (green, thick).}
  \label{fig:gh10}
\end{figure}
\begin{figure}[ht]
    \centering
    \includegraphics[width=0.55\linewidth]{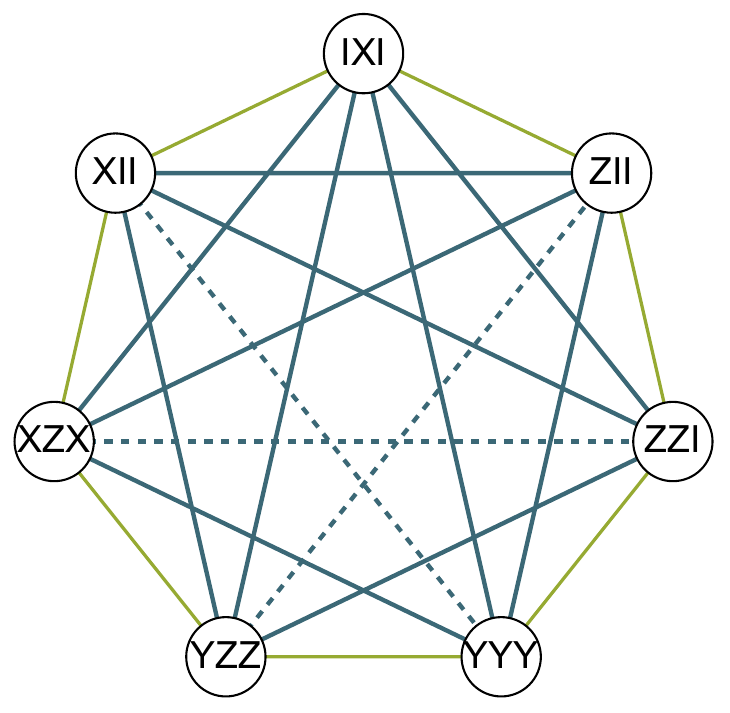}
    \caption{Graph of the counterexample and its SAUR. A pair of observables anti-commute when the corresponding vertices are connected by an edge in the graph of an anti-heptagon $\bar{C}_7$ (all blue lines) and commute when connected by an edge in the graph of a heptagon $C_7$ (green, thin). The subgraph of $\bar{C}_7$ (blue, not including the dotted lines ) is named as $G_7$.}
    \label{fig:antihepta}
\end{figure}

However, the conjecture in Eq.~\eqref{eq:oldConjecture} is not true generally. A simple counterexample is the anti-heptagon and its standard SAUR as shown in Fig.~\ref{fig:antihepta}. To be more explicit, the seven operators are
\begin{align}
  S_1 &= ZZ\id, 
  &S_2 &= Z\id\id, 
  &S_3 &= \id X\id,
  &S_4 &= X\id\id, \nonumber\\ 
  S_5 &= XZX,
  &S_6 &= YZZ,
  &S_7 &= YYY.&
\end{align}
It can be checked by hand that these operators form one standard SAUR of $\bar{C}_7$, and $\alpha(\bar{C}_7)=2$.

Let $\rho=|v\rangle\langle v|$ be the state that corresponds to the largest eigenvector of $\sum_i S_i$. With a little bit more hand work, or by using a computer algebra system, one can check that
\begin{align}
    \sum \langle S_i\rangle_\rho^2 = (9+4\sqrt{2} ) /7 \approx 2.09384 > 2=\alpha(\bar{C}_7),
\end{align}
which disproves the conjecture in Eq.~\eqref{eq:oldConjecture}. 
Besides, $\vartheta(\bar{C}_7) = 1 + 1/\cos(\pi/7) \approx 2.10992$.
Thus, in general, $\alpha(G) \le \beta(G) \le \vartheta(G)$ are indeed three different graph parameters.

Then we consider properties of the beta number under graph operations, e.g., the addition, graph products such as lexicographic product and XOR product. Those properties are helpful for the estimation of the beta number for large graphs, which might be impossible with the numerical methods.
\paragraph{Additions and products of graphs}
\begin{theorem}\label{thm:additionc}
  For a given graph $G$ which can be divided into two subgraphs $G_1, G_2$ where all vertices in $G_1$ are connected with all vertices in $G_2$, then $\beta(G) = \max \{\beta(G_1), \beta(G_2)\}$.
\end{theorem}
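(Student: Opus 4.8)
The plan is to prove the two inequalities $\beta(G) \le \max\{\beta(G_1),\beta(G_2)\}$ and $\beta(G) \ge \max\{\beta(G_1),\beta(G_2)\}$ separately, exploiting the fact that when every vertex of $G_1$ is adjacent to every vertex of $G_2$, the operators indexed by $G_1$ must pairwise anti-commute with those indexed by $G_2$. First I would observe that $G$ is what is sometimes called the \emph{join} $G_1 + G_2$ (complete bipartite linkage between the parts), so that in any SAUR $\{S_i\}$ of $G$ we may write the index set as a disjoint union $V_1 \sqcup V_2$ with $\{S_i, S_j\} = 0$ whenever $i \in V_1$, $j \in V_2$.

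For the lower bound, the natural move is to take a SAUR $\{T_i\}_{i\in V_1}$ of $G_1$ (say a standard one) witnessing $\beta(G_1)$ together with a single extra self-adjoint unitary $A$ that anti-commutes with all of them — e.g.\ pad everything with one extra qubit, replacing $T_i$ by $T_i \otimes Z$ and setting the operators for $V_2$ to be $\id \otimes X$ suitably dressed, or more cleanly, use the edge/standard SAUR construction from Section~\ref{sec:saur} which already provides such global anti-commuting flexibility. One then checks that this yields a genuine SAUR of the whole $G$ whose restriction to $V_1$ reproduces the extremal configuration for $G_1$, and since the extra factors are unitary they do not change the attainable squared expectations on the $V_1$ coordinates; taking the state that is optimal for $G_1$ and tensoring with anything gives $\sos{\{S_i\}} \ge \beta(G_1)$. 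By symmetry $\beta(G) \ge \beta(G_2)$, hence $\beta(G) \ge \max\{\beta(G_1),\beta(G_2)\}$. This direction should be routine; the only care needed is to verify that the commutation relations \emph{within} $V_2$ and \emph{across} the cut are all respected by the padded construction, which is exactly what the standard-SAUR recursion is designed to guarantee.

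For the upper bound — which I expect to be the main obstacle — I would argue that for \emph{any} SAUR $\{S_i\}_{i\in V_1\sqcup V_2}$ of $G$ and any state $\rho$, we have $\sum_{i\in V_1}\EXP{S_i}^2 + \sum_{j\in V_2}\EXP{S_j}^2 \le \max\{\beta(G_1),\beta(G_2)\}$. The key structural input is that the algebra generated by $\{S_i : i\in V_1\}$ and the algebra generated by $\{S_j : j\in V_2\}$ anti-commute elementwise; in such a situation one expects a tensor-like decomposition of the representation (this is precisely the clique/Clifford mechanism underlying Theorem~\ref{thm:saur} and the quasi-Clifford structure mentioned in Section~\ref{sec:results}). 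Concretely, I would try to show that $\sum_{i\in V_1}\EXP{S_i}^2 \le \beta(G_1)\cdot\|P\|$ and $\sum_{j\in V_2}\EXP{S_j}^2 \le \beta(G_2)\cdot(1-\|P\|)$ for some effective "weight" $\|P\|\in[0,1]$ of the state that measures how much it lives on each factor — using that $\SOS{\saur(G_1)} = \SOS{\{S_i\}_{V_1}}$ (Theorem~\ref{thm:standard}) so the $V_1$-part of the joint numerical range is already the full $\SOS{\saur(G_1)}$. Summing gives a convex combination bounded by the max. The delicate point is making the "weight split" rigorous: because the two subalgebras anti-commute rather than commute, one cannot literally tensor-factor the Hilbert space, but one can look at the $\ast$-algebra generated by the squares and products $S_iS_j$ (which \emph{do} commute appropriately) and decompose the representation over its center, reducing to the case where one of the two blocks acts irreducibly. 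I would handle this by passing to the standard SAUR of $G$ via Theorem~\ref{thm:standard}, where the block structure is explicit, and then track how the $V_1$- and $V_2$-coordinates sit inside the resulting tensor factors. If a clean weight-split proves elusive, a fallback is to bound directly: pick the part, say $V_1$, carrying the larger share of $\sum\EXP{S_i}^2$, and show that one can "rotate away" the $V_2$ operators on the support of the optimal state without decreasing the $V_1$ contribution, reducing to a SAUR of $G_1$ alone.

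Putting the two bounds together yields $\beta(G) = \max\{\beta(G_1),\beta(G_2)\}$, and I would remark in passing that the same argument works verbatim in the weighted setting, giving $\beta(G,w) = \max\{\beta(G_1,w|_{G_1}),\beta(G_2,w|_{G_2})\}$, which is presumably what Theorem~\ref{thm:additiond} records.
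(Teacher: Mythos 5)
Your lower bound is fine: it amounts to monotonicity of $\beta$ under passing to induced subgraphs, since $G_1$ and $G_2$ are induced subgraphs of the join $G$ (restricting any SAUR of $G$ to $V_1$ gives a SAUR of $G_1$, and conversely the standard SAUR construction extends an optimal configuration). The upper bound, however, which you correctly identify as the crux, is left as a plan rather than an argument, and the plan points in an unpromising direction. You propose to tensor-decompose the representation and split the state into weights on "each factor," conceding that "the delicate point is making the weight split rigorous" and offering only a vague fallback; neither the standard-SAUR block-tracking nor the "rotate away $V_2$" step is substantiated. Note also that the two generated \emph{algebras} do not anti-commute elementwise (a product $S_iS_{i'}$ with $i,i'\in V_1$ \emph{commutes} with every $S_j$, $j\in V_2$), so the premise of your structural input is off; only the linear spans of the generators anti-commute across the cut, and that is precisely all the proof needs.

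The missing idea is an operator-level Cauchy--Schwarz argument, with no decomposition of the Hilbert space or of the state. For any state, $\sum_{i\in V_1}\langle A_i\rangle^2=\langle M_1\rangle^2$ where $M_1=\sum_i x_iA_i$ and $x$ is the normalized vector of expectations (Cauchy--Schwarz is saturated by that choice); similarly $M_2=\sum_j y_jB_j$ for $V_2$. Then
\begin{align*}
\langle M_1\rangle^2+\langle M_2\rangle^2
&=\max_{t_1^2+t_2^2=1}\langle t_1M_1+t_2M_2\rangle^2
\le\max_{t}\bigl\langle (t_1M_1+t_2M_2)^2\bigr\rangle\\
&=\max_{t}\bigl(t_1^2\langle M_1^2\rangle+t_2^2\langle M_2^2\rangle\bigr)
=\max\bigl\{\langle M_1^2\rangle,\langle M_2^2\rangle\bigr\},
\end{align*}
where the cross term vanishes because $\{M_1,M_2\}=0$ (every $A_i$ anti-commutes with every $B_j$). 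Finally $\langle M_k^2\rangle\le\|M_k\|^2=\max_\sigma\langle M_k\rangle_\sigma^2\le\beta(G_k)$, since the restricted families are SAURs of $G_1$ and $G_2$. This closes the gap. Incidentally, the weight split you were after, $\sum_{V_1}\langle A_i\rangle^2/\beta(G_1)+\sum_{V_2}\langle B_j\rangle^2/\beta(G_2)\le 1$, is in fact true --- it is equivalent to $\mathrm{diag}(\langle M_1^2\rangle,\langle M_2^2\rangle)\succeq vv^{T}$ with $v=(\langle M_1\rangle,\langle M_2\rangle)$, which is what the display encodes --- but proving it still requires the cross-term cancellation above, not a tensor factorization. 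One last correction: Theorem~\ref{thm:additiond} records the disjoint-union case, where $\beta$ is \emph{additive}, not a weighted version of the join.
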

\begin{corollary}
  If we add one new vertex to a graph $G$ and result in a graph $G'$ in the way that the new vertex is connected to all vertices in $G$, then $\beta(G') = \beta(G)$.
\end{corollary}
\begin{theorem}\label{thm:additiond}
  For a given graph $G$ which can be divided into two subgraphs $G_1, G_2$ where any vertex in $G_1$ is disconnected with any vertex in $G_2$, $\beta(G) = \beta(G_1) + \beta(G_2)$.
\end{theorem}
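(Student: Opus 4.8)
\textbf{Proof proposal for Theorem~\ref{thm:additiond} ($\beta(G) = \beta(G_1) + \beta(G_2)$ when $G_1$ and $G_2$ are completely disconnected from each other).}

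The plan is to prove the two inequalities $\beta(G) \geq \beta(G_1) + \beta(G_2)$ and $\beta(G) \leq \beta(G_1) + \beta(G_2)$ separately, exploiting the fact that the absence of edges between the parts means a SAUR of $G$ decomposes naturally into commuting pieces. For the lower bound, take optimal SAURs $\{S_i^{(1)}\}$ on $\mathcal{H}_1$ and $\{S_j^{(2)}\}$ on $\mathcal{H}_2$, with states $\rho_1, \rho_2$ attaining $\sos{\{S_i^{(1)}\}} = \beta(G_1)$ and $\sos{\{S_j^{(2)}\}} = \beta(G_2)$ respectively. Since every vertex of $G_1$ is non-adjacent to every vertex of $G_2$, the operators $\{S_i^{(1)}\otimes\id\} \cup \{\id\otimes S_j^{(2)}\}$ form a valid SAUR of $G$ (properties (i),(ii) are inherited tensorwise, (iii) holds within each block, and (iv) holds across blocks since tensor factors on different spaces commute). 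Evaluating on $\rho_1\otimes\rho_2$ gives $\sum_i \mean{S_i^{(1)}}_{\rho_1}^2 + \sum_j \mean{S_j^{(2)}}_{\rho_2}^2 = \beta(G_1) + \beta(G_2)$, hence $\beta(G) \geq \beta(G_1) + \beta(G_2)$.

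For the upper bound, I would invoke the structure theorem for SAURs. By Theorem~\ref{thm:standard} it suffices to bound $\sos{\{\bar S_i\}}$ for a standard SAUR of $G$, or alternatively to argue directly that any SAUR $\{S_i\}$ of $G$, when restricted to the vertices of $G_1$, is a SAUR of $G_1$ (likewise for $G_2$), so $\sos{\{S_i\}_{i\in G_1}} \leq \beta(G_1)$ and $\sos{\{S_i\}_{i\in G_2}} \leq \beta(G_2)$. The essential point is that for \emph{any} single state $\rho$ on the joint space, $\sum_{i\in G_1}\mean{S_i}_\rho^2 \leq \sos{\{S_i\}_{i\in G_1}} \leq \beta(G_1)$ and similarly for $G_2$; adding these gives $\sum_{i}\mean{S_i}_\rho^2 \leq \beta(G_1) + \beta(G_2)$ uniformly over $\rho$ and over the choice of SAUR. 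Taking the supremum yields $\beta(G) \leq \beta(G_1) + \beta(G_2)$.

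The upper bound argument as sketched is essentially immediate because $\sos{\cdot}$ of a sub-collection of observables is just the restriction of the linear functional to a coordinate subspace, so no cross terms appear; the subtlety, and the step I would be most careful about, is confirming that the restriction of a SAUR of $G$ to $V(G_1)$ is genuinely a SAUR of $G_1$ (this uses that $G_1$ is an induced subgraph, which holds by hypothesis since $G$ is the disjoint-type union) and that the supremum defining $\beta$ is attained, or at least approached, so that the lower-bound construction is legitimate — attainment follows from compactness of $\mathcal{D}(\mathcal{H})$ once a finite-dimensional representation is fixed, and finite-dimensionality of the relevant standard SAURs is guaranteed by Theorem~\ref{thm:saur}. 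I expect no real obstacle here; the result is the additive counterpart of Theorem~\ref{thm:additionc}, with the roles of ``all edges present'' and ``no edges present'' interchanged, and correspondingly $\max$ replaced by sum.
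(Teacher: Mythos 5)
Your proof is correct and follows essentially the same route as the paper: the lower bound via the tensor-product SAUR $\{S_i^{(1)}\otimes\id\}\cup\{\id\otimes S_j^{(2)}\}$ evaluated on $\rho_1\otimes\rho_2$, and the upper bound by splitting the sum over the two vertex sets and bounding each part by $\beta(G_1)$ and $\beta(G_2)$ respectively, since the restriction of a SAUR of $G$ to either part is a SAUR of that induced subgraph. Your extra care about attainment of the supremum is a reasonable addition but not a departure from the paper's argument.
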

For two given graphs $G_1$ and $G_2$, we denote by $G_1[G_2]$ their lexicographic product, whose vertex set is the Cartesian product of the graphs' vertex sets and then $(i_1,j_1) \sim (i_2, j_2)$ if $i_1\sim i_2$, or $j_1\sim j_2$ when $i_1 = i_2$.
\begin{theorem}\label{thm:lexi}
  For two given graphs $G_1$ and $G_2$, $\beta(G)$ is multiplicative under the lexicographic product: $\beta(G_1[G_2]) = \beta(G_1) \beta(G_2)$.
\end{theorem}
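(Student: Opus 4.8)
The plan is to read off both inequalities from one explicitly constructed self-adjoint unitary representation of $G_1[G_2]$. Write $n_1=|V(G_1)|$ and let $\{\bar S_i\}_i$, $\{\bar T_j\}_j$ be standard SAURs of $G_1$ and $G_2$ on finite-dimensional $\mathcal H_1$, $\mathcal H_2$. On $\mathcal H:=\mathcal H_1\otimes\mathcal H_2^{\otimes n_1}$ put $R_{(i,j)}:=\bar S_i\otimes(\id^{\otimes(i-1)}\otimes\bar T_j\otimes\id^{\otimes(n_1-i)})$, so $\bar S_i$ sits on $\mathcal H_1$ and $\bar T_j$ on the $i$-th copy of $\mathcal H_2$. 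First I would check that $\{R_{(i,j)}\}$ is a SAUR of $G_1[G_2]$: for $i\neq i'$ the $\mathcal H_2$-factors are supported on different tensor legs and commute, so $R_{(i,j)}$ and $R_{(i',j')}$ commute or anticommute according to $\bar S_i,\bar S_{i'}$, i.e.\ according to whether $i\sim i'$ in $G_1$; for $i=i'$ the $\mathcal H_1$-factors multiply to $\id$ and the sign is dictated by $\bar T_j,\bar T_{j'}$, i.e.\ by whether $j\sim j'$ in $G_2$ — which is exactly the edge rule of the lexicographic product. By Theorem~\ref{thm:standard} this already yields $\sos{\{R_{(i,j)}\}}=\beta(G_1[G_2])$, so everything reduces to computing $q$ for this single representation.

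For the lower bound I would evaluate on a product state $\rho_1\otimes\rho_2^{\otimes n_1}$, where $\rho_1$ and $\rho_2$ attain $\sum_i\langle\bar S_i\rangle_{\rho_1}^2=\beta(G_1)$ and $\sum_j\langle\bar T_j\rangle_{\rho_2}^2=\beta(G_2)$ (the suprema are attained because the standard SAURs are finite-dimensional). Then $\langle R_{(i,j)}\rangle=\langle\bar S_i\rangle_{\rho_1}\langle\bar T_j\rangle_{\rho_2}$ factorizes, so $\sum_{i,j}\langle R_{(i,j)}\rangle^2=\beta(G_1)\beta(G_2)$ and hence $\sos{\{R_{(i,j)}\}}\ge\beta(G_1)\beta(G_2)$.

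The upper bound carries the content. By convexity of $\rho\mapsto\sum_{i,j}\langle R_{(i,j)}\rangle_\rho^2$ it is enough to bound $\sum_{i,j}c_{ij}^2$ where $c_{ij}:=\langle v|R_{(i,j)}|v\rangle$ for a pure state $|v\rangle$; set $y_i:=\sum_j c_{ij}^2$. The key move is to ``contract'' block $i$: for $y_i>0$ define the self-adjoint operator $\hat B_i:=y_i^{-1/2}\sum_j c_{ij}\bar T_j$ on $\mathcal H_2$ (and $\hat B_i:=0$ if $y_i=0$). By Cauchy--Schwarz, $|\langle u|\hat B_i|u\rangle|^2\le\sum_j\langle u|\bar T_j|u\rangle^2\le\sos{\{\bar T_j\}}=\beta(G_2)$ for every unit $|u\rangle$, so $\hat B_i^2\le\beta(G_2)\,\id$. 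Now set $\hat C_i:=\bar S_i\otimes(\id^{\otimes(i-1)}\otimes\beta(G_2)^{-1/2}\hat B_i\otimes\id^{\otimes(n_1-i)})$ on $\mathcal H$: each $\hat C_i$ is self-adjoint with $\hat C_i^2\le\id$, and since the $\hat B_i$-factors live on distinct tensor legs, $\hat C_i$ and $\hat C_{i'}$ commute or anticommute exactly as $\bar S_i,\bar S_{i'}$ do — so $\{\hat C_i\}$ is a SAR of $G_1$. A one-line computation gives $\langle v|\hat C_i|v\rangle^2=y_i/\beta(G_2)$, whence
\[
  \sum_{i,j}c_{ij}^2=\sum_i y_i=\beta(G_2)\sum_i\langle v|\hat C_i|v\rangle^2\le\beta(G_2)\,\sos{\sar(G_1)}=\beta(G_1)\beta(G_2),
\]
using $\sos{\sar(G_1)}=\sos{\saur(G_1)}=\beta(G_1)$, i.e.\ relation \textcolor{lgreen}{(b)} of Proposition~\ref{pro:main_results} (Theorem~\ref{thm:sarsaur}). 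Combined with the lower bound, $\beta(G_1[G_2])=\beta(G_1)\beta(G_2)$.

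I expect the delicate point — and the only place the proof could break — to be precisely that the contracted block produces a \emph{non-unitary} contraction, so $\{\hat C_i\}$ is only a SAR and not a SAUR of $G_1$; the argument closes exactly because $q$ agrees on these two classes. Without the equality $\cSOS{\saur(G)}=\cSOS{\sar(G)}$, the naive per-block estimate $y_i\le\beta(G_2)$ would only give the much weaker $\sum_i y_i\le n_1\beta(G_2)$. The same computation, with a weighted Cauchy--Schwarz step, should give the product-weighted statement $\beta(G_1[G_2],w^{(1)}\!\otimes w^{(2)})=\beta(G_1,w^{(1)})\beta(G_2,w^{(2)})$.
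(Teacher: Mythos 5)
Your proposal is correct and follows essentially the same route as the paper: the lower bound via the tensor-product SAUR evaluated on a product state, and the upper bound by contracting each $G_2$-block $\sum_j c_{ij}\bar T_j$ into a single sub-normalized self-adjoint operator, observing that the resulting family is a SAR (not SAUR) of $G_1$, and closing with $\sos{\sar(G_1)}=\beta(G_1)$ from Theorem~\ref{thm:sarsaur}. The only cosmetic difference is that the paper performs this contraction on an arbitrary SAUR of $G_1[G_2]$ and normalizes by the top eigenvalue $\lambda_i$ of the contracted block rather than by $\sqrt{\beta(G_2)}$, whereas you first reduce to the explicit tensor-product representation via Theorem~\ref{thm:standard}; both are valid.
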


For any large graph with decomposition into small graphs with known beta numbers through the two addition operations and lexicographic product, its exact beta number can be obtained. For example, if we take the lexicographic product of five $\bar{C}_7$'s, that is, $G = \bar{C}_7^{[~]5}$, then $\beta(G) = \beta(\bar{C}_7)^5 \approx 40.2452$. However, $\alpha(G) = \alpha(\bar{C}_7)^5 = 32$ and $\vartheta(G) = \vartheta(\bar{C}_7)^5 \approx 41.8144$ since the latter two parameters are multiplicative under the lexicographic product, too~\cite{geller1975chromatic,lovasz1979shannon,cubitt2014bounds}.
Hence, the integer parts of $\beta(G)$, $\alpha(G)$ and $\vartheta(G)$ can be all different. This answers an open question in Ref.~\cite{hastings2022optimizing} in the negative: there are indeed graphs with beta number strictly larger than the independence number, and the gap between them can be arbitrarily large.

The tensor product of systems is often relevant in quantum mechanics. Denote $G$ the anti-commutativity and commutativity graph corresponding to the tensor product of the SAURs of $G_1$ and $G_2$. We can directly verify that $G$ is the XOR product of $G_1, G_2$, that is, $(i_1,j_1) \sim (i_2, j_2)$ if and only if only one of $i_1\sim i_2$ and $j_1\sim j_2$ holds. In this case, denote $G = G_1 \times G_2$.
\begin{theorem}\label{thm:times}
For any pair of graphs $G_1$ and $G_2$,\linebreak
$\beta(G_1\times G_2) \ge \beta(G_1) \beta(G_2)$.
\end{theorem}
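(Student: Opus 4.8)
\textbf{Proof proposal for Theorem~\ref{thm:times}.}

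The plan is to prove supermultiplicativity of $\beta$ under the XOR product by explicitly tensoring together near-optimal SAURs of $G_1$ and $G_2$ and showing that the resulting operator family is a SAUR of $G_1 \times G_2$ whose squared expectation functional is at least the product of the two individual ones. First I would fix a SAUR $\{S^{(1)}_i\}$ of $G_1$ and a SAUR $\{S^{(2)}_j\}$ of $G_2$, together with states $\rho_1, \rho_2$, so that $\sum_i \langle S^{(1)}_i\rangle_{\rho_1}^2$ and $\sum_j \langle S^{(2)}_j\rangle_{\rho_2}^2$ approach $\beta(G_1)$ and $\beta(G_2)$ respectively (or, since the suprema defining $\beta$ are attained on the finite-dimensional quasi-Clifford algebra, are actually equal to them). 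The candidate operators for $G_1 \times G_2$ are $T_{(i,j)} := S^{(1)}_i \otimes S^{(2)}_j$ acting on $\mathcal{H}_1 \otimes \mathcal{H}_2$. These are manifestly self-adjoint; they square to $\id$ because each factor does; and the crucial commutation check is precisely the XOR condition: $T_{(i_1,j_1)}$ and $T_{(i_2,j_2)}$ anticommute iff exactly one of the pairs $(S^{(1)}_{i_1}, S^{(1)}_{i_2})$, $(S^{(2)}_{j_1}, S^{(2)}_{j_2})$ anticommutes, which by construction is exactly the adjacency relation $(i_1,j_1)\sim(i_2,j_2)$ in $G_1\times G_2$. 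Hence $\{T_{(i,j)}\} \in \saur(G_1\times G_2)$.

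Next I would evaluate the squared-expectation functional on the product state $\rho_1\otimes\rho_2$: since $\langle S^{(1)}_i\otimes S^{(2)}_j\rangle_{\rho_1\otimes\rho_2} = \langle S^{(1)}_i\rangle_{\rho_1}\langle S^{(2)}_j\rangle_{\rho_2}$, we get
\begin{align}
  \sum_{i,j} \langle T_{(i,j)}\rangle_{\rho_1\otimes\rho_2}^2
  = \left(\sum_i \langle S^{(1)}_i\rangle_{\rho_1}^2\right)\left(\sum_j \langle S^{(2)}_j\rangle_{\rho_2}^2\right).
\end{align}
Taking the supremum over all choices of SAURs and states on each side, the right-hand side approaches $\beta(G_1)\beta(G_2)$, while the left-hand side is by definition bounded above by $\sos{\saur(G_1\times G_2)} = \beta(G_1\times G_2)$. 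This yields the claimed inequality $\beta(G_1\times G_2) \ge \beta(G_1)\beta(G_2)$. For the weighted version one would instead use weights $w_{(i,j)} = w^{(1)}_i w^{(2)}_j$ and the same factorization goes through verbatim.

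The only subtle point — and thus the main thing to get right rather than a genuine obstacle — is the adjacency bookkeeping in the XOR product. One must be careful that $\{S^{(1)}_{i_1}\otimes S^{(2)}_{j_1}, S^{(1)}_{i_2}\otimes S^{(2)}_{j_2}\} = 0$ really does correspond to "exactly one anticommuting factor," using the elementary identity that a tensor product of two operators that each either commute or anticommute will commute iff an even number of the factor pairs anticommute. This is where the hypothesis that each $\{S^{(k)}_i\}$ consists of operators that pairwise \emph{either commute or anticommute} (built into the SAUR definition via (iii) and (iv)) is essential: without it the tensor product need not have well-defined commutation relations at all, and indeed this is exactly why the theorem is phrased for the XOR product. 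Everything else — self-adjointness, squaring to the identity, multiplicativity of expectations on product states — is routine. I would also remark, as the paper surely intends, that this is only an inequality (unlike the lexicographic case in Theorem~\ref{thm:lexi}) because not every SAUR of $G_1\times G_2$ need factor as a tensor product, so the reverse bound is not expected to hold in general.
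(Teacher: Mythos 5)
Your proposal is correct and follows essentially the same route as the paper: the paper likewise takes (standard, hence optimal by Theorem~\ref{thm:standard}) SAURs $\{A_i\}$ of $G_1$ and $\{B_j\}$ of $G_2$, observes that $\{A_i\otimes B_j\}$ is a SAUR of $G_1\times G_2$, and evaluates on a product state to get $\beta(G_1\times G_2)\ge \sos{\{A_i\otimes B_j\}}\ge \beta(G_1)\beta(G_2)$. Your extra care with the parity bookkeeping for anticommutation of tensor products is exactly the verification the paper leaves implicit.
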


\paragraph{Removal of edges}
The removal of one edge is also one basic graph operation, which can relate different graph products.
One important property shared by the independence number and the Lov\'{a}sz number is that they do not decrease under the edge-removal. However, this does not hold for the beta number. Here we take $\bar{C}_7$ and its subgraph $G_7$ (see Fig.~\ref{fig:antihepta}) as an 
example, where the aforementioned properties of the beta number play a role. 
\begin{theorem}\label{thm:eremoval}
    $\beta(G_7) = 2 < \beta(\bar{C}_7) $.
\end{theorem}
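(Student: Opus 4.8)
The plan is to sandwich $\beta(G_7)$ between matching bounds; the lower one is immediate, the upper one carries all the content. By Proposition~\ref{pro:main_results}, $\beta(G_7)\ge\alpha(G_7)$, and any side $\{i,i{+}1\}$ of the heptagon is a non-edge of $\bar{C}_7$, hence an independent pair of the subgraph $G_7$, so $\beta(G_7)\ge 2$. In fact $\alpha(G_7)=2$: the complement $\overline{G_7}$ is the $7$-cycle $C_7$ together with the deleted (long) diagonal of $\bar{C}_7$ re-inserted as a chord, and since the endpoints of a long diagonal sit at cyclic distance $3$ they share no neighbour on $C_7$, so $\overline{G_7}$ stays triangle-free and $\omega(\overline{G_7})=2$. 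What makes the theorem nontrivial is that the matching upper bound $\beta(G_7)\le 2$ is not reachable from generic estimates: edge deletion only raises the Lov\'asz number, so $\vartheta(G_7)\ge\vartheta(\bar{C}_7)=1+1/\cos(\pi/7)>2$, and $V(G_7)$ is not a union of two cliques -- which would have forced $\beta(G_7)\le 1+1=2$ -- since $\overline{G_7}\supseteq C_7$ is not bipartite.

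For the upper bound I would use Theorem~\ref{thm:standard} to replace $\beta(G_7)$ by $q(\{\bar{S}_i\})=\sup_{\|x\|=1}\|\sum_i x_i\bar{S}_i\|^2$ for a single explicit standard SAUR $\{\bar{S}_i\}$ of $G_7$, built by the deductive edge-by-edge procedure and living on at most three qubits. Expanding the square with $\bar{S}_i=\bar{S}_i^*$, $\bar{S}_i^2=\id$ and the (anti)commutation relations, for a unit vector $x\in\mathbb{R}^7$,
\begin{equation*}
    \Bigl(\sum_i x_i\bar{S}_i\Bigr)^{\!2}=\id+2\sum_{\{i,j\}\in\overline{E}} x_i x_j\,\bar{S}_i\bar{S}_j ,
\end{equation*}
where $\overline{E}$ is the set of eight non-edges of $G_7$ (the seven sides of $C_7$ and the deleted diagonal), so $\beta(G_7)=1+2\sup_{\|x\|=1}\lambda_{\max}\!\bigl(\sum_{\{i,j\}\in\overline{E}} x_i x_j\,\bar{S}_i\bar{S}_j\bigr)$, and the task is the estimate $\lambda_{\max}\!\bigl(\sum_{\{i,j\}\in\overline{E}} x_i x_j\,\bar{S}_i\bar{S}_j\bigr)\le\tfrac12$ for every unit $x$. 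For the companion case $C_5$ of Corollary~\ref{cor:gh10} this is transparent: there all products $\bar{S}_i\bar{S}_j$ over non-edges pairwise anticommute, so the displayed operator has norm $\bigl(\sum_{\{i,j\}\in\overline{E}}(x_ix_j)^2\bigr)^{1/2}$, and, with $y_i=x_i^2$, the bound collapses to a Motzkin--Straus inequality saturated precisely because $\omega(\overline{C_5})=2$. For $G_7$, triangle-freeness of $\overline{G_7}$ still forces products $\bar{S}_i\bar{S}_j,\bar{S}_j\bar{S}_k$ that share a vertex to anticommute, but some products over disjoint non-edges commute, so $M:=\sum_{\{i,j\}\in\overline{E}}x_ix_j\bar{S}_i\bar{S}_j$ has genuine cross terms in $M^2$ that must be controlled. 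I would thus either analyse this small $\overline{E}$-indexed matrix pencil directly, or -- more systematically -- certify $\beta(G_7)\le 2$ by exhibiting a dual-feasible point of value $2$ in the SDP relaxation of Appendix~\ref{sec:numerical}.

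Putting the bounds together gives $\beta(G_7)=2$; since the state exhibited for $\bar{C}_7$ forces $\beta(\bar{C}_7)\ge(9+4\sqrt{2})/7>2$, we get $\beta(G_7)=2<\beta(\bar{C}_7)$, so $\beta$ is not monotone under edge deletion, in contrast to $\alpha$ and $\vartheta$. The single genuine obstacle is the upper bound $\beta(G_7)\le 2$: $G_7$ is imperfect (it contains an induced $C_5$), is not two-clique-coverable, is not of a form handled by the addition or product rules of Sec.~\ref{beta}, and satisfies $\vartheta(G_7)>2$, so no off-the-shelf bound suffices and one must exploit the combined structure of $G_7$ directly.
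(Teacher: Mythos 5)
Your lower bound and the final comparison with $\beta(\bar{C}_7)$ are fine, but the proof has a genuine gap exactly where you locate "the single genuine obstacle": the upper bound $\beta(G_7)\le 2$ is never actually established. After correctly reducing to $q(\{\bar S_i\})=\sup_{\|x\|=1}\bigl\|\sum_i x_i\bar S_i\bigr\|^2$ and expanding the square, you arrive at the task of showing $\lambda_{\max}\bigl(\sum_{\{i,j\}\in\overline{E}}x_ix_j\bar S_i\bar S_j\bigr)\le\tfrac12$ for all unit $x$, observe that the anticommutation trick that works for $C_5$ fails here because some products over disjoint non-edges commute, and then stop: "I would thus either analyse this small matrix pencil directly, or certify \dots by exhibiting a dual-feasible point" is a plan, not an argument, and no pencil analysis or dual certificate is supplied. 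As it stands the proposal proves only $2\le\beta(G_7)\le\vartheta(G_7)$, which does not give the theorem.

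Moreover, your claim that $G_7$ "is not of a form handled by the addition or product rules of Sec.~\ref{beta}" is the step that leads you astray: the paper's proof is a two-line application of exactly those rules. One observes that $G_7$ is isomorphic to an induced subgraph of the lexicographic product $C_5[K_2]$; then monotonicity of $\beta$ under passing to induced subgraphs, multiplicativity under the lexicographic product (Theorem~\ref{thm:lexi}), $\beta(C_5)=2$ (Corollary~\ref{cor:gh10}) and $\beta(K_2)=1$ give $\beta(G_7)\le\beta(C_5[K_2])=\beta(C_5)\beta(K_2)=2$. So the missing idea is precisely the embedding $G_7\hookrightarrow C_5[K_2]$. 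If you prefer to avoid it, you must actually carry out the operator-norm estimate or exhibit the SDP dual certificate you allude to; until one of these is done the theorem is not proved.
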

\begin{proof}
    Notice that $G_7$ is isomorphic to an induced subgraph of $C_5[K_2]$, where $K_2$ is just one edge. Thus, 
    \begin{equation}
        \beta(G_7) \le \beta(C_5)\beta(K_2) = 2.
    \end{equation}
    On the other hand, $\beta(G_7) \ge \alpha(G_7) = 2$, which completes the proof.
\end{proof}
Although the beta number is between the independence number and the Lov\'{a}sz number, its behavior under edge-removal is rather strange.
Nevertheless, the beta number does not increase under the vertex-removal, same as the independence number and the Lov\'{a}sz number. More explicitly, $\beta(G') \le \beta(G)$ if $G'$ is an induced subgraph of $G$.

\paragraph{Cycles and anticycles}
For a perfect graph $G$, we know that $\alpha(G) = \vartheta(G) = \alpha^\star(G)$ which implies that $\alpha(G) = \beta(G) = \vartheta(G) = \alpha^\star(G)$. For imperfect graphs, odd cycles and odd anticycles are basic building blocks. 
As it turns out, the following holds.
\begin{theorem}\label{thm:cycles}
$\beta(C_m) = \alpha(C_m)$, which consequently implies that  $\cSOS{\saur(C_m)} = \stab(C_m)$.
\end{theorem}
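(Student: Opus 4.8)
The plan is to reduce everything to the single inequality $\beta(C_m)\le\alpha(C_m)$. Since $\beta(G)\ge\alpha(G)$ holds for every graph (Proposition~\ref{pro:main_results}), and since $\stab(C_m)\subseteq\cSOS{\saur(C_m)}$ is also always true, the body statement $\cSOS{\saur(C_m)}=\stab(C_m)$ follows once we know that $\SOS{\saur(C_m)}$ lies inside $\stab(C_m)$; and $\stab(C_m)$ is the solution set of $v_i\ge0$, the edge inequalities $v_i+v_j\le1$ for $i\sim j$, and --- only when $m$ is odd --- the one odd-hole inequality $\sum_{i=1}^m v_i\le\lfloor m/2\rfloor$. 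A squared-expectation vector $(\langle S_i\rangle_\rho^2)$ obeys $v_i\ge0$ trivially and $v_i+v_j\le1$ because $(S_i\pm S_j)^2=2\,\id$ for anticommuting self-adjoint unitaries (the two-observable uncertainty relation), so the entire content is that $\sum_i\langle S_i\rangle_\rho^2\le\lfloor m/2\rfloor$ when $m$ is odd, i.e. $\beta(C_m)\le\lfloor m/2\rfloor=\alpha(C_m)$. For even $m$ nothing is left to prove (and $C_m$ is then bipartite, hence perfect, so $\alpha(C_m)=\vartheta(C_m)=\beta(C_m)$ anyway), so we may take $m=2k+1$; the triangle $C_3=K_3$ is a clique with $\beta=\alpha=1$, so assume $m\ge5$.

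For odd $m$ I would use that $C_m$ is the line graph of $C_m$, which makes its self-adjoint unitary representations free-fermion solvable. Take self-adjoint Majorana operators $\gamma_1,\dots,\gamma_m$ on $\mathbb{C}^{2^k}$ (pairwise anticommuting, $\gamma_i^2=\id$) and set $S_i:=i\,\gamma_i\gamma_{i+1}$ with indices read modulo $m$. One checks at once that each $S_i$ is a self-adjoint unitary and that $S_i,S_j$ anticommute exactly when the pairs $\{i,i+1\}$ and $\{j,j+1\}$ overlap in one index, i.e. when $j=i\pm1$; so $\{S_i\}$ is a SAUR of $C_m$ and, by Theorem~\ref{thm:standard}, $\beta(C_m)=q(\{S_i\})$. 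Since $\sum_i\langle S_i\rangle_\rho^2=\max_{\|x\|\le1}(\tr\rho\sum_i x_iS_i)^2$ for each $\rho$ (Cauchy--Schwarz), $q(\{S_i\})=\max_{\|x\|=1}\|\sum_i x_iS_i\|^2$, and $H(x):=\sum_i x_iS_i=i\sum_i x_i\gamma_i\gamma_{i+1}$ is a quadratic Majorana Hamiltonian. Diagonalising its $m\times m$ real antisymmetric single-particle matrix $A(x)$ (with $A(x)_{i,i+1}=x_i$ around the cycle) gives nonzero single-particle energies $\pm\mu_1(x),\dots,\pm\mu_k(x)$ together with one zero mode (the cycle has odd length), so the spectrum of $H(x)$ is exactly $\{\sum_{j=1}^k\epsilon_j\mu_j(x):\epsilon\in\{\pm1\}^k\}$, each eigenvalue simple (in the canonical Majorana basis the $k$ operators $i\tilde\gamma_{2j-1}\tilde\gamma_{2j}$ form a maximal commuting family on $\mathbb{C}^{2^k}$).

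The bound now drops out of computing $\tr H(x)^2$ in two ways. Every cross term $S_iS_j$ with $i\ne j$ is a product of two or four distinct Majoranas, hence a non-identity Pauli-type operator and traceless, so $\tr H(x)^2=\|x\|^2\,2^k$; on the other hand $\tr H(x)^2=\sum_\epsilon\bigl(\sum_j\epsilon_j\mu_j\bigr)^2=2^k\sum_{j=1}^k\mu_j(x)^2$. Hence $\sum_{j=1}^k\mu_j(x)^2=\|x\|^2=1$, and therefore
\[
\|H(x)\|^{2}=\Bigl(\sum_{j=1}^{k}\mu_{j}(x)\Bigr)^{2}\le k\sum_{j=1}^{k}\mu_{j}(x)^{2}=k
\]
by Cauchy--Schwarz over the $k$ modes. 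Maximising over $x$ gives $\beta(C_m)=q(\{S_i\})\le k=\alpha(C_m)$, and with the reduction of the first paragraph this yields both $\beta(C_m)=\alpha(C_m)$ and $\cSOS{\saur(C_m)}=\stab(C_m)$.

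The genuinely non-routine step --- the rest being bookkeeping --- is the free-fermion reduction: recognising that $C_m$ is a line graph so that a SAUR of the correct dimension can be written as Majorana bilinears $i\gamma_i\gamma_{i+1}$, and then extracting the exact spectrum of $H(x)$ from the single-particle matrix together with the trace normalisation $\sum_j\mu_j(x)^2=\|x\|^2$. Once that structure is in hand, the closing Cauchy--Schwarz is what pins $\beta(C_m)$ to the integer $\alpha(C_m)=\lfloor m/2\rfloor$ rather than to the non-integral $\vartheta(C_m)=m\cos(\pi/m)/\bigl(1+\cos(\pi/m)\bigr)$. One could instead try a combinatorial induction along the Pauli-$(i_0,j_0)$-induced-subgraph reduction $C_{2k+1}\rightsquigarrow C_{2k-1}$ built into the standard SAUR, but that route appears to leak an additive constant, so I would not pursue it.
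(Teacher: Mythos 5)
Your proof is correct, but it travels a genuinely different road from ours. Both arguments first reduce everything to the single inequality $\beta(C_{2k+1})\le k$ (even cycles being perfect, and the odd-hole inequality being the only nontrivial facet of $\stab(C_{2k+1})$, so that $\beta=\alpha$ plus the edge inequalities forces $\cSOS{\saur(C_m)}=\stab(C_m)$). We then establish that inequality by induction on the cycle length: a ladder lemma $\beta(C_{2n+3})\le\beta(C_{2n+1})+1$ (Lemma~\ref{thm:ladder2}), whose engine is the identity $\max_\rho\left[\langle \id Y\rangle^2+\langle XX\rangle^2+\langle ZZ\rangle^2-\langle YY\rangle^2\right]=1$ certified by an explicit dual-feasible point of a small SDP (Theorem~\ref{cl:ladder2}), anchored at $\beta(C_5)=2$. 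So the inductive route you set aside at the end as ``leaking an additive constant'' is precisely the one we make work --- the increment per step is exactly $1$, matching the increment of $\alpha$. Your argument instead exploits that $C_m$ is a line graph, realizes a SAUR as Majorana bilinears $S_i=i\gamma_i\gamma_{i+1}$ (legitimate input to Theorem~\ref{thm:standard}, which lets you compute $\beta$ on any single SAUR), and bounds $\max_{\|x\|=1}\|\sum_i x_iS_i\|^2$ by diagonalizing the $m\times m$ antisymmetric single-particle matrix: $\|H(x)\|\le\sum_j|\mu_j|\le\sqrt{k\sum_j\mu_j^2}=\sqrt{k}$, with the normalization $\sum_j\mu_j^2=\|x\|^2$ read off from $\tr H(x)^2$. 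All steps check out, including the small wrinkle that realizing $2k+1$ Majoranas on $\mathbb{C}^{2^k}$ makes $\gamma_m$ a degree-$2k$ monomial in the others, so some ``bilinears'' are really higher-degree monomials --- they are still traceless and obey the same (anti)commutation relations, so both the cross-term tracelessness and the mode decomposition survive; note also that the final bound only needs $\|H(x)\|\le\sum_j|\mu_j|$, i.e.\ the triangle inequality over the commuting involutions $i\tilde\gamma_{2j-1}\tilde\gamma_{2j}$, not the simplicity of the spectrum. What your approach buys is a one-shot, uniform treatment of all odd cycles that makes the free-fermion structure behind the result explicit (in the spirit of the line-graph solvability of~\cite{chapman2020characterization}); what ours buys is a reusable increment lemma with a finite, explicitly checkable certificate relating the standard SAURs of nested odd cycles.
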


Numerically, we have $\vartheta(\bar{C}_{2n+1}) > \beta(\bar{C}_{2n+1}) > \alpha(\bar{C}_{2n+1}) = 2$ for $n\le 10$, see Fig.~\ref{fig:antic} for more details; for any graph $G$ with no more than $9$ vertices, if $\beta(G) > \alpha(G)$, then $G$ has either $\bar{C}_7$ or $\bar{C}_9$ as an induced subgraph.  
\begin{figure}
    \centering
    \includegraphics[width=0.45\textwidth]{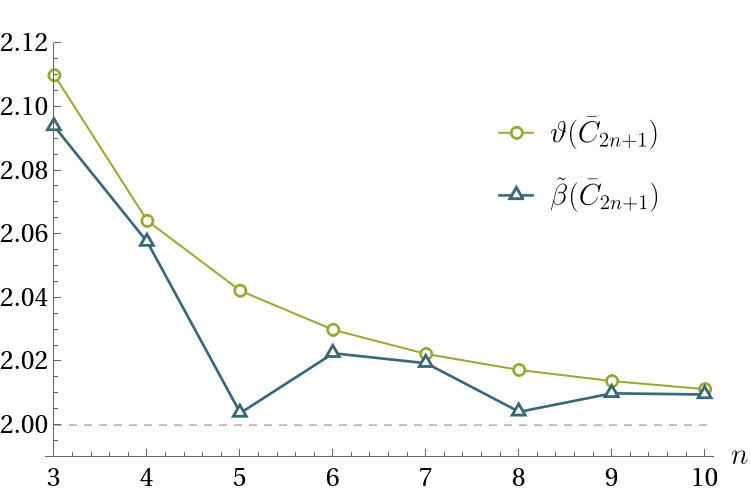}
    \caption{The estimation $\tilde{\beta}(\bar{C}_{2n+1})$ of $\beta(\bar{C}_{2n+1})$ in comparison with $\vartheta(\bar{C}_{2n+1})$. For each estimation, we have used the second see-saw method with $500$ rounds of iteration.}
    \label{fig:antic}
\end{figure}
Though these observations might suggest the conjecture that \emph{$\cSOS{\saur(G)} \supsetneq \stab(G) $ if and only if $G$ has some $\bar{C}_{2n+1}$ as an induced subgraph where $n \ge 3$}, it is refuted by a very recent counterexample~\cite{hastings2024limitations}.

%%%%%%%%%%%%%%%%%%%%%%%%%%%%%%%%%%%%%%%%%%%%%%%%%%%%%%%%%%%%%%%%
\section{self-adjoint representations and return to joint numerical range beyond $\mathbf{\pm 1}$-observables}
\label{sec:sar}
Whether in SAURA or in SAUR, we have only considered self-adjoint unitaries, which can limit the range of applications. We generalize our setting to non-unitary operators in this section.
\begin{definition}
  For a given graph $G$, a set of operators $\{A_i\}$ is said to be a self-adjoint representation for anti-commutativity (SARA) of $G$ if each $A_i$ is self-adjoint, $A_i^2\leq \id$, $\{A_i, A_j\} = 0$ when $i\sim j$. Furthermore, if $[A_i, A_j] = 0$ whenever $i\not\sim j$, $\{A_i\}$ is said to be a self-adjoint representation (SAR) of $G$.
\end{definition}
For a given graph $G$, denote $\sara(G)$ the set of all its SARAs, and $\sar(G)$ the set of all its SARs. By definition, $\saura(G) \subseteq \sara(G)$, and $\saur(G) \subseteq \sar(G)$. Surprisingly, maximizing over $\sara(G)$ instead of $\saura(G)$ does not result in a larger value, same for $\sar(G)$ and $\saur(G)$.
To be more explicit:
\begin{theorem}\label{thm:sarasaura}
  $\SOS{\sara(G)} = \SOS{\saura{G}} = \th(G)$.
\end{theorem}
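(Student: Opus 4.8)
The plan is to establish the chain $\SOS{\saura(G)} \subseteq \SOS{\sara(G)} \subseteq \th(G) = \SOS{\saura(G)}$, so that all three sets coincide. The rightmost equality is exactly Theorem~\ref{thm:thetavartheta}, and the first inclusion is immediate from $\saura(G) \subseteq \sara(G)$, which holds since a self-adjoint unitary satisfies $A_i^2 = \id \leq \id$. Hence the entire content of the theorem is the middle inclusion $\SOS{\sara(G)} \subseteq \th(G)$: every tuple $(\mean{A_i}_\rho^2)$ arising from a SARA of $G$ lies in the Lov\'asz theta body.

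The key step is a \emph{dilation} (Naimark-type) argument turning a SARA into a SAURA on a larger space, with the relevant expectation values preserved. Given $\{A_i\} \in \sara(G)$ with $A_i = A_i^*$ and $A_i^2 \leq \id$, I would set
\begin{equation}
    \tilde A_i = \begin{pmatrix} A_i & \sqrt{\id - A_i^2} \\ \sqrt{\id - A_i^2} & -A_i \end{pmatrix}
\end{equation}
acting on $\mathcal{H} \oplus \mathcal{H}$. Each $\tilde A_i$ is self-adjoint, and a direct block computation gives $\tilde A_i^2 = \id$, so each $\tilde A_i$ is a self-adjoint unitary. The crucial point is to verify the anti-commutation relations: for $i \sim j$ one has $\{A_i, A_j\} = 0$, and I need $\{\tilde A_i, \tilde A_j\} = 0$. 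Expanding the off-diagonal and diagonal blocks of $\tilde A_i \tilde A_j + \tilde A_j \tilde A_i$, the diagonal blocks are $\{A_i,A_j\} + \{\sqrt{\id-A_i^2},\sqrt{\id-A_j^2}\}$-type expressions; here I would use that $A_i$ anti-commutes with $A_j$ implies $A_i^2$ commutes with $A_j$ and hence (by continuous functional calculus / polynomial approximation on the spectrum) $\sqrt{\id - A_i^2}$ commutes with $A_j$, and similarly $\sqrt{\id-A_i^2}$ anti-commutes or commutes appropriately with $\sqrt{\id-A_j^2}$. Getting all four blocks to vanish is the part that requires care; if the naive $2\times 2$ dilation does not close the relations, the fallback is a $4\times 4$ dilation $\tilde A_i = A_i \otimes \sigma_z + \sqrt{\id - A_i^2} \otimes \sigma_x$ (tensoring in a fixed qubit), for which $\tilde A_i^2 = \id$ and the anti-commutator inherits the structure of $\{A_i,A_j\} \otimes \id$ plus cross terms that vanish by the commutation facts above. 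Finally, taking the state $\tilde\rho = \rho \oplus 0$ (or $\rho \otimes |0\rangle\langle 0|$ in the tensor version) gives $\mean{\tilde A_i}_{\tilde\rho} = \mean{A_i}_\rho$, hence the same squared-expectation tuple; since $\{\tilde A_i\} \in \saura(G)$, this tuple lies in $\SOS{\saura(G)} = \th(G)$.

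The main obstacle I anticipate is precisely verifying that the dilation preserves the anti-commutation structure when $i \sim j$ while not accidentally forcing commutation when $i \nsim j$ (which for SARA is not constrained, so it is fine, but one must make sure no new \emph{anti}-commutation is imposed that would shrink the feasible set). The ``commute vs.\ anti-commute'' bookkeeping between the $A_i$, $A_i^2$, and $\sqrt{\id - A_i^2}$ factors is the delicate computational core; everything else — self-adjointness, the $\tilde A_i^2 = \id$ identity, the preservation of expectation values, and the reduction to Theorem~\ref{thm:thetavartheta} — is routine. As a sanity check that the statement is even plausible, note $\SOS{\sara(G)} \subseteq \th(G)$ can alternatively be argued directly by the vector-construction of the proof of Theorem~\ref{thm:thetavartheta}: writing $A_i = \sum_k v_{ik} A_k$-type decompositions may not apply since a SARA need not come from an orthogonal representation, which is exactly why the dilation route is the natural one.
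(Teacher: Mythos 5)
There is a genuine gap, and it sits exactly where you anticipated: the dilation does not preserve the anti-commutation relations. Write $B_i=\sqrt{\id-A_i^2}$. For $i\sim j$ you correctly note that $A_i^2$ commutes with $A_j$, hence $B_i$ commutes with $A_j$ and with $B_j$; but that is precisely what kills the construction. The diagonal blocks of $\tilde A_i\tilde A_j+\tilde A_j\tilde A_i$ are $\{A_i,A_j\}+\{B_i,B_j\}=0+2B_iB_j$, which is nonzero whenever neither operator is unitary. Your fallback $\tilde A_i=A_i\otimes\sigma_z+B_i\otimes\sigma_x$ is the same operator written in tensor form and fails identically: the cross terms vanish because $\{\sigma_z,\sigma_x\}=0$, but the surviving term is $(\{A_i,A_j\}+\{B_i,B_j\})\otimes\id=2B_iB_j\otimes\id$. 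A concrete counterexample is $A_1=X/2$, $A_2=Y/2$ on one qubit: these anti-commute, yet $\{\tilde A_1,\tilde A_2\}=\tfrac{3}{2}\,\id\otimes\id\neq 0$. A dilation of SARA into SAURA can in fact be built, but it needs a far more elaborate ancilla gadget (roughly, self-adjoint unitaries $F_i,E_i$ on vertex- and edge-ancillas with $[F_i,F_j]=0$ and $\{E_i,E_j\}=\{F_i,E_j\}=\{F_i,E_i\}=0$ on edges, setting $\tilde A_i=A_i\otimes F_i+B_i\otimes E_i$ and using a common $+1$ eigenstate of the $F_i$); a single shared ancilla cannot do the job.

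The paper proves the middle inclusion by a completely different, and much shorter, spectral argument: for a SARA $\{A_i\}$ and state $\rho$ one has $\sum_i w_i\mean{A_i}_\rho^2\le\lambda_{\max}(\mathcal{A})$, where $\mathcal{A}_{ij}=\sqrt{w_iw_j}\,\mean{\{A_i,A_j\}/2}_\rho$. This matrix has zeros on edges and diagonal entries $w_i\,\tr(A_i^2\rho)\le w_i$, so it is feasible for the maximum-eigenvalue characterization of $\vartheta(G,w')$ with the reduced weights $w_i'=w_i\tr(A_i^2\rho)$, and monotonicity of $\vartheta$ in the weights gives $\lambda_{\max}(\mathcal{A})\le\vartheta(G,w')\le\vartheta(G,w)$. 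In other words, the ``deficiency'' $\id-A_i^2$ that derails your dilation is simply absorbed into a smaller weight vector. If you want to keep a dilation-based proof you must exhibit an explicit ancilla construction that restores the anti-commutators; as written, the central step of your argument fails.
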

\begin{theorem}\label{thm:sarsaur}
  $\SOS{\sar(G)} = \SOS{\saur{G}}$.
\end{theorem}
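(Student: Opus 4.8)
The plan is to establish $\SOS{\sar(G)} = \SOS{\saur(G)}$ by proving the two inclusions. The inclusion $\SOS{\saur(G)} \subseteq \SOS{\sar(G)}$ is immediate from $\saur(G) \subseteq \sar(G)$, so the content is in the reverse direction: every tuple of squared expectations realizable by a SAR of $G$ is also realizable by a SAUR of $G$. The natural strategy is a \emph{dilation} argument: given a SAR $\{A_i\}$ on a Hilbert space $\mathcal{H}$ with a state $\rho$, I would construct a SAUR $\{S_i\}$ on a larger space $\mathcal{H} \otimes \mathcal{K}$ together with a state $\sigma$ such that $\mean{S_i}_\sigma = \mean{A_i}_\rho$ for every $i$ (or at least that the squared expectations match). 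The obvious candidate, suggested by the structure of Lemma~\ref{lm:sar_dec} which is cited earlier as saying the SAR-algebra is a tensor product of the quasi-Clifford algebra with a commuting algebra, is to first decompose $\{A_i\}$ into blocks on which each $A_i$ acts as a scalar multiple $c_i$ (with $|c_i| \le 1$) of a self-adjoint unitary $U_i$, where $\{U_i\}$ is a SAUR, and then absorb the contractive scalars $c_i$ by coupling to an ancilla.

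Concretely, the key step I would carry out is the single-qubit-per-operator dilation of the contraction factors. Suppose on a block we have $A_i = c_i U_i$ with $\{U_i\}$ a SAUR and $c_i \in [-1,1]$. I want to replace each scalar $c_i$ by an honest self-adjoint unitary. One clean way: write $c_i = \cos\theta_i$ and note that for a single-qubit self-adjoint unitary $V_i$ one can arrange, on a suitable product ancilla, that $U_i \otimes (\text{ancilla operator}_i)$ is a self-adjoint unitary whose expectation in a chosen ancilla state equals $c_i \mean{U_i}$; the commutation/anticommutation pattern among the ancilla pieces must be chosen so that the $i \sim j$ relations are preserved. The subtlety is that the ancilla operators attached to different indices $i, j$ must \emph{commute} (so they don't disturb the frustration graph) yet each must be a nontrivial self-adjoint unitary with a tunable expectation value — this is exactly what a tensor factor of commuting Paulis $Z^{\delta_{ik}}$ (as used in the complete SAUR construction earlier) achieves, combined with appropriate states. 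Since all these ancilla factors commute with everything, property (iv) and the anticommutation property (iii) for the dilated operators reduce to those of $\{U_i\}$, and property (ii), $S_i^2 = \id$, holds by construction; meanwhile $\mean{S_i}_\sigma$ can be tuned to hit $c_i \mean{U_i}_\rho$.

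The main obstacle I anticipate is not the dilation of a single block but \emph{assembling} the blocks: a general SAR need not act as a scalar times a SAUR globally — only after the decomposition of Lemma~\ref{lm:sar_dec} into a direct sum (or direct integral) over the spectrum of the central commuting algebra. So I would first invoke that lemma to write $\mathcal{H} \cong \bigoplus_\lambda \mathcal{H}_\lambda$ (or a direct integral) with $A_i|_{\mathcal{H}_\lambda} = c_i(\lambda)\, U_i^{(\lambda)}$, $\{U_i^{(\lambda)}\}$ a SAUR for each $\lambda$; then observe that by Theorem~\ref{thm:standard} the squared-expectation set of each SAUR is the same $\SOS{\saur(G)}$, so it suffices to handle a state supported on a single block; finally apply the per-block dilation above and note that the resulting squared expectations lie in $\cSOS{\saur(G)}$, hence — using that $\SOS{\saur(G)}$ is already what we want and the convexity discussion around \eqref{def:q} — in $\SOS{\saur(G)}$. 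A secondary technical point to be careful about is the direct-integral (non-separable vs.\ separable) case, but since we only need to reproduce one point of $\SOS{\sar(G)}$ at a time and states have well-defined expectations, one can reduce to finitely many blocks by a standard approximation/extreme-point argument, or simply note each fixed $\rho$ has expectations expressible via finitely many block-averages.
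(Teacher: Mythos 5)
Your overall route is the paper's: decompose the SAR into blocks on which each $A_i$ is a contraction factor times a self-adjoint unitary (Lemma~\ref{lm:sar_dec}), then absorb the factors $c_i\in[-1,1]$ into one fresh ancilla qubit per index carrying the self-adjoint unitary $c_iX+\sqrt{1-c_i^2}\,Z$ evaluated on $|{+}\rangle$ --- this per-block dilation is exactly the paper's Lemma~\ref{lm:star}, and your reasoning there (commuting ancillas on distinct qubits preserve the frustration graph, squares are the identity, expectations are tunable) is sound. The genuine gap is in how you reassemble the blocks. A state $\rho$ on $\bigoplus_t\mathcal{H}^{(t)}$ gives $\mean{A_i}_\rho=\sum_t p_t\mean{A_i^{(t)}}_{\rho_t}$, so the \emph{squared} expectations of $\rho$ are squares of convex combinations of the per-block \emph{signed} expectations; they are not determined by the per-block squared expectations, and hence your claim that ``it suffices to handle a state supported on a single block'' fails. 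Your fallback --- that the assembled point lies in $\cSOS{\saur(G)}$ and therefore in $\SOS{\saur(G)}$ --- is also unjustified: the paper only establishes that $\SOS{\saur(G)}$ is star-convex (precisely via Lemma~\ref{lm:star}), not convex, and the theorem asserts equality of the $\mathcal{Q}$-sets themselves, not of their convex hulls.

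The fix, which is what the paper does, is to track signed expectations throughout and assemble by direct sum rather than by convex combination. On each block, the dilation produces $\{P_i^{(t)}\}\in\saur(G)$ and a state $\tau_t$ with $\mean{P_i^{(t)}}_{\tau_t}=\mean{A_i^{(t)}}_{\rho_t}$ (the signed values). Setting $P_i=\bigoplus_t P_i^{(t)}$ and $\tau=\bigoplus_t p_t\tau_t$ yields a single SAUR and a single state with $\mean{P_i}_\tau=\sum_t p_t\mean{P_i^{(t)}}_{\tau_t}=\mean{A_i}_\rho$ for every $i$; squaring only at the very end then lands the point in $\SOS{\saur(G)}$ itself, with no appeal to convexity. (Your worry about direct integrals does not arise in the finite-dimensional setting in which Lemma~\ref{lm:sar_dec} is stated and proved.)
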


A set is said to be star-convex if all the points, which are on the line segment between the origin and any point in the set, are in the set, too. As it turns out, $\SOS{\saur(G)}$ is indeed star-convex. More details are provided in Appendix~\ref{sec:proofs}.

\section{Applications}
As an application, we provide bounds for sum uncertainty relations among sets of observables with certain anti-commutation or commutation relations. 

\begin{theorem}\label{thm:uncertainty} 
For a given set of observables $\{A_i\}_{i=1}^n$ that is a SARA of $G$, we have
\begin{equation}\label{eq:lovaszu}
  \sum_i \Delta^2(A_i) \ge \lambda_{\min} - \vartheta(G,\lambda),
\end{equation}
where $\lambda_{\min}$ is the minimal singular value of $\sum_i A_i^2$, and $\lambda = (a_1^2, \ldots, a_n^2)$ with $a_i$ the maximal eigenvalue of $A_i$. Besides, if  $\{A_i\}$ is a SAR of $G$, then
\begin{equation}\label{eq:betau}
  \sum_i \Delta^2(A_i) \ge \lambda_{\min} - \beta(G,\lambda).
\end{equation}
\end{theorem}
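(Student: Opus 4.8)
Here is how I would go about proving Theorem~\ref{thm:uncertainty}.

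The plan is to reduce both inequalities to the structural results of Proposition~\ref{pro:main_results}, using nothing beyond the variance identity $\Delta^2(A_i) = \mean{A_i^2}_\rho - \mean{A_i}_\rho^2$. Summing it over $i$, for an arbitrary state $\rho$ one has
\begin{equation}
  \sum_i \Delta^2(A_i) = \mean{\sum_i A_i^2}_\rho - \sum_i \mean{A_i}_\rho^2 .
\end{equation}
Because the $A_i$ are self-adjoint, $\sum_i A_i^2 \succeq 0$, so its singular values coincide with its eigenvalues and $\mean{\sum_i A_i^2}_\rho \ge \lambda_{\min}$ for every $\rho$. It therefore remains to bound $\sum_i \mean{A_i}_\rho^2$ from above by $\vartheta(G,\lambda)$ in the SARA case and by $\beta(G,\lambda)$ in the SAR case, for all states $\rho$ at once.

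For this bound I would rescale the generators, $B_i := A_i/a_i$ (working for now in the relevant regime $a_i>0$). Scaling by a positive constant preserves self-adjointness, every commutation and anticommutation relation, and --- provided moreover $a_i=\|A_i\|$, which is automatic for Pauli strings and for any SAUR, whose spectra lie in $\{-1,1\}$ --- the bound $B_i^2\le\id$; hence $\{B_i\}$ is again a SARA of $G$, and a SAR of $G$ if $\{A_i\}$ is. Using the definition~\eqref{def:q} of $q$ together with Theorem~\ref{thm:sarasaura},
\begin{equation}
  \sum_i \mean{A_i}_\rho^2 = \sum_i a_i^2\,\mean{B_i}_\rho^2 = \sum_i \lambda_i\,\mean{B_i}_\rho^2 \le \sos{\sara(G),\lambda} = \vartheta(G,\lambda),
\end{equation}
and, when $\{A_i\}$ is a SAR, the same chain with Theorem~\ref{thm:sarsaur} and the definition $\beta(G,w)=\sos{\saur(G),w}$ gives $\sum_i \mean{A_i}_\rho^2 \le \sos{\sar(G),\lambda} = \sos{\saur(G),\lambda} = \beta(G,\lambda)$. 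Substituting both estimates into the variance identity yields \eqref{eq:lovaszu} and \eqref{eq:betau}.

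The point that needs care --- and the only real obstacle I anticipate --- is this rescaling: the statement uses $a_i$, the maximal eigenvalue of $A_i$, whereas the clean argument wants $a_i=\|A_i\|$, i.e.\ a spectrum of $A_i$ that is not ``more negative than positive''. When $i$ has a neighbour in $G$ this is essentially forced, since anticommuting with some $A_j$ makes $A_i$ unitarily equivalent to $-A_i$ on the orthocomplement of $\ker A_j$ and hence symmetric there. A vertex with no neighbour and a lopsided (or purely non-positive) spectrum is the delicate case; there one must instead verify that the surplus $\mean{\sum_i A_i^2}_\rho - \lambda_{\min}$ absorbs the gap between $\vartheta(G,(\|A_i\|^2)_i)$ and $\vartheta(G,\lambda)$, which I would do with the single-operator inequality $(a_i\id - A_i)(A_i-\mu_i\id)\succeq 0$ (with $\mu_i$ the smallest eigenvalue of $A_i$, noting these commute so the product is positive) together with the fact that coordinates of points of $\th(G)$ never exceed $1$. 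All remaining manipulations are routine, since Theorems~\ref{thm:sarasaura} and~\ref{thm:sarsaur} carry the essential content.
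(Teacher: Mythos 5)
Your overall strategy is the right one, and since the paper states that the proof is in the appendix but in fact never spells it out there, your reduction is exactly what its other results are set up to deliver: write $\sum_i \Delta^2(A_i) = \mean{\sum_i A_i^2}_\rho - \sum_i \mean{A_i}_\rho^2$, bound the first term below by $\lambda_{\min}$, rescale $B_i = A_i/a_i$ so that $\{B_i\}$ is again a SARA (resp.\ SAR) of $G$, and invoke Theorem~\ref{thm:sarasaura} (resp.\ Theorem~\ref{thm:sarsaur}) to get $\sum_i \mean{A_i}_\rho^2 = \sum_i \lambda_i \mean{B_i}_\rho^2 \le \sos{\sara(G),\lambda} = \vartheta(G,\lambda)$, resp.\ $\le \beta(G,\lambda)$. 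Under the reading $a_i = \|A_i\|$ (largest \emph{singular} value), this argument is complete and correct, and that hypothesis is automatic in the motivating cases (Pauli strings, or any self-adjoint unitaries).

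The case you flag as ``delicate'', however, is not merely delicate: with $a_i$ literally the maximal eigenvalue, the stated inequality is false, so the repair you sketch cannot succeed. Take $G$ edgeless on two vertices and the commuting pair $A_1 = \mathrm{diag}(\epsilon,-1)$, $A_2 = \mathrm{diag}(-1,\epsilon)$ with $0<\epsilon<1$; this is a SAR (hence SARA) of $G$. Then $\sum_i A_i^2 = (1+\epsilon^2)\id$, so $\lambda_{\min}=1+\epsilon^2$ and the surplus $\mean{\sum_i A_i^2}_\rho - \lambda_{\min}$ that you hoped would absorb the gap vanishes identically; meanwhile $a_1=a_2=\epsilon$ gives $\vartheta(G,\lambda)=\beta(G,\lambda)=2\epsilon^2$, so the right-hand side of \eqref{eq:lovaszu} equals $1-\epsilon^2>0$, yet the state $|e_2\rangle$ yields $\sum_i\Delta^2(A_i)=0$. (Your observation that having a neighbour forces a symmetric spectrum is also only valid on $(\ker A_j)^\perp$; on $\ker A_j$ the spectrum of $A_i$ can remain lopsided, and $A_j=0$ is an admissible SARA element, so the problem is not confined to isolated vertices.) The honest conclusion is therefore that the theorem needs $a_i$ to be the operator norm of $A_i$, in which case your clean rescaling proof works and the compensation argument via $(a_i\id - A_i)(A_i-\mu_i\id)\succeq 0$ is neither needed nor salvageable.
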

For a given set of observables $\{A_i\}_{i=1}^n$, the estimation of $\lambda_{\min}$ is a relatively easy problem. In the case that each $A_i$ has only two outcomes $\pm a_i$, then $\lambda_{\min} = \sum_i a_i^2$. 
In comparison with the similar application on uncertainty relations in Ref.~\cite{de2022uncertainty}, our results are not limited to dichotomic observables. Besides, the inequality \eqref{eq:betau} is tighter in general.

Our results can also be used to estimate the ground-state energy, which is of high interest in quantum many-body systems~\cite{luttinger1960ground,li2023quantum}. 
For a given frustration graph $G$ with $n$ vertices, the dimension of the system to realize it is exponential of $n$~\cite{gastineau1982quasi}. This can make the problem  notoriously challenging to solve. 
\begin{theorem}\label{thm:ground}
    For a given set of Pauli strings $\{A_i\}_{i=1}^n$ whose frustration graph is $G$, 
\begin{align}
  \Big(\sum_i a_i \mean{A_i}_\rho\Big)^2 &\le \min_w  \Big( \sum_i a_i^2/w_i\Big)\beta(G,w)\label{eq:minsar}\\
  &\le \min_w  \Big( \sum_i a_i^2/w_i\Big)\vartheta(G,w),\label{eq:minsara}
\end{align}
for any state $\rho$ and any real coefficients $a_i$'s.
\end{theorem}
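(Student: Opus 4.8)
The plan is to reduce the statement to the definition of $\beta(G,w) = \sos{\saur(G),w}$ via a Cauchy--Schwarz-type argument, then invoke inequality \eqref{eq:mainineq2} for the second line. Fix a state $\rho$ and real coefficients $a_i$. For any positive weight vector $w = (w_1,\dots,w_n)$, write $a_i\mean{A_i}_\rho = (a_i/\sqrt{w_i})\cdot(\sqrt{w_i}\,\mean{A_i}_\rho)$ and apply the Cauchy--Schwarz inequality to the two vectors $(a_i/\sqrt{w_i})_i$ and $(\sqrt{w_i}\,\mean{A_i}_\rho)_i$. This gives
\begin{equation*}
  \Big(\sum_i a_i\mean{A_i}_\rho\Big)^2 \le \Big(\sum_i a_i^2/w_i\Big)\Big(\sum_i w_i\mean{A_i}_\rho^2\Big).
\end{equation*}
Since $\{A_i\}$ is a set of Pauli strings with frustration graph $G$, it is in particular a SAUR of $G$, so by Theorem~\ref{thm:standard} (and its weighted analogue) the tuple $(\mean{A_i}_\rho^2)_i$ lies in $\SOS{\saur(G)} = \SOS{\{A_i\}}$, whence $\sum_i w_i\mean{A_i}_\rho^2 \le \sos{\saur(G),w} = \beta(G,w)$. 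Substituting and then minimizing over all positive $w$ yields \eqref{eq:minsar}.

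For \eqref{eq:minsara}, I would simply chain the inequality $\beta(G,w) = \sos{\saur(G),w} \le \sos{\saura(G),w} = \vartheta(G,w)$, which is exactly the middle and last relations of \eqref{eq:mainineq2}. This is valid for every fixed $w$, so it survives the minimization: $\min_w (\sum_i a_i^2/w_i)\beta(G,w) \le \min_w (\sum_i a_i^2/w_i)\vartheta(G,w)$.

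One technical point to handle with a remark rather than a computation: the optimization $\min_w$ is over positive weight vectors, but the Cauchy--Schwarz step is insensitive to overall scaling of $w$, and if some $a_i = 0$ the corresponding coordinate is harmless, so one may restrict to $w$ with all entries positive (or take limits to allow $w_i \to \infty$, corresponding to the cylinder/asymptotic-ellipsoid picture mentioned after Eq.~\eqref{eq:examplestrings}). The main obstacle is essentially bookkeeping: one must make sure that the weighted version of Theorem~\ref{thm:standard} genuinely applies to $\{A_i\}$ as a concrete SAUR (so that $\mean{A_i}_\rho^2$-tuples are controlled by $\beta(G,w)$ and not merely by $\vartheta(G,w)$), which is precisely the content flagged as "a similar result of Theorem~\ref{thm:standard} for the weighted version can be proven in the same way." Given that, both inequalities follow immediately.
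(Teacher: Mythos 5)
Your proof is correct and follows essentially the same route as the paper: the Cauchy--Schwarz split $a_i\mean{A_i}_\rho=(a_i/\sqrt{w_i})(\sqrt{w_i}\mean{A_i}_\rho)$, the bound $\sum_i w_i\mean{A_i}_\rho^2\le\beta(G,w)$ from the definition of $\beta$ applied to the concrete SAUR $\{A_i\}$, minimization over $w$, and the chain $\beta(G,w)\le\vartheta(G,w)$ from \eqref{eq:mainineq2} for the second line. Your extra remarks on the weighted version of Theorem~\ref{thm:standard} and on degenerate $w$ are sensible bookkeeping but add nothing beyond the paper's argument.
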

\begin{proof}
    By employing the Cauchy-Schwarz inequality, we have
    \begin{align}
        \Big(\sum_i a_i \mean{A_i}\Big)^2 &\le \Big( \sum_i a_i^2/w_i\Big) \Big( \sum_i w_i \mean{A_i}^2\Big) \label{eq:foropt}\\
        &\le \Big( \sum_i a_i^2/w_i\Big) \beta(G,w),\label{eq:beta_bound}
    \end{align}
    where $w_i$'s are positive. Thus, the inequality still holds when we take the minimization over possible $w$.
\end{proof}
Especially, we can set $w_i = |a_i|^t$ for $t=0, 1, 2$.
In the case that $t=0$, we recover the result in Ref.~\cite{hastings2022optimizing} with Eq.~\eqref{eq:minsara} for Pauli strings, which is less tight than the one in Eq.~\eqref{eq:minsar} in general. However, when some $a_i$'s are much larger than others, much better performance can be obtained by taking $t=1$ or $t=2$. For example, $a_1=1$ and all the others are $0$, then the results for $t=0$ and $t=1$ are $\vartheta(G)$ and $1$ respectively. And $\vartheta(G)$ can be much larger than $1$ for a big graph $G$. In general, a tighter bound can be obtained from Eq.~\eqref{eq:foropt} than the case where $w_i = |a_i|^t$ for $t=0, 1, 2$. More precisely, the bound reads
\begin{equation}
    \min_w \max_{v\in Q}\Big( \sum_i a_i^2/w_i\Big) \Big( \sum_i w_i v_i\Big),
\end{equation}
where $Q=\th(G)$ or $Q=\cSOS{\saur(G)}$. In the case that $G$ is either a perfect graph or a cycle, $Q=\cSOS{\saur(G)} = \stab(G)$. This becomes a Min-Max problem of size $n$, which is relatively much easier than the original problem. 

We take the Hamiltonian  $H = \sum_{i=1}^7 A_i$ on $14$ qubits as the first example, where
\begin{align}   
A_1 \! &=\! Z_{1}Z_{2}Z_{3}Z_{4}, &A_2 \! &=\! Z_{5}Z_{6}Z_{7}Z_{8},\nonumber\\ 
A_3 \! &=\! X_{1}Z_{9}Z_{10}Z_{11}, &A_4 \! &=\! X_{2}X_{5}Z_{12}Z_{13},\nonumber\\
A_5 \! &=\! X_{3}X_{6}X_{9}Z_{14}, &A_6 \! &=\! X_{4}X_{7}X_{10}X_{12},\nonumber\\
A_7 \! &=\! X_{8}X_{11}X_{13}X_{14}.
\end{align}
This set $\{A_i\}_{i=1}^7$ is in fact the edge SAUR of $\bar{C}_7$ as in Fig.~\ref{fig:antihepta}.
Direct calculation shows that minimal and maximal eigenvalues of $H$ are $\mp 3.828427$, respectively. This agrees with the upper bound in Eq.~\eqref{eq:beta_bound} by setting the $w_i$'s equal to $1$, i.e., $\sqrt{7\beta(\bar{C}_7)} = 1+2\sqrt{2}$.

We take the Hamiltonian  $H = \sum_{i=1}^9 A_i$ on $27$ qubits as the second example, where
\begin{align}
A_1\! &=\! Z_{1}Z_{2}Z_{3}Z_{4}Z_{5}Z_{6}, 
& A_2 \! &=\! Z_{7}Z_{8}Z_{9}Z_{10}Z_{11}Z_{12},\nonumber\\
A_3 &\! =\! X_{1}Z_{13}Z_{14}Z_{15}Z_{16}Z_{17}, & A_4 \! &=\! X_{2}X_{7}Z_{18}Z_{19}Z_{20}Z_{21},\nonumber\\
A_5 \! &=\! X_{3}X_{8}X_{13}Z_{22}Z_{23}Z_{24}, & A_6 \! &=\! X_{4}X_{9}X_{14}X_{18}Z_{25}Z_{26},\nonumber\\
A_7 \! &=\! X_{5}X_{10}X_{15}X_{19}X_{22}Z_{27}, & A_8 \! &=\! X_{6}X_{11}X_{16}X_{20}X_{23}X_{25},\nonumber\\
A_9 \! &=\! X_{12}X_{17}X_{21}X_{24}X_{26}X_{27}.
\end{align}
This set $\{A_i\}_{i=1}^9$ is in fact the edge SAUR of $\bar{C}_9$.
Direct calculation of the minimal and maximal eigenvalues of $H$ is hard since its dimension is $2^{27}$.
An easy numerical calculation shows that $\beta(\bar{C}_9) = 2.057505$, which implies that the maximal singular value of $H$ is bounded by $\sqrt{9\beta(\bar{C}_9)} = 4.303201$, where we have utilized Eq.~\eqref{eq:beta_bound} by setting $w_i$'s equal to $1$. This bound is in fact tight, as we can verify by converting $\{A_i\}$ into the standard SAUR with an auxiliary system as in Theorem~\ref{thm:saur}.

For a general frustration graph with $n$ vertices, the dimension of the edge SAUR is $2^{|E|}$, where $|E|$ is the number of edges in the frustration graph, and typically $|E| = O(n^2)$. Hence, the direct evaluation of the ground-state energy is in dimension $2^{O(n^2)}$. The estimation as in Eq.~\eqref{eq:beta_bound} deals only with a graph with $n$ vertices, which could be much simpler.

\section{Conclusion and discussion}
\label{sec:conclusion}
The deep connection between sums of squares of expectations of Pauli strings on the one hand, and parameters of the corresponding frustration graph on the other, has already been observed and successfully used in ~\cite{de2022uncertainty,hastings2022optimizing}. In \cite{de2022uncertainty} it was used to derive upper and lower bounds on uncertainty relations in terms of the independence number and in terms of the Lov\'{a}sz number, respectively. Conceptually similar results where employed by \cite{hastings2022optimizing} in the context of many-body Hamiltonians.  
Those initial findings opened up a broader perspective with lots of new questions to answer, setting one of the starting points for the present work. 

We investigate the joint numerical range of a set of Pauli strings, as the general object that encodes the answers to a whole cornucopia of interesting questions, and of which we have here only scratched the surface.  
In our setting, the previously discussed bounds on sums of squares can be cast in the nice geometrical picture of ellipsoids encompassing $J(S)$. A central finding here, and another starting point for our study, is the insight that the structure of these ellipsoids is solely determined by the frustration graph and not by a particular realization of Pauli strings itself.  Due to this we introduce and investigate the invariant $\beta(G)$ as a new graph parameter.  
This shift of attention towards the graph then brought us directly to employing  an algebraic description that takes a particular graph as input and provides an axiomatically defined algebra as output. These algebras are known as quasi-Clifford algebras, and their investigation goes back to works of Gastineau-Hills  \cite{gastineau1981systems,gastineau1982quasi} in the 1980s, which gave a solid foundation of results on which we could build.
Furthermore, dropping parts of these axioms then naturally leads to relaxations and hence ultimately to outer bounds on the joint numerical range. This approach gives rise to what we call here SARA, SAURA, and SAR. 
In conclusion, we find that taking a graph-theoretic approach instead of focusing on a set of exact operators opened up an unexpectedly rich new perspective leading in many directions.

First, the comparison with the graph-theoretic approach in other fields builds up a channel, through which we can convert known results into the field of joint numerical range. In current work, the graph-theoretic approach in contextuality leads to the proof of the tight bound for the SAURA case, and the generalization of the SAURA case. 
Secondly, the graph-theoretic approach can pick up the most relevant representations and discover new crucial gradients like purity. In the SAUR case, only after proving that the upper bound is always achievable by the standard representation, it becomes possible to enumerate graphs to compare the upper bound with other graph parameters. As it turned out, the upper bound beta number in the SAUR case is indeed a new graph parameter, which is different from both the independence number and the Lov\'{a}sz number. According to the known evidence, the beta number can be used as a quite good approximation of the independence number. Thirdly, the graph-theoretic approach provides another level, i.e., the graph level, to study the joint numerical range. By considering graph operations and special graphs, we can characterize the beta number and get the beta number for large graphs, or large sets of operators, which might be impossible for direct numerical calculations. Hence, the graph-theoretic approach not only deepens and expands the field of joint numerical range, it also connects to other fields like quantum contextuality and graph theory. In our current work, we have also developed numerical methods for the estimation of the upper bound from below and from above. 
At last, we generalized this approach to general self-adjoint operators which do not need to be unitaries.

However, there are still many open problems remaining, some of which we highlight here:
\begin{itemize}
    \item How to calculate the beta number, especially approximate it from above, more efficiently?
    \item The beta number is closely related to the algebra generated by the standard SAUR. How to characterize the beta number from the algebraic graph theory? In the case that the graph has the whole algebra as its SAUR, Ref.~\cite{de2018state} already has a complete characterization. 
    \item Denote $\otimes$ the OR product of graphs. Does $\beta(G_1\otimes G_2) = \beta(G_1) \beta(G_2)$ hold? If this is true, then the beta number could be closely related to the Shannon capacity of the graph~\cite{fritz2021unified}.
    \item Could the graph-theoretic approach be applied to the variance-based criteria for quantum correlations, like the one for entanglement~\cite{hofmann2003violation}?
    \item How could we develop the graph-theoretic approach for other kinds of uncertainty relations? The fractional packing number $\alpha^\star(G)$ might play an important role, which is defined as the maximum of $\sum_{i=1}^n x_i$ where $x_i\ge 0$ and $\sum_{i\in C} x_i \le 1$ for any clique $C$ in graph $G$. For example, for the quantum entropic uncertainty relation of a set of anti-commuting self-adjoint unitaries, it is known~\cite{wehner2008higher} that $\sum_{i\in C} [1-H(S_i|\rho)] \le 1$ holds and $0 \le H(S_i|\rho) \le 1$, where $H(S|\rho)$ is the Shannon entropy of the statistics from the two-outcome measurement $S$ with the state $\rho$. This leads to $\sum_{i=1}^n H(S_i|\rho) \ge n-\alpha^\star(G)$, where $G$ is the anti-commutativity graph of $\{S_i\}$.
\end{itemize}

{\it Acknowledgments.---}
The authors thank Carlos de Gois, Felix Huber, Gereon Ko{\ss}mann, Kiara Hansenne, Nikolai Wyderka,  Otfried G\"{u}hne, Reinhard F. Werner and Ties Ohst for insightful discussions, and in particular Dolly Vook von Katterbar and Pete Club for invaluable comments on ground level optimization.
Z.P.X. acknowledges support from {National Natural Science Foundation of China} (grant no.~12305007), 
Anhui Provincial Natural Science Foundation (grant no.~2308085QA29), the Deutsche Forschungsgemeinschaft (DFG, German Research Foundation, project numbers 447948357 and 440958198), the Sino-German Center for Research Promotion (project M-0294), the ERC (Consolidator Grant 683107/TempoQ), and the Alexander von Humboldt Foundation.
R.S. acknowledges financial support by the Quantum Valley Lower Saxony, Quantum Frontiers, and by the BMBF projects ATIQ, QuBRA and CBQD.
A.W. was supported by the European Commission QuantERA grant ExTRaQT (Spanish MICIN project PCI2022-132965), by the Spanish MINECO (projects PID2019-107609GB-I00 and PID2022-141283NB-I00) with the support of FEDER funds, by the Spanish MICIN with funding from European Union NextGenerationEU (PRTR-C17.I1) and the Generalitat de Catalunya, by the Spanish MTDFP through the QUANTUM ENIA project: Quantum Spain, funded by the European Union NextGenerationEU within the framework of the ``Digital Spain 2026 Agenda'', by the Alexander von Humboldt Foundation, and by the Institute for Advanced Study of the Technical University Munich.

\appendix
\onecolumngrid

\section{Further results concerning SAURA}
\label{sec:further}
\subsection{Tighter bound based on purity} Note that the purity can be written as $\tr \rho^2  = (1+x^2)/d$, which provides the intuition that the purity of the state could affect the joint expectation values. A direct observation is that $\sum_i \mean{S_i}^2 = 0$ always holds for the maximally mixed state. By making use of the information of purity, we can improve the estimation of joint expectation values.
\begin{theorem}
    For a given set of $d$-dimensional observables $\{S_i\} \in \saura(G)$, and a state $\rho$, we have
    \begin{equation}\label{eq:mixedState}
        \sum_i \mean{S_i}_{\rho}^2 \le \min\{[d(\tr\rho^2) - 1]\vartheta(G), \vartheta(G)\}. 
    \end{equation}
    Furthermore, for any given graph $G$ and the purity of state, the upper bound in Eq.~\eqref{eq:mixedState} is always tight.
\end{theorem}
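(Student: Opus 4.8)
The plan is to prove the two bounds inside the minimum separately and then take the smaller. The estimate $\sum_i\mean{S_i}_\rho^2\le\vartheta(G)$ is exactly Theorem~\ref{thm:thetavartheta} (or the short argument following it) and holds for every state, so the new content is the purity-dependent bound $\sum_i\mean{S_i}_\rho^2\le(d\tr\rho^2-1)\vartheta(G)$. Throughout I use that each $S_i$ is traceless -- automatic for Pauli strings and forced at every non-isolated vertex, since $i\sim j$ gives $S_jS_iS_j=-S_i$ and hence $\tr S_i=-\tr S_i$.

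For the purity bound I would work in the real inner-product space of $d\times d$ Hermitian matrices with the Hilbert--Schmidt product $\langle A,B\rangle:=\tr(AB)$, and read $\vartheta(G)$ through its orthonormal-representation description (Proposition~\ref{pro:main_results}). Put $\tilde S_i:=S_i/\sqrt d$; from $S_i^\dagger=S_i$, $S_i^2=\id$ one gets $\langle\tilde S_i,\tilde S_i\rangle=1$, and if $i\sim j$ then $\tr(S_iS_j)=-\tr(S_jS_i)=-\tr(S_iS_j)=0$, so $\langle\tilde S_i,\tilde S_j\rangle=0$: thus $\{\tilde S_i\}$ is an orthonormal representation of $G$. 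Writing $R:=\rho-\id/d$ (Hermitian, orthogonal to $\id$, with $\|R\|^2=\tr\rho^2-1/d$) and noting $\mean{S_i}_\rho=\langle S_i,\rho\rangle=\langle S_i,R\rangle=\sqrt d\,\langle\tilde S_i,R\rangle$ because $\tr S_i=0$, the definition of $\vartheta(G)$ via orthonormal representations gives, for $R\ne0$ and $\hat R:=R/\|R\|$,
\[
 \sum_i\langle\tilde S_i,\hat R\rangle^2 \le \max_{\|\psi\|=1}\sum_i\langle\tilde S_i,\psi\rangle^2 \le \vartheta(G).
\]
Multiplying by $d\|R\|^2$ yields $\sum_i\mean{S_i}_\rho^2=d\|R\|^2\sum_i\langle\tilde S_i,\hat R\rangle^2\le(d\tr\rho^2-1)\vartheta(G)$; the case $\rho=\id/d$ is trivial since both sides vanish.

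For tightness, fix $d$ (necessarily even once $G$ has an edge, since the $S_i$ are traceless) and a target purity $p\in[1/d,1]$. Take an optimal orthonormal representation $\{|v_i\rangle\}$ of $G$ with optimal handle $|u\rangle$ (so $\sum_i\langle v_i|u\rangle^2=\vartheta(G)$; for edgeless $G$ take all $v_i=u$), anti-commuting traceless self-adjoint unitaries $\{A_k\}$ on $\mathbb C^d$, and set $S_i=\sum_k v_{ik}A_k\in\saura(G)$ and $T=\sum_k u_kA_k$ (so $T^2=\id$). A short computation from $\{S_i,T\}=2\langle v_i|u\rangle\,\id$ (Lemma~\ref{lm:vec2mat}) gives $P_+S_iP_+=\langle v_i|u\rangle\,P_+$ with $P_+=(\id+T)/2$ the rank-$d/2$ projector onto the $+1$-eigenspace of $T$; hence any state supported there has $\mean{S_i}=\langle v_i|u\rangle$ and $\sum_i\mean{S_i}^2=\vartheta(G)$, and one can pick such a state of any purity in $[2/d,1]$ -- this handles $p\ge2/d$, where the minimum equals $\vartheta(G)$. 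For $1/d\le p\le2/d$, instead use $\rho=\tfrac1d(\id+xT)$ with $x=\sqrt{dp-1}\in[0,1]$: this is a legitimate state (as $\|T\|_\infty=1$) with $\tr\rho^2=(1+x^2)/d=p$ and $\mean{S_i}_\rho=x\langle v_i|u\rangle$, so $\sum_i\mean{S_i}_\rho^2=x^2\vartheta(G)=(dp-1)\vartheta(G)$, which equals the minimum since $x^2\le1$.

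The main obstacle is conceptual: recognising that the right object is the orthonormal representation $\{S_i/\sqrt d\}$ inside Hilbert--Schmidt space, \emph{not} the vector representation attached to a particular Pauli realization, so that the bound covers \emph{all} of $\saura(G)$ -- including members like $\{X\id,Y\id,Z\id,ZZ\}$ that do not arise from an orthonormal representation -- and so that the dimension-free form of $\vartheta$ can be invoked. The remaining points are routine but need care: the real-versus-complex reduction of the Hermitian-matrix inner product (so the step $\langle\tilde S_i,\hat R\rangle^2\le\vartheta(G)$ is legitimate), the degenerate non-traceless/isolated-vertex situation, and, on the tightness side, verifying the compression identity $P_+S_iP_+=\langle v_i|u\rangle P_+$ and that the two purity regimes together exhaust $[1/d,1]$ (the family $\rho=\tfrac1d(\id+xT)$ by itself reaches purities only up to $2/d$).
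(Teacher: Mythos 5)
Your proof is correct and follows essentially the same route as the paper: the purity-dependent bound is obtained by viewing $\{S_i/\sqrt d\}$ as an orthonormal representation of $G$ in Hilbert--Schmidt space with handle $(\rho-\id/d)/\|\rho-\id/d\|$ (this is exactly the mechanism behind Theorem~\ref{thm:orthogonalLovasz}), and tightness comes from the family $\rho=(\id+x\sum_k u_kA_k)/d$ built on Lemma~\ref{lm:vec2mat}. If anything, your version is slightly more complete: you explicitly cover purities above $2/d$ via the compression onto the $+1$-eigenspace of $T$ (which the paper leaves implicit), and the isolated-vertex caveat you flag --- where $S_i$ need not be traceless (e.g.\ $S_i=\id$) and the purity-dependent bound can fail at $\rho=\id/d$ --- is a gap in the theorem's statement, not in your argument.
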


The discussion at the beginning of this section works as a constructive proof.
We remark that $[d(\tr\rho^2) - 1]\vartheta(G)\le \vartheta(G)$ always hold when the dimensional is $2$, like the case in Example~\ref{ex:pentagon}. 
Notice that $\tr \rho^2$ is related to the linear entropy of the state, the inequality shows how the entropy of the state affects the joint expectation values. It is interesting to see that the linear entropy of the state affects the joint expectation values only when the linear entropy is large enough, i.e., when $\tr \rho^2 \le 2/d$. This happens when the temperature of the thermal state is high, or the system is highly entangled with the environment.

\subsection{Relaxation of anti-commutation relation}
The anti-commutation relation of operators leads to the orthogonality of them in the sense of trace product. That is, if $\{S_i, S_j\} = 0$, then $\tr S_iS_j = 0$. However, the converse is not necessarily true, e.g., $S_1 = XX$ and $S_2 = ZZ$. For convenience, denote $|S\rangle$ as the vector obtained by flattening the operator $S$ row by row. With this notation, $\tr S_iS_j = \langle S_i|S_j\rangle$.  As we will see later, $\{S_i, S_j\} = 0$ also implies $\tr S_i = \tr S_j = 0$. Hence, $\mean{S_i}_{\rho} = \langle S_i|\tilde{\rho}\rangle$, where $\tilde{\rho} = \rho - \id/d$ and $d$ is the dimension. Notice that $\langle \tilde{\rho}|\tilde{\rho}\rangle = \tr\rho^2 - 1/d$, $\langle S_i|S_i\rangle = d$. By comparing with the definition of Lov\'{a}sz number, we have a generalization of Theorem~\ref{thm:thetavartheta}.
\begin{theorem}\label{thm:orthogonalLovasz}
    For a given graph $G$ and its $d$-dimensional orthogonality representation with $\{S_i\}$ such that $\langle S_i|S_i\rangle = d$ and $\langle S_i|S_j\rangle = 0$ if $i\sim j$, then
    \begin{equation}
        \sum_i \mean{S_i}_{\rho}^2 \le [d(\tr\rho^2) - 1]\vartheta(G).
    \end{equation}
\end{theorem}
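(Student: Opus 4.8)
\textbf{Proof plan for Theorem~\ref{thm:orthogonalLovasz}.}
The plan is to reduce this to the variational characterization of the Lov\'asz number by viewing the normalized traceless part of $\rho$ as a unit vector that can be inserted into the orthogonality representation $\{|S_i\rangle\}$. First I would record the elementary fact that any operator $S_j$ anti-commuting with some $S_i$ (with $S_i^2 = \id$, or more generally with $S_i$ invertible) is traceless: from $S_iS_j = -S_jS_i$ we get $\tr S_j = \tr (S_i^{-1} S_i S_j) = -\tr(S_i^{-1}S_jS_i) = -\tr S_j$, hence $\tr S_j = 0$; the same argument applied with the roles swapped gives $\tr S_i = 0$. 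In the present setting every vertex of $G$ is assumed to have at least one neighbour (isolated vertices contribute trivially and can be treated separately, or $G$ has no isolated vertices by hypothesis on the representation), so $\tr S_i = 0$ for all $i$, and therefore $\mean{S_i}_\rho = \tr(S_i\rho) = \tr\bigl(S_i(\rho - \id/d)\bigr) = \langle \overline{S_i} | \tilde\rho\rangle$ in the flattening notation, where $\tilde\rho = \rho - \id/d$.

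Next I would set up the normalization. We have $\langle \tilde\rho|\tilde\rho\rangle = \tr \rho^2 - 1/d$ and $\langle S_i|S_i\rangle = \tr S_i^\dagger S_i = \tr S_i^2 = d$. Write $\mu^2 := \langle\tilde\rho|\tilde\rho\rangle = \tr\rho^2 - 1/d$; if $\mu = 0$ then $\rho$ is maximally mixed and the left-hand side is $0$, so assume $\mu > 0$. Define the unit vectors $|u_i\rangle := |\overline{S_i}\rangle/\sqrt{d}$ and $|\phi\rangle := |\tilde\rho\rangle/\mu$. By the orthogonality hypothesis $\langle u_i|u_j\rangle = \langle S_i|S_j\rangle/d = 0$ whenever $i\sim j$, so $\{|u_i\rangle\}$ together with the unit vector $|\phi\rangle$ is an admissible configuration in the definition of the Lov\'asz-theta body. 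A small subtlety: the $|u_i\rangle$ live in a complex matrix space, while $\th(G)$ in Proposition~\ref{pro:main_results} is phrased over $\mathbf{R}^n$; this is harmless because $\vartheta(G)$ is the same whether one optimizes over real or complex orthonormal representations, so one may either invoke that standard fact or replace $|u_i\rangle, |\phi\rangle$ by the real/imaginary-part doubling into $\mathbf{R}^{2d^2}$, which preserves all inner products.

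Now I would just compute: $\mean{S_i}_\rho^2 = |\langle\overline{S_i}|\tilde\rho\rangle|^2 = d\,\mu^2\,|\langle u_i|\phi\rangle|^2$, so
\begin{equation}
    \sum_i \mean{S_i}_\rho^2 = d\,\mu^2 \sum_i |\langle u_i|\phi\rangle|^2 \le d\,\mu^2\, \vartheta(G) = \bigl[d(\tr\rho^2) - 1\bigr]\vartheta(G),
\end{equation}
where the inequality is precisely the statement that $(|\langle u_i|\phi\rangle|^2)_i \in \th(G)$ and $\vartheta(G) = q(\th(G))$ is the maximum of $\sum_i v_i$ over $\th(G)$. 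This finishes the proof. I do not expect a genuine obstacle here; the only points requiring care are the two bookkeeping items flagged above — establishing tracelessness of the $S_i$ (which needs that every vertex is non-isolated, or a separate easy argument for isolated vertices) and the real-versus-complex version of the Lov\'asz body — and checking that the edge case $\tr\rho^2 = 1/d$ is handled. The theorem is really just Theorem~\ref{thm:thetavartheta}/Example~\ref{ex:pentagon}'s second proof with the purity factor tracked explicitly, so the work is in the setup, not in any hard estimate.
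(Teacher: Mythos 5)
Your proposal is correct and takes essentially the same route as the paper: the paper's own justification of Theorem~\ref{thm:orthogonalLovasz} is precisely the lead-in paragraph (the flattening identification $\mean{S_i}_{\rho}=\langle S_i|\tilde{\rho}\rangle$, the norms $\langle\tilde{\rho}|\tilde{\rho}\rangle=\tr\rho^2-1/d$ and $\langle S_i|S_i\rangle=d$, and comparison with the definition of the Lov\'asz number), which you have simply spelled out, including the legitimate real-versus-complex remark. One caveat: your derivation of $\tr S_i=0$ uses anti-commutation, which is exactly the hypothesis this theorem relaxes — from $\langle S_i|S_j\rangle=0$ alone tracelessness does not follow (and without it the bound fails, e.g.\ $S_i=\id$ on an isolated vertex with $\rho=\id/d$), so tracelessness must be carried over as an explicit hypothesis; this imprecision is inherited from the paper's own statement rather than introduced by you.
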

We have two remarks: first, the constructive proof of Theorem~\ref{thm:thetavartheta} implies that the bound in Theorem~\ref{thm:orthogonalLovasz} is tight whenever $\tr \rho^2 \le 2/d$.
Secondly,
Lemma~\ref{lm:vec2mat} and the vectorization of matrices, Theorem~\ref{thm:thetavartheta} and Theorem~\ref{thm:orthogonalLovasz} give hints to the similarity of the role of $\vartheta(G)$ in quantum contextuality and joint expectation values. 

\section{Numerical methods}\label{sec:numerical}
The independence number $\alpha(G)$ is an important graph parameter, which has application in the characterization of channel capacity. The calculation of $\alpha(G)$ is NP-hard~\cite{karp1972reducibility}, the calculation of $\vartheta(G)$ is just a semi-definite programming. Thus, $\vartheta(G)$ can be used as an approximation of $\alpha(G)$. 
%Shannon capacity, the difficulty. Quantum algorithm for the estimation in the future.
Since $\beta(G)$ is a tighter upper bound of $\alpha(G)$ than $\vartheta(G)$, efficient methods to estimate $\beta(G)$ are necessary. 
As we have proven, $\beta(G) = \sos{\{\bar{S}_i\}}$ where $\{\bar{S}_i\}$ is any standard SAUR of $G$, a more general problem is to estimate $\sos{\{S_i\}}$ for a given set of $\{S_i\}$. For example, there might be only some anti-commutation relations in $\{S_i\}$. In this section, we provide two efficient see-saw methods to give lower bounds of $\beta(G)$, and one complete hierarchy of semi-definite programming to approximate $\beta(G)$ from upper bound.

\subsection{Lower bounds}
We notice that 
\begin{align}\label{eq:withSymm}
    \sos{\{S_i\}} &= \max_{\rho}\sum_{i=1}^n \tr\left[(\rho\otimes\rho) (S_i\otimes S_i) \right]\\ 
    &= \max_{\rho}\tr\left[(\rho\otimes\rho) \sum_{i=1}^n (S_i\otimes S_i) \right]. 
\end{align}
On the one hand, we have
\begin{align}
   \tr\left[(\rho_1\otimes\rho_2) \sum_{i=1}^n (S_i\otimes S_i) \right] &\le \left( \sum_{i=1}^n \langle S_i\rangle_{\rho_1}^2 \sum_{i=1}^n \langle S_i\rangle_{\rho_2}^2 \right)^{1/2}\nonumber\\ 
   &\le \sos{\{S_i\}}.
\end{align}
On the other hand,
\begin{align}
   \sos{\{S_i\}} \le \max_{\rho_1,\rho_2} \tr\left[(\rho_1\otimes\rho_2) \sum_{i=1}^n (S_i\otimes S_i) \right].
\end{align}
Consequently, we have
\begin{align}\label{eq:product_state}
   \sos{\{S_i\}} = \max_{\rho_1,\rho_2} \tr\left[(\rho_1\otimes\rho_2) \sum_{i=1}^n (S_i\otimes S_i) \right].
\end{align}
For a given $\rho_1$, the optimal $\rho_2$ corresponds to the eigenstate with the maximal singular value of $\sum_{i=1}^n (\tr\rho_1S_i) S_i$. Similarly, for a given $\rho_2$, the optimal $\rho_1$ corresponds to the eigenstate with the maximal singular value of $\sum_{i=1}^n (\tr\rho_2S_i) S_i$. Hence, we can use a See-Saw method to estimate $\sos{\{S_i\}}$, where each step is only an SVD decomposition. We remark that this See-Saw method can be generalized for any polynomials of mean values by considering $\rho_1, \ldots, \rho_k$, where $k$ is the order of the polynomial.

Another observation is that
\begin{align}
    \sos{\{S_i\}} = \max_{\rho,\ c\ s.t.\ ||c||_2=1} \left(\sum_i c_i \mean{S_i}_\rho \right)^2,
\end{align}
which leads to another See-Saw method for the lower bound. As it turns out, for a given $\rho$, the optimal vector $c$ is the normalized vector of $(\mean{S_i}_\rho)_{i=1}^n$. For a given vector $c$, the optimal $\rho$ corresponds to the eigenstate with the maximal singular value of $\sum_{i=1}^n c_i S_i$.

Since it is unnecessary to require that $S_i^2 = \id$ in those two methods, they can be naturally extended to the weighted version, i.e., $q(\{S_i\},w)$.

\subsection{Upper bounds}
According to Eq.~\eqref{eq:product_state} and the linearity on $\rho_1\otimes \rho_2$ as a whole state, we have
\begin{align}
   \sos{\{S_i\}} = \max_{\gamma \in \operatorname{SEP}} \tr\left[\gamma \sum_{i=1}^n (S_i\otimes S_i) \right],
\end{align}
where $\operatorname{SEP}$ is the set of separable states.

Our first observation is that the maximum can always be achieved in the case that $\gamma$ is a pure state, and the set of pure separable states can be fully characterized by the PPT condition and the rank-$1$ constraint.

Thus, we can reformulate the optimization into a rank-constrained problem,
\begin{align}
    \sos{\{S_i\}} = & \max_{\gamma} \tr\left[ \gamma \sum_i (S_i\otimes S_i) \right]\nonumber\\
    \text{such that } & \tr \gamma = 1, \gamma \succeq 0, \gamma^{T_2} \succeq 0, \label{eq:sdpCond}\\
                      & F_{12}\gamma = \gamma,\label{eq:symmetryCond}\\
                      & \operatorname{rank}\gamma = 1,\label{eq:rankCond}
\end{align}
where $F_{12}$ is the swap operator $\sum_{ij} |ij\rangle\langle ji|$, $T_2$ means the partial transpose on the second party. 

According to Eq.~\eqref{eq:withSymm}, the condition in Eq.~\eqref{eq:symmetryCond} can be added without changing the outcome. 

As proposed in Ref.~\cite{yu2022quantum}, there is a complete hierarchy of relaxation with semi-definite programming for the rank-constrained problem, which leads to such a complete hierarchy for the problem in our consideration.
However, this technique is not so practical here. Denote $d$ the dimension of $S_i$, then the dimension of $\gamma$ is $d^2$. The size of the matrix on the $k$-th level is then $d^{2k}$.
Even if $d=8$ like in the standard SAUR of $\bar{C}_7$, by taking $k=2$, we have $d^{2k} = 4096$, which is quite hard for a normal computer. For practical purposes, we propose the following relaxation of the rank-$1$ constraints:
\begin{align}
    \sos{\{S_i\}} = & \max_{\gamma} \tr\left[ \gamma \sum_i (S_i\otimes S_i) \right]\nonumber\\
    \text{such that } & \tr \gamma = 1, \gamma \succeq 0, \gamma^{T_2} \succeq 0, \nonumber\\
                      & F_{12}\tau = F_{23}\tau = \tau,\nonumber\\
                      & \tr_3 \tau = \gamma,
\end{align}
which can be seen as the $3/2$-level of the hierarchy. This technique is special for our case since the state $\gamma$ is already two copies of the state in the system of $\{S_i\}$. 

Another approach to make relaxation of $\operatorname{SEP}$ is to add more semi-definite conditions like the PPT condition and linear conditions like entanglement witness. We recommend the reader to look at Appendix B of Ref.~\cite{wyderka2020characterizing} for detailed discussions. The conditions in Eq.~\eqref{eq:extraConds} in the main text are such an example.

\section{Proofs of main results}
\label{sec:proofs}
%%%%%%%%%%%%%%%%%%%%%%%%%%%%%%%%%%%%%%%%%%%%%%%%%%%%%%%%%%%%%%%%%%%%%%%%%%%%%%%%%%%%%%%
\begin{theorem}\label{thm:thetavartheta2}
    $\SOS{\saura(G)} = \th(G)$ for any graph $G$.
\end{theorem}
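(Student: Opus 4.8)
\textbf{Proof proposal for Theorem~\ref{thm:thetavartheta2} ($\SOS{\saura(G)} = \th(G)$).}

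The plan is to prove the two inclusions separately, and in each direction the second, "concise" proof sketched after Example~\ref{ex:pentagon} in the main text is the one I would flesh out, since it is constructive in both directions and avoids the detour through orthogonal representations in full generality.

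For the inclusion $\th(G) \subseteq \SOS{\saura(G)}$, I would start from an arbitrary point of $\th(G)$: by definition this is $(|\langle \phi_0|v_i\rangle|^2)_i$ for some unit vectors $|v_i\rangle$ with $\langle v_i|v_j\rangle = 0$ whenever $i\sim j$, and some fixed unit vector $|\phi_0\rangle = |u\rangle$. Working in a real Hilbert space of dimension $d$ (the one hosting the $|v_i\rangle$), fix a collection $\{A_k\}_{k=1}^d$ of anti-commuting self-adjoint unitaries (these exist, e.g.\ from a Clifford algebra representation, on a Hilbert space of dimension $2^{\lceil d/2\rceil}$). Set $S_i := \sum_k v_{ik} A_k$. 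By Lemma~\ref{lm:vec2mat}, $\{S_i,S_j^\dagger\}/2 = \langle v_i|v_j\rangle \id$; in particular $S_i = S_i^*$, $S_i^2 = \langle v_i|v_i\rangle\id = \id$, and $\{S_i,S_j\} = 0$ when $i\sim j$, so $\{S_i\}\in\saura(G)$. Now take the state $\rho := (\id + \sum_k u_k A_k)/d$. One checks $\rho\succeq 0$ because $(\sum_k u_k A_k)^2 = \|u\|^2\id = \id$ forces its spectrum into $\{\pm 1\}$, and $\tr\rho = 1$; moreover $\mean{S_i}_\rho = \frac1d\tr\big[(\id + \sum_k u_k A_k)\sum_\ell v_{i\ell}A_\ell\big] = \langle v_i|u\rangle$, using $\tr A_\ell = 0$ and $\tr(A_k A_\ell) = d\,\delta_{k\ell}$. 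Hence $(\mean{S_i}_\rho^2)_i = (\langle v_i|u\rangle^2)_i$ is the chosen point of $\th(G)$, and it lies in $\SOS{\saura(G)}$.

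For the reverse inclusion $\SOS{\saura(G)} \subseteq \th(G)$, take any $\{S_i\}\in\saura(G)$ acting on $\mathcal H$ and any $\rho\in\mathcal D(\mathcal H)$. Purify or just work with $\rho$ directly: consider the (real) Gram-type data $|w_i\rangle$ living in $\mathcal H\otimes\mathcal H$ or, more economically, define vectors via the GNS-type inner product $\langle X,Y\rangle_\rho := \tr(\rho\, \tfrac12\{X^*,Y\})$ — note $\langle S_i,S_i\rangle_\rho = \tr(\rho\, S_i^2) = 1$ and, for $i\sim j$, $\langle S_i,S_j\rangle_\rho = \tr(\rho\,\tfrac12\{S_i,S_j\}) = 0$, so the $S_i$ give a valid orthogonality representation of $G$ in this inner-product space once we quotient by the null space. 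Taking $|\phi_0\rangle$ to be the vector representing $\id$ (which is a unit vector since $\langle\id,\id\rangle_\rho = 1$), we get $\langle\phi_0, S_i\rangle_\rho = \tr(\rho S_i) = \mean{S_i}_\rho$, hence $\mean{S_i}_\rho^2 = |\langle\phi_0, S_i\rangle_\rho|^2$, which is of exactly the form defining $\th(G)$. Since $\th(G)$ is insensitive to the dimension of the ambient space, this shows the point $(\mean{S_i}_\rho^2)_i$ lies in $\th(G)$.

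The only genuinely delicate point — and the one I would be most careful about — is the orthogonality-representation step in the reverse direction: one must confirm that the semidefinite bilinear form $\langle X,Y\rangle_\rho$ is indeed an inner product after quotienting by its kernel, that the images of the $S_i$ remain unit vectors (this uses $S_i^2=\id$ crucially, so that $\langle S_i, S_i\rangle_\rho = 1$ never degenerates) and that the anticommutation $\{S_i,S_j\}=0$ for $i\sim j$ survives as exact orthogonality rather than mere approximate orthogonality. If any $S_i$ were merely $S_i^2\leq\id$ this step would fail, which is precisely why the SARA generalization (Theorem~\ref{thm:sarasaura}) needs a separate, slightly more involved argument; for SAURA itself everything is clean. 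A secondary bookkeeping point is keeping the Hilbert spaces real (or noting that $\th(G)$ defined with complex vectors coincides with the real version), so that the construction $S_i = \sum_k v_{ik}A_k$ is well-defined.
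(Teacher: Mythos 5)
Your proposal is correct, and the two halves sit differently relative to the paper. The inclusion $\th(G)\subseteq\SOS{\saura(G)}$ is essentially the paper's own ``concise proof'': build $S_i=\sum_k v_{ik}A_k$ from the optimal orthogonal representation and evaluate on $\rho=(\id+\sum_k u_kA_k)/d$; your version even has the small advantage of giving the pointwise set inclusion directly, whereas the appendix proof phrases achievability through the support functionals $\sos{\cdot,w}$ and a Cauchy--Schwarz argument on the top eigenvector of $\sum_i a_i\sqrt{w_i}S_i$. The reverse inclusion is where you genuinely diverge: you run a GNS-type construction, turning $X,Y\mapsto\tr(\rho\,\tfrac12\{X^*,Y\})$ into an inner product whose unit vectors $S_i$ (using $S_i^2=\id$) and handle $\id$ realize $(\mean{S_i}_\rho^2)_i$ as a point of $\th(G)$ directly. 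The paper instead bounds $\sum_i w_i\mean{A_i}^2$ by $\lambda_{\max}$ of the weighted anticommutator Gram matrix and compares with the eigenvalue characterization of $\vartheta(G,w)$, deferring the general-$w$ case to Theorem~\ref{thm:sarasaura}. What each buys: your GNS route is self-contained and arguably cleaner for SAURA, while the paper's route is chosen because it survives the relaxation $A_i^2\le\id$ (the diagonal of the Gram matrix only shrinks, so one lands in $\vartheta(G,w')$ with $w'\le w$), which is exactly the failure mode you correctly identify for your argument in the SARA case. The two caveats you flag are real but standard and dispatchable: the complex-versus-real theta body coincide (replace the Gram matrix of $(\phi_0,v_1,\dots,v_n)$ by its real part, which preserves the PSD, unit-diagonal, and edge-orthogonality constraints as well as the entries $\langle\phi_0|v_i\rangle$ after a phase adjustment, or simply invoke the field-independent SDP characterization of $\th(G)$), and dimension-independence of $\th(G)$ follows from the same Gram-matrix description. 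No gap.
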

\begin{proof}
  It is equivalent to prove that $\sos{\mathcal{S}_a(G),w} = \vartheta(G,w)$ for any non-negative weight vector $w$. Eq.~\eqref{eq:sumofsquare2} includes already the result that $\sos{\mathcal{S}_a(G),w} \le \vartheta(G,w)$ when all the elements of $w$ are just $1$. For the general non-negative weight vector $w$, we prove it later in Sec.~\ref{sec:sar}.

  To show that $\sos{\mathcal{S}_a(G),w} \ge \vartheta(G,w)$, we construct an exact $\{S_i\} \in \saura(G)$ such that $\sos{\{S_i\},w} = \vartheta(G,w)$.
  
  Denote $\{|v_i\rangle\}_{i=1}^n$ and $|u\rangle$ the OR of the graph $G$ and the state such that $\vartheta(G,w) = \sum_{i=1}^n w_i |\langle v_i|u\rangle|^2$, which can be assumed to be real without loss of generality.
  Denote $\{A_i\}_{i=1}^r$ a set of $d$-dimensional normalized traceless observables satisfying $\{A_i, A_j\}/2 = \delta_{ij} \id$, where $r$ is the dimension of $\{|v_i\rangle\}_{i=1}^n$. 
  By setting $S_i = \sum_{k=1}^r v_{i,k} A_k$, the $S_i$'s are Hermitian and $\{S_i, S_j\}/2 = \langle v_i|v_j\rangle \id$, which implies that $\{S_i\}$ is an SAURA of $G$. 
  
  For a given state $\rho$, denote $\mathcal{M}_\rho$ the matrix whose $(i,j)$-th element is $\sqrt{w_iw_j}\mean{\{S_i, S_j\}/2}_\rho$. In this special case, $\mean{\{S_i, S_j\}/2}_\rho = \langle v_i|v_j\rangle$ which is independent of the exact state $\rho$. Then $\lambda_{\max}(\mathcal{M}_\rho) = \vartheta(G,w)$. Denote $|a\rangle$ the eigenvector of $\mathcal{M}_\rho$ corresponding to the maximal eigenvalue, we have $\langle a|\mathcal{M}_\rho|a\rangle = \vartheta(G,w)$. 

  Denote by $|s\rangle$ the eigenstate of $\sum_i a_i \sqrt{w_i}S_i$ corresponding to the maximal eigenvalue and $\sigma = |s\rangle\langle s|$. Then
  \begin{align}\label{eq:lowerboundLovasz}
     \sum_{i} w_i\mean{S_i}_{\sigma}^2 &\ge \left(\sum_{i} a_i \sqrt{w_i}\mean{S_i}_{\sigma} \right)^2\nonumber\\
     & = \left(\mean{\sum_{i} a_i \sqrt{w_i}S_i}_{\sigma} \right)^2\nonumber\\
     & = \mean{\left(\sum_{i} a_i \sqrt{w_i}S_i \right)^2}_{\sigma}\nonumber\\
     & = \vartheta(G,w),
  \end{align}
  where the first line is from Cauchy-Schwarz inequality since $|a\rangle$ is normalized, the third line is by the definition of $\sigma$ and the last line is from the definition of $|a\rangle$ and the fact that $\mathcal{M}_\rho$ is independent of the state $\rho$. Finally, we have $\sos{\mathcal{S}_a(G),w} \ge \vartheta(G,w)$ and complete the proof.
\end{proof}
We remark that the construction in Lemma~\ref{lm:vec2mat} is crucial for the proof, as the last line in Eq.~\eqref{eq:lowerboundLovasz} may not hold for a general SAURA.

\begin{theorem}
    For a given graph $G$, $\sos{\{S_i\}} = \sos{\{\bar{S}_i\}}$ where $\{S_i\}$ is a SAUR of $G$ and $\{\bar{S}_i\}$ is a standard one.
\end{theorem}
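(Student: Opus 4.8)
The plan is to take Theorem~\ref{thm:saur} as the starting point: it supplies a unitary $U$ with $U S_i U^\dagger = \bar S_i \otimes D_i$, where $\{\bar S_i\}$ is a standard SAUR acting on a finite-dimensional space $\mathcal H$ and $\{D_i\}$ is a family of mutually commuting self-adjoint unitaries on an auxiliary space $\mathcal K$. Since $\sos{\cdot}$ is a function of the tuple of expectation values only and is manifestly invariant under a global unitary conjugation, I would first reduce to the case $S_i = \bar S_i \otimes D_i$. It then suffices to prove the two inequalities $\sos{\{S_i\}} \ge \sos{\{\bar S_i\}}$ and $\sos{\{S_i\}} \le \sos{\{\bar S_i\}}$ separately.

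For the inequality $\sos{\{S_i\}} \ge \sos{\{\bar S_i\}}$, I would use that the $D_i$ commute and satisfy $D_i^2 = \id$, hence admit a common eigenvector $|d\rangle$ with $D_i |d\rangle = \varepsilon_i |d\rangle$, $\varepsilon_i \in \{-1,+1\}$. Choosing an optimal state $\bar\rho$ for $\{\bar S_i\}$ (attained, since $\mathcal H$ is finite-dimensional) and setting $\rho = \bar\rho \otimes |d\rangle\langle d|$ gives $\mean{S_i}_\rho = \varepsilon_i \mean{\bar S_i}_{\bar\rho}$, so $\mean{S_i}_\rho^2 = \mean{\bar S_i}_{\bar\rho}^2$ for every $i$; summing yields $\sos{\{S_i\}} \ge \sum_i \mean{\bar S_i}_{\bar\rho}^2 = \sos{\{\bar S_i\}}$.

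For the reverse inequality, note that there are only $n$ of the $D_i$ and each is $\pm1$-valued, so their joint spectrum sits inside $\{-1,+1\}^n$ and $\mathcal K$ decomposes into the (at most $2^n$, hence finitely many) joint eigenspaces $\mathcal K = \bigoplus_{\vec\varepsilon} \mathcal K_{\vec\varepsilon}$, with $D_i = \sum_{\vec\varepsilon} \varepsilon_i Q_{\vec\varepsilon}$ for the projectors $Q_{\vec\varepsilon}$ onto $\mathcal K_{\vec\varepsilon}$. Thus $S_i = \sum_{\vec\varepsilon} \varepsilon_i (\bar S_i \otimes Q_{\vec\varepsilon})$. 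Given an arbitrary state $\rho$, put $p_{\vec\varepsilon} = \tr[(\id \otimes Q_{\vec\varepsilon})\rho]$ and let $\rho_{\vec\varepsilon}$ be the normalized conditional reduced state on $\mathcal H$ (for the $\vec\varepsilon$ with $p_{\vec\varepsilon} > 0$); then $\mean{S_i}_\rho = \sum_{\vec\varepsilon} \varepsilon_i p_{\vec\varepsilon} \mean{\bar S_i}_{\rho_{\vec\varepsilon}}$. Applying Cauchy-Schwarz in the index $\vec\varepsilon$ with weights $p_{\vec\varepsilon}$ (which sum to $1$, and $\varepsilon_i^2 = 1$) gives $\mean{S_i}_\rho^2 \le \sum_{\vec\varepsilon} p_{\vec\varepsilon} \mean{\bar S_i}_{\rho_{\vec\varepsilon}}^2$; summing over $i$ and swapping the order of summation, $\sum_i \mean{S_i}_\rho^2 \le \sum_{\vec\varepsilon} p_{\vec\varepsilon} \sum_i \mean{\bar S_i}_{\rho_{\vec\varepsilon}}^2 \le \sos{\{\bar S_i\}} \sum_{\vec\varepsilon} p_{\vec\varepsilon} = \sos{\{\bar S_i\}}$. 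Taking the supremum over $\rho$ completes this direction, and the two inequalities give the asserted equality.

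The one genuinely delicate point is the $\le$ direction, where the signs $\varepsilon_i$ picked up from $D_i$ depend jointly on $i$ and on the sector $\vec\varepsilon$, so the naive bound does not immediately factor; the resolution is the per-$i$ Cauchy-Schwarz over sectors, which converts the signed sector sum into a convex combination of the quantities $\sum_i \mean{\bar S_i}_{\rho_{\vec\varepsilon}}^2$, each bounded by $\sos{\{\bar S_i\}}$ by definition. Everything else (unitary invariance, existence of a common eigenvector of commuting involutions, the conditional-state bookkeeping, finiteness of the joint spectrum) is routine. The same argument goes through verbatim with nonnegative weights $w_i$ inserted, yielding $\sos{\{S_i\},w} = \sos{\{\bar S_i\},w}$ and hence $\SOS{\saur(G)} = \SOS{\{S_i\}}$ for any SAUR $\{S_i\}$.
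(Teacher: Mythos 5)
Your proposal is correct and follows essentially the same route as the paper's proof: both invoke Theorem~\ref{thm:saur} to reduce to $S_i=\bar S_i\otimes D_i$, decompose over the joint eigensectors of the commuting $D_i$'s, apply the per-$i$ Jensen/Cauchy--Schwarz step to turn the signed sector sum into a convex combination bounded by $\sos{\{\bar{S}_i\}}$, and use a common eigenvector of the $D_i$'s for the reverse inequality. The only cosmetic difference is that the paper first restricts to pure states by convexity and expands $U|\psi\rangle$ in the eigenbasis of the $D_i$'s, whereas you handle general states directly via conditional reduced states.
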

\begin{proof}
  From the convexity of $\sum_i \mean{S_i}^2$, we know that we only need to prove for the case that $\rho$ is a pure state $|\psi\rangle\langle \psi|$.
 Since $D_i$ commutes with each other, we can assume $D_i$'s are diagonal matrices. Denote $d_2$ the dimension of $D_i$'s, then we have the decomposition
 \begin{equation}
   U|\psi\rangle = \sum_{i=1}^{d_2} \sqrt{p_i} |\phi_i\rangle \otimes |i\rangle,
 \end{equation}
 where $p_i \ge 0$ and $\sum_i p_i = 1$.

 Hence,
 \begin{align}
   \mean{S_i} &
   = \sum_{kl} \sqrt{p_ip_j} \langle \phi_k|\bar{S}_i|\phi_l\rangle \langle k|D_i|l\rangle\nonumber\\
   &= \sum_k p_k \langle \phi_k|s_{ik} \bar{S}_i|\phi_k\rangle,
 \end{align}
 where $s_{ik} \in \{-1, 1\}$ is the $k$-th diagonal element in $D_i$.

 Then we have
 \begin{equation}
   \mean{S_i}^2 = \left( \sum_k p_k \langle \phi_k|s_{ik} \bar{S}_i|\phi_k\rangle \right)^2 \le \sum_k p_k \langle \phi_k|\bar{S}_i|\phi_k\rangle^2,
 \end{equation}
 which implies that
 \begin{align}
   \sum_i \mean{S_i}^2 &\le \sum_k p_k \left( \sum_i |\langle \phi_k|\bar{S}_i|\phi_k\rangle|^2 \right)\nonumber\\ 
   &\le \max_k \sum_i \langle \phi_k|\bar{S}_i|\phi_k\rangle^2.
 \end{align}
 Thus, $\sos{\{S_i\}} \le \sos{\{\bar{S}_i\}}$.

 On the other hand, denote $|\phi\rangle$ the optimal state for $\sos{\{\bar{S}_i\}}$ and $|\psi_0\rangle$ the common eigenstate for $D_i$'s, then 
 \begin{align}
   \sum_i \mean{S_i}^2_\rho = \sum_i \langle \phi|\bar{S}_i|\phi\rangle^2,
 \end{align}
 where $\rho = |\psi\rangle\langle \psi|$ and $ |\psi\rangle = U^\dagger[|\phi\rangle \otimes |\psi_0\rangle]$. Thus, we have $\sos{\{S_i\}} \ge \sos{\{\bar{S}_i\}}$. This finishes the proof.
\end{proof}

\begin{corollary}\label{cor:gh102}
  For the graph $G_{10}$ and $C_5$ in Fig.~\ref{fig:gh10}, 
  \begin{equation}
      \beta(G_{10}) = \alpha(G_{10}) = \beta(C_5) = \alpha(C_5) = 2.
  \end{equation}
\end{corollary}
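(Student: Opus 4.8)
The plan is to establish the claimed equalities by combining the already-proven lower bound $\beta(G) \ge \alpha(G)$ with matching upper bounds obtained from the structure of a standard SAUR. Since $\alpha(C_5) = 2$ is elementary and $\alpha(G_{10}) = 2$ can be read off the graph, the only nontrivial content is the upper bound $\beta(C_5) \le 2$ and $\beta(G_{10}) \le 2$. By Theorem~\ref{thm:standard} we may work with a fixed standard SAUR of each graph, so I would first write down the standard SAURs explicitly. For $C_5$, the one displayed in Sec.~\ref{sec:saur} is $S_1 = X\id,\ S_2 = \id Y,\ S_3 = Z\id,\ S_4 = ZX,\ S_5 = XZ$, acting on two qubits; for $G_{10}$ I would use the standard realization indicated in Fig.~\ref{fig:gh10}, which contains this $C_5$ as the induced subgraph highlighted in the figure.

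\textbf{Main step: bounding $q(\{\bar S_i\})$ for $C_5$.} The key is to show $\sup_\rho \sum_{i=1}^5 \langle S_i\rangle_\rho^2 \le 2$ for the two-qubit standard SAUR above. By convexity of $\rho \mapsto \sum_i \langle S_i\rangle_\rho^2$ it suffices to maximize over pure states $|\psi\rangle$ on $\mathbb{C}^2\otimes\mathbb{C}^2$, a four-real-parameter family up to phase. I would exploit the commutation structure: $S_1 = X\id$ and $S_3 = Z\id$ act only on the first qubit, while the five operators pair up into anticommuting cliques. Concretely, one can observe that $\{S_1, S_3, S_4 S_5\}$ or similar triples generate small Clifford-type subalgebras, and use the identity $\langle A\rangle^2 + \langle B\rangle^2 \le 1$ whenever $A, B$ are anticommuting $\pm1$-observables. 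Grouping $S_1, S_3$ (which anticommute) gives $\langle S_1\rangle^2 + \langle S_3\rangle^2 \le 1$; the remaining task is to show $\langle S_2\rangle^2 + \langle S_4\rangle^2 + \langle S_5\rangle^2 \le 1$ is \emph{not} what holds (it would give the wrong bound), so instead I expect the right bookkeeping is to pair $S_2$ with one of $S_4, S_5$ and absorb the third via a more careful operator inequality, or simply to parametrize $|\psi\rangle$ by its Schmidt decomposition $\sqrt{p}\,|00\rangle + \sqrt{1-p}\,e^{i\phi}|11\rangle$ after applying local unitaries, reducing the optimization to a function of $(p,\phi)$ and a couple of angles, then checking by calculus that the maximum is $2$. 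This reduction is where the real work lies.

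\textbf{Upper bound for $G_{10}$.} For $G_{10}$ I would argue that its standard SAUR, although acting on a larger Hilbert space, admits the same bound. One route: use the monotonicity of $\beta$ under taking induced subgraphs together with an explicit argument that adding the remaining five vertices of $G_{10}$ (with their particular adjacencies) does not push the sum of squares above $2$ — ideally because those extra vertices are attached in a way (e.g.\ each connected to a large clique, cf.\ Theorem~\ref{thm:additionc} and its corollary) that forces their contribution to be subsumed. If $G_{10}$ is built from $C_5$ by operations under which $\beta$ is controlled (vertex additions connected to dominating sets, or a decomposition matching Theorem~\ref{thm:additionc}), then $\beta(G_{10}) = \max$ or sum of pieces each equal to $2$ or smaller, and the result follows. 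Otherwise I would fall back on a direct computation with the explicit ten operators, using a computer algebra check that the largest eigenvalue of the relevant moment matrix equals $2$.

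\textbf{Anticipated obstacle.} The main difficulty is the $C_5$ upper bound $\beta(C_5) \le 2$: unlike the lower bound, it is not a soft argument and the naive pairwise-anticommutation inequalities do not immediately close the gap, so one must either find the right clique cover of the anticommutation relations among $\{S_i\}$ that yields exactly $2$, or carry out an honest two-qubit optimization. Since the paper elsewhere proves the general statement $\beta(C_m) = \alpha(C_m)$ (Theorem~\ref{thm:cycles}), the cleanest proof is simply to invoke that theorem for $m=5$; the $G_{10}$ part then reduces to identifying $G_{10}$'s decomposition or running the finite check, and the whole corollary becomes a short consequence of results proved earlier.
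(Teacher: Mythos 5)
Your skeleton (lower bound from $\beta\ge\alpha$, upper bound by an explicit computation on a standard SAUR, reduction to pure states by convexity) matches the paper's, but the proposal as written has a genuine gap in how the two upper bounds are obtained, and your fallback strategies for closing it either fail or are circular. The paper's actual proof attacks $G_{10}$ \emph{first}: its standard realization (Fig.~\ref{fig:gh10}) also lives on just two qubits, one parametrizes a general pure state of $\mathbb{C}^4$, and a direct calculation shows that $\sum_{i=1}^{10}\langle S_i\rangle^2$ is \emph{identically} equal to $2$ on pure states. This gives $\beta(G_{10})\le 2$ at once, and $\beta(C_5)\le\beta(G_{10})=2$ then follows from monotonicity under induced subgraphs. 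Your route inverts this order, and that is where it breaks: once you have $\beta(C_5)=2$, induced-subgraph monotonicity only gives $\beta(G_{10})\ge\beta(C_5)$, i.e.\ it points the wrong way for the bound you need on $G_{10}$; $G_{10}$ is not a join of two subgraphs, so Theorem~\ref{thm:additionc} and its corollary do not apply (and edge/vertex additions do not in general control $\beta$ from above, cf.\ Theorem~\ref{thm:eremoval}); and the "largest eigenvalue of the relevant moment matrix" is not a valid certificate, since $q(\{S_i\})=\max_{\rho_1,\rho_2}\tr[(\rho_1\otimes\rho_2)\sum_i S_i\otimes S_i]$ is a maximization over \emph{product} states, which can be strictly below $\lambda_{\max}\bigl(\sum_i S_i\otimes S_i\bigr)$.

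Two further points. First, invoking Theorem~\ref{thm:cycles} for $m=5$ is circular within this paper: its proof is an induction on odd cycles via Lemma~\ref{thm:ladder2}, whose base case is precisely $\beta(C_5)=\alpha(C_5)=2$, i.e.\ the present corollary (the independent proof of $\beta(C_5)=2$ is in Ref.~\cite{hastings2022optimizing}, which you would have to cite instead). Second, your instinct that "the right clique cover" might yield $2$ for $C_5$ cannot succeed: the clique cover number of $C_5$ is $3$, so pairwise-anticommutation inequalities can only give $\sum_i\langle S_i\rangle^2\le 3$. The honest finite optimization you defer to is therefore unavoidable, and it is exactly the computation the paper performs — just carried out for all ten operators of $G_{10}$ on $\mathbb{C}^4$ rather than for the five of $C_5$, which is what makes both equalities drop out simultaneously.
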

\begin{proof}
  Notice that $\beta(G) \ge \alpha(G)$, $\beta(G_{10}) \ge \beta(C_5)$ and $\alpha(G_{10}) = \alpha(C_5) = 2$, we only need to prove $\beta(G_{10}) = 2$.
  According to Theorem~\ref{thm:standard}, it is sufficient to consider the standard SAUR of $G_{10}$ as the one in Fig.~\ref{fig:gh10} and denote it by $\{S_i\}_{i=1}^{10}$.
  
  Because of the convexity of $\sum_{i} \mean{S_i}_{\rho}^2$ in terms of state $\rho$, we only need to consider the state to be a pure $4$-dimensional state
  \begin{align}
    |\psi\rangle = \{&\cos\theta_1, e^{it_1}\sin\theta_1\cos\theta_2, e^{it_2}\sin\theta_1\sin\theta_2\cos\theta_3,\nonumber\\ 
    &e^{it_3}\sin\theta_1\sin\theta_2\sin\theta_3\}.
  \end{align}
  A direct calculation shows that
  \begin{equation}\label{eq:gh10}
    \sum_{i=1}^{10} \mean{S_i}^2 = 2.
  \end{equation}
\end{proof}
The fact that $\beta(C_5) = 2$ has also been proven in Ref.~\cite{hastings2022optimizing} in another approach.
We have two remarks: first, for any $4$-dimensional pure state, Eq.~\eqref{eq:gh10} is equivalent to
\begin{equation}
  \mean{\id X}^2 + \mean{\id Z}^2 + \mean{XY}^2 + \mean{YY}^2 + \mean{ZY}^2 = 1.
\end{equation}
By permuting $X, Y, Z$ and the parties, we can obtain other equalities.
Secondly, the standard SAUR of $G_{10}$ cannot be generated by the construction in Lemma~\ref{lm:vec2mat}, since there is no operator anti-commuting with $X\id, Y\id, Z\id$ at the same time, meanwhile, the dimension of the linear span of all the operators in this standard SAUR is $10$.

\begin{theorem}
  For a given graph $G$ which can be divided into two subgraphs $G_1, G_2$ where all vertices in $G_1$ are connected with all vertices in $G_2$, then $\beta(G) = \max \{\beta(G_1), \beta(G_2)\}$.
\end{theorem}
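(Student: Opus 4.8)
\emph{Lower bound.} Since $G_1$ and $G_2$ are induced subgraphs of $G$, one gets $\beta(G)\ge\max\{\beta(G_1),\beta(G_2)\}$ at once from monotonicity of $\beta$ under induced subgraphs; a self-contained alternative is to build a SAUR of $G$ directly from SAURs $\{S^{(1)}_i\}$ of $G_1$ on $\mathcal H_1$ and $\{S^{(2)}_j\}$ of $G_2$ on $\mathcal H_2$ by setting $\hat S_i=S^{(1)}_i\otimes Z\otimes\id$ for $i\in V(G_1)$ and $\hat S_j=\id\otimes X\otimes S^{(2)}_j$ for $j\in V(G_2)$ on $\mathcal H_1\otimes\mathbb{C}^2\otimes\mathcal H_2$, the middle qubit producing exactly the anticommutation between the two parts that the join demands; evaluating on $\rho_1\otimes|0\rangle\langle 0|\otimes\rho_2$ recovers any value attainable for $\{S^{(1)}_i\}$ with the $V(G_2)$-expectations zeroed out, and symmetrically with the roles of $X,Z$ swapped for $G_2$.

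\emph{Upper bound.} Fix an arbitrary $\{S_i\}\in\saur(G)$ and a pure state $\psi$; write $A=V(G_1)$, $B=V(G_2)$, $x=\sum_{i\in A}\mean{S_i}_\psi^2$ and $y=\sum_{j\in B}\mean{S_j}_\psi^2$. The one real idea is to introduce the expectation-weighted self-adjoint operators $T=\sum_{i\in A}\mean{S_i}_\psi\,S_i$ and $R=\sum_{j\in B}\mean{S_j}_\psi\,S_j$, so that $\mean{T}_\psi=x$ and $\mean{R}_\psi=y$. Since every $i\in A$ is adjacent to every $j\in B$, every $S_i$ ($i\in A$) anticommutes with every $S_j$ ($j\in B$), hence $\{T,R\}=0$ and $(T+R)^2=T^2+R^2$. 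Using $\mean{M}_\psi^2\le\mean{M^2}_\psi$ for the self-adjoint $M=T+R$,
\begin{equation*}
 (x+y)^2=\mean{T+R}_\psi^2\le\mean{(T+R)^2}_\psi=\mean{T^2}_\psi+\mean{R^2}_\psi .
\end{equation*}

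It remains to bound each term. If $x>0$ then $d=(\mean{S_i}_\psi/\sqrt x)_{i\in A}$ is a unit vector and $T=\sqrt x\sum_{i\in A}d_iS_i$; since $\sum_{i\in A}d_iS_i$ is self-adjoint, Cauchy--Schwarz together with the definition of $\sos{\cdot}$ give
\begin{equation*}
 \Bigl\|\sum_{i\in A}d_iS_i\Bigr\|=\max_\phi\Bigl|\sum_{i\in A}d_i\mean{S_i}_\phi\Bigr|\le\max_\phi\Bigl(\sum_{i\in A}\mean{S_i}_\phi^2\Bigr)^{1/2}=\sqrt{\sos{\{S_i\}_{i\in A}}}\le\sqrt{\beta(G_1)},
\end{equation*}
the last step because $\{S_i\}_{i\in A}$ is a SAUR of the induced subgraph $G_1=G[A]$. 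Hence $\mean{T^2}_\psi\le\|T\|^2\le x\,\beta(G_1)$ (and trivially also when $x=0$, where $T=0$), and symmetrically $\mean{R^2}_\psi\le y\,\beta(G_2)$. Combining,
\begin{equation*}
 (x+y)^2\le x\,\beta(G_1)+y\,\beta(G_2)\le(x+y)\max\{\beta(G_1),\beta(G_2)\},
\end{equation*}
so $x+y\le\max\{\beta(G_1),\beta(G_2)\}$ (immediate if $x+y=0$). Taking the supremum over $\{S_i\}\in\saur(G)$ and $\psi$ yields $\beta(G)\le\max\{\beta(G_1),\beta(G_2)\}$, and with the lower bound the theorem follows.

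The substantive step is spotting the weighted operators $T,R$: a more naive idea, such as conjugating by a single unitary to flip all $S_j$ ($j\in B$) while fixing all $S_i$ ($i\in A$), fails because the $S_i$, $i\in A$, need not commute, so no such unitary exists in general. Once $T,R$ are in hand, the argument uses only $\mean{M}^2\le\mean{M^2}$, the ``anticommutation $\Rightarrow$ Pythagoras'' identity $(T+R)^2=T^2+R^2$, Cauchy--Schwarz, and the definition of $\beta$ as a supremum of $\sos{\cdot}$; no appeal to the standard SAUR is needed for the upper bound. The weighted statement $\beta(G,w)=\max\{\beta(G_1,w|_A),\beta(G_2,w|_B)\}$ comes out verbatim after inserting weights into $x,y,T,R$.
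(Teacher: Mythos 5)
Your proof is correct and follows essentially the same route as the paper's: both take the expectation-weighted linear combinations of the $S_i$ over each part, apply $\mean{M}^2\le\mean{M^2}$, use the join's anticommutation to kill the cross term, and bound each block by $\beta(G_k)$ via an operator-norm/Cauchy--Schwarz argument. The only cosmetic difference is that the paper maximizes over an auxiliary unit vector $t=(t_1,t_2)$ where you absorb the optimal weights $\sqrt{x},\sqrt{y}$ directly into $T$ and $R$; your lower-bound construction with the middle qubit is also a nice explicit substitute for the monotonicity the paper invokes without detail.
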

\begin{proof}
  For a given SAUR of $G$, we label the operators for $G_1$ as $\{A_i\}$, the ones for $G_2$ as $\{B_j\}$.

  On the one hand, for any state $\rho$ (for convenience, we omit the state $\rho$ in the mean value), we have
  \begin{align}
    &\sum_i \mean{A_i}^2 + \sum_j \mean{B_j}^2\nonumber\\
    = &\max_{x,y} \left[\mean{\sum_i x_i A_i}^2 + \mean{\sum_j y_j B_j}^2\right]\nonumber\\
    = &\max_{x,y,t} \left[\mean{t_1 \sum_i x_i A_i + t_2\sum_j y_j B_j}^2\right]\nonumber\\
    \le &\max_{x,y,t} \mean{\left(t_1 \sum_i x_i A_i + t_2\sum_j y_j B_j\right)^2},
  \end{align}
  where $x, y, t$ are unit real vectors.

  By definition of $\{A_i\}$ and $\{B_j\}$, we have
  \begin{align}
    &\max_{x,y,t} \mean{\left(t_1 \sum_i x_i A_i + t_2\sum_j y_j B_j\right)^2}\nonumber\\
    = &\max_{x,y,t} \left[t_1^2\mean{\left(\sum_i x_i A_i\right)^2} + t_2^2 \mean{\left(\sum_j y_j B_j\right)^2}\right]\nonumber\\
    = &\max_{x,y,t} \max\left\{\mean{\left(\sum_i x_i A_i\right)^2}, \mean{\left(\sum_j y_j B_j\right)^2}\right\}\nonumber\\
    \le &\max_{x,y,t} \max\{\beta(G_1), \beta(G_2)\}\nonumber\\
    = &\max\{\beta(G_1), \beta(G_2)\},
  \end{align}

  By definition,
  \begin{align}
    \beta(G) &= \max_{\rho, \{A_i\},\{B_j\}} \left[ \sum_i \mean{A_i}^2_\rho + \sum_j \mean{B_j}^2_\rho \right],
  \end{align}
  which implies that $\beta(G) \le \max\{\beta(G_1), \beta(G_2)\}$.
  
  On the other, $\beta(G) \ge \max\{\beta(G_1), \beta(G_2)\}$, which completes the proof.
\end{proof}
\begin{corollary}
  If we add one new vertex to a graph $G$ and result in a graph $G'$ in the way that the new vertex is connected to all vertices in $G$, then $\beta(G') = \beta(G)$.
\end{corollary}
\begin{theorem}
  For a given graph $G$ which can be divided into two subgraphs $G_1, G_2$ where any vertex in $G_1$ is disconnected with any vertex in $G_2$, $\beta(G) = \beta(G_1) + \beta(G_2)$.
\end{theorem}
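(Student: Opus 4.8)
The plan is to establish the two inequalities $\beta(G)\le\beta(G_1)+\beta(G_2)$ and $\beta(G)\ge\beta(G_1)+\beta(G_2)$ separately, in the spirit of the proof of Theorem~\ref{thm:additionc}. Fix an arbitrary SAUR of $G$ and split its operators as $\{A_i\}_{i\in V(G_1)}$ and $\{B_j\}_{j\in V(G_2)}$ according to the two parts. Because $G$ is the disjoint union of $G_1$ and $G_2$ there are no edges between the parts, so every $A_i$ commutes with every $B_j$; moreover the adjacency relations inside $G_1$ (resp.\ $G_2$) are exactly those of $G$ restricted to $V(G_1)$ (resp.\ $V(G_2)$), so $\{A_i\}$ is itself a SAUR of $G_1$ and $\{B_j\}$ is itself a SAUR of $G_2$.

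For the upper bound I would argue state by state. For \emph{any} state $\rho$ on the common Hilbert space, the tuple $(\mean{A_i}_\rho^2)_i$ lies in $\SOS{\saur(G_1)}$ and hence $\sum_i\mean{A_i}_\rho^2\le\beta(G_1)$; likewise $\sum_j\mean{B_j}_\rho^2\le\beta(G_2)$. Adding these and maximising over all SAURs of $G$ and all $\rho$ gives $\beta(G)\le\beta(G_1)+\beta(G_2)$ immediately --- no joint structure is needed, since the two bounds already hold simultaneously for the same $\rho$.

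For the lower bound I would exhibit an explicit witness by tensoring. Take standard SAURs $\{\bar A_i\}$ of $G_1$ on $\mathcal{H}_1$ and $\{\bar B_j\}$ of $G_2$ on $\mathcal{H}_2$, together with states $\rho_1,\rho_2$ attaining $\beta(G_1)$ and $\beta(G_2)$ (such maximisers exist by Theorem~\ref{thm:standard} and compactness of the state spaces of the finite-dimensional standard SAURs). On $\mathcal{H}_1\otimes\mathcal{H}_2$ put $S_i=\bar A_i\otimes\id$ for $i\in V(G_1)$ and $S_j=\id\otimes\bar B_j$ for $j\in V(G_2)$. This $\{S_k\}$ is a SAUR of $G$: anticommutation and commutation within each part are inherited from $\{\bar A_i\}$ and $\{\bar B_j\}$, while operators from different parts automatically commute, matching the absence of edges between $G_1$ and $G_2$. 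Evaluating on $\rho=\rho_1\otimes\rho_2$ gives $\mean{S_i}_\rho=\mean{\bar A_i}_{\rho_1}$ and $\mean{S_j}_\rho=\mean{\bar B_j}_{\rho_2}$, so $\sum_k\mean{S_k}_\rho^2=\beta(G_1)+\beta(G_2)$, which yields $\beta(G)\ge\beta(G_1)+\beta(G_2)$. Combining the two inequalities proves the claim.

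I do not anticipate a genuine obstacle here; the only points needing (brief) care are checking that the tensor-product construction really satisfies the SAUR axioms attached to the disjoint union --- in particular that the cross-part relations are precisely ``commute'' --- and the appeal to Theorem~\ref{thm:standard} so that the suprema defining $\beta(G_1)$ and $\beta(G_2)$ may be treated as attained maxima. Everything else is the routine bookkeeping of splitting and recombining sums of squares, exactly as in Theorem~\ref{thm:additionc}, and the same argument goes through verbatim for the weighted parameter $\beta(G,w)$.
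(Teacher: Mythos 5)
Your proposal is correct and follows essentially the same route as the paper's own proof: the upper bound by splitting any SAUR of $G$ into SAURs of $G_1$ and $G_2$ and bounding the two partial sums separately for the same state, and the lower bound by the tensor-product construction $\{\bar A_i\otimes\id\}\cup\{\id\otimes\bar B_j\}$ evaluated on $\rho_1\otimes\rho_2$. Your explicit appeal to Theorem~\ref{thm:standard} for attainment of the suprema is a small tidiness improvement over the paper's implicit ``without loss of generality'' but does not change the argument.
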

\begin{proof}
  Without loss of generality, we assume that the state $\rho$ and the SAUR $\{A_i\}$ result in $\beta(G_1)$, and the state $\sigma$ and the SAUR $\{B_j\}$ result in $\beta(G_2)$. Then $\{A_i \otimes \id_B\} \cup \{\id_A \otimes B_j\}$ is a SAUR of $G$. Direct calculation shows that
  \begin{align}
    \sum_i\! \mean{A_i\!\otimes\! \id}_{\rho\otimes \sigma}^2\! +\! \sum_j \mean{\id\otimes B_j}_{\rho\otimes \sigma}^2\! =\! \beta(G_1)\! +\! \beta(G_2).
  \end{align}
  Besides, for any SAUR $\{\tilde{A}_i\} \cup \{\tilde{B}_j\}$ of $G$ and any state $\tau$, we have
  \begin{align}
    &\sum_i \mean{\tilde{A}_i}_{\tau}^2 + \sum_j \mean{\tilde{B}_j}_{\tau}^2\nonumber\\ 
    \le &\max_{\rho, \{A_i\}} \sum_i \mean{A_i}_{\rho}^2 + \max_{\sigma,\{B_j\}} \sum_j \mean{B_j}_{\sigma}^2\nonumber\\
    = &\beta(G_1) + \beta(G_2).
  \end{align}
  In total, we have $\beta(G) = \beta(G_1) + \beta(G_2)$ by definition.
\end{proof}
For two given graphs $G_1$ and $G_2$, we denote by $G_1[G_2]$ their lexicographic product, whose vertex set is the Cartesian product of the graphs' vertex sets and then $(i_1,j_1) \sim (i_2, j_2)$ if $i_1\sim i_2$, or $j_1\sim j_2$ when $i_1 = i_2$.
\begin{theorem}
  For two given graphs $G_1$ and $G_2$, $\beta(G)$ is multiplicative under the lexicographic product: $\beta(G_1[G_2]) = \beta(G_1) \beta(G_2)$.
\end{theorem}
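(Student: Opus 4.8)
The plan is to establish the two inequalities $\beta(G_1[G_2])\ge\beta(G_1)\beta(G_2)$ and $\beta(G_1[G_2])\le\beta(G_1)\beta(G_2)$ separately, exploiting the graph-theoretic reformulation $\beta(G)=\sos{\saur(G)}$ together with the fact (Theorem~\ref{thm:standard}) that $\sos{\{S_i\}}=\beta(G)$ for \emph{any} SAUR $\{S_i\}$ of $G$, and the equality $\sos{\sar(G)}=\sos{\saur(G)}$ of Theorem~\ref{thm:sarsaur}.

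For the lower bound I would write down an explicit SAUR of $G_1[G_2]$ together with a product state. Fix a SAUR $\{A_i\}_i$ of $G_1$ on $\mathcal H_1$ and a SAUR $\{B_j\}_j$ of $G_2$ on $\mathcal H_2$, and on $\mathcal H_1\otimes\mathcal H_2^{\otimes n_1}$ (one copy of $\mathcal H_2$ per vertex of $G_1$, $n_1$ the number of vertices of $G_1$) set $S_{(i,j)}=A_i\otimes(\id^{\otimes(i-1)}\otimes B_j\otimes\id^{\otimes(n_1-i)})$. A routine four-case check shows this is a SAUR of $G_1[G_2]$: within a block $i$ the relations are exactly those of $\{B_j\}_j$, hence of $G_2$, while across blocks $i\ne i'$ the $B$'s sit on disjoint tensor legs and commute, so the (anti)commutation is governed entirely by $\{A_i\}$, hence by $G_1$. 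Taking states $\rho_1,\rho_2$ attaining $\sum_i\langle A_i\rangle_{\rho_1}^2=\beta(G_1)$ and $\sum_j\langle B_j\rangle_{\rho_2}^2=\beta(G_2)$ (attainable, since the standard SAURs are finite-dimensional and the functional is continuous on a compact set), the state $\rho=\rho_1\otimes\rho_2^{\otimes n_1}$ gives $\langle S_{(i,j)}\rangle_\rho=\langle A_i\rangle_{\rho_1}\langle B_j\rangle_{\rho_2}$, so $\sum_{i,j}\langle S_{(i,j)}\rangle_\rho^2=\beta(G_1)\beta(G_2)$.

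For the upper bound, take an arbitrary SAUR $\{S_{(i,j)}\}$ of $G_1[G_2]$ and a state $\rho$ (I drop the subscript $\rho$ on mean values). For each fixed $i$ the family $\{S_{(i,j)}\}_j$ is a SAUR of $G_2$, so $c_i^2:=\sum_j\langle S_{(i,j)}\rangle^2\le\beta(G_2)$. Define the block averages $T_i=c_i^{-1}\sum_j\langle S_{(i,j)}\rangle\,S_{(i,j)}$ (and $T_i=0$ if $c_i=0$), which have unit coefficient vector and satisfy $\langle T_i\rangle=c_i$. Two facts drive the argument: (i) since $\{S_{(i,j)}\}_j$ is a SAUR of $G_2$, Cauchy--Schwarz with a unit vector $y$ gives $\langle\sum_j y_j S_{(i,j)}\rangle_\sigma^2\le\sum_j\langle S_{(i,j)}\rangle_\sigma^2\le\beta(G_2)$ for every state $\sigma$, hence $\|T_i\|^2=\sup_\sigma\langle T_i\rangle_\sigma^2\le\beta(G_2)$, i.e. $T_i^2\le\beta(G_2)\,\id$; (ii) for $i\ne i'$, every $S_{(i,j)}$ anticommutes with every $S_{(i',j')}$ when $i\sim i'$ in $G_1$ and commutes with all of them when $i\not\sim i'$, so $\{T_i/\sqrt{\beta(G_2)}\}_i$ is a \emph{SAR} of $G_1$. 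By Theorem~\ref{thm:sarsaur} ($\sos{\sar(G_1)}=\sos{\saur(G_1)}=\beta(G_1)$), $\sum_i\langle T_i\rangle^2=\beta(G_2)\sum_i\langle T_i/\sqrt{\beta(G_2)}\rangle^2\le\beta(G_1)\beta(G_2)$, and since $\sum_i\langle T_i\rangle^2=\sum_i c_i^2=\sum_{i,j}\langle S_{(i,j)}\rangle^2$ this closes the argument.

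The hard part will be the upper bound, and more precisely the recognition that the block averages $T_i$ are generically \emph{not} self-adjoint unitaries but only satisfy $T_i^2\le\beta(G_2)\,\id$; this is exactly why one cannot stay inside the class of SAURs of $G_1$ and must instead pass to the SAR relaxation and use $\sos{\sar(G)}=\sos{\saur(G)}$. Everything else---the product construction for the lower bound and the four-case verification that it is a valid SAUR of the lexicographic product---is mechanical.
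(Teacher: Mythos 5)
Your proof is correct and follows essentially the same route as the paper: the lower bound via the identical tensor-product SAUR with a product state, and the upper bound by collapsing each block $\{S_{(i,j)}\}_j$ into a single operator with coefficients proportional to the expectation values, observing that after normalization these form a SAR (not SAUR) of $G_1$, and invoking $\sos{\sar(G_1)}=\sos{\saur(G_1)}=\beta(G_1)$. The only cosmetic difference is that the paper normalizes each block operator by its largest eigenvalue and separately bounds that eigenvalue by $\sqrt{\beta(G_2)}$, whereas you normalize by $\sqrt{\beta(G_2)}$ directly.
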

\begin{proof}
  Denote $\{A_{ij}\}$ any SAUR of $G=G_1[G_2]$, where $A_{ij}$ represents the vertex in $G$ which corresponds to $i$ in $G_1$ and $j$ in $G_2$. Denote $\bar{A}_i = \sum_j x_{ij}A_{ij}/\lambda_i$, where $\{x_{ij}\}_{j}$ is the normalized vector of $\{\mean{A_{ij}}\}_{j}$, and $\lambda_i$ is the maximal eigenvalue of $\sum_j x_{ij}A_{ij}$. We remark that $\{\bar{A}_i\} \in \sar(G_1)$, and $\{A_{ij}\}_j \in \saur(G_2)$ for all $i$.

  By definition,
  \begin{align}\label{eq:product_upper_bound}
    \beta(G) &= \max_{\rho, \{A_{ij}\}\in \saur(G)} \sum_{i} \sum_j \mean{A_{ij}}_{\rho}^2\nonumber\\
%             &\le \sum_i \max_{\rho, \{A_{ij}\} \in \mathcal{S}(G)} \sum_j \mean{A_{ij}}_{\rho}^2\\
             &= \max_{\rho, \{A_{ij}\} \in \saur(G) }\sum_i \mean{\sum_j x_{ij} A_{ij}}_{\rho}^2\nonumber\\
             &= \max_{\rho, \{A_{ij}\} \in \saur(G) }\sum_i \lambda_i^2 \mean{\bar{A}_i}_{\rho}^2\nonumber\\
             &\le \beta(G_2) \max_{\rho, \{A_i\} \in \sar(G_1)} \sum_i \mean{A_i}_{\rho}^2\nonumber\\
             &= \beta(G_2) \beta(G_1),
  \end{align}
  where the inequality in the fourth line is from the following fact: 
  \begin{equation}
    \lambda_i^2 = \max_{\rho}\mean{\sum_j x_{ij} A_{ij}}_{\rho}^2 \le \beta(G_2),
  \end{equation}
  and the last equality is proven in Sec.~\ref{sec:sar}.

  On the other hand, without loss of generality, we assume that the state $\rho$ and the SAUR $\{A_i\}$ result in $\beta(G_1)$, the state $\sigma$ and the SAUR $\{B_j\}$ result in $\beta(G_2)$.
  Denote
  \begin{equation}
    A_{ij} = A_i \otimes \left[\bigotimes_{k\in G_1} B_j^{\delta_{ik}}\right],
  \end{equation}
  where $\delta_{ik} = 1$ if $k = i$, otherwise, $\delta_{ik}=0$.
  By construction, $\{A_{ij}\} \in \mathcal{S}(G)$. Let $\tau := \rho \otimes \sigma^{\otimes n_1}$, where $n_1$ is the number of vertices in $G_1$. Then we have
  \begin{equation}
    \sum_{ij} \mean{A_{ij}}_{\tau}^2 = \sum_{ij} \mean{A_i}_{\rho}^2 \mean{B_j}_{\sigma}^2 = \beta(G_1)\beta(G_2),
  \end{equation}
  concluding the proof.
\end{proof}

For any large graph with decomposition into small graphs with known beta numbers through the two addition operations and lexicographic product, its exact beta number can be obtained. For example, if we take the lexicographic product of five $\bar{C}_7$, that is, $G = \bar{C}_7^{[\ ]5}$, then $\beta(G) = \beta(\bar{C}_7)^5 \approx 40.2452$. However, $\alpha(G) = \alpha(\bar{C}_7)^5 = 32$ and $\vartheta(G) = \vartheta(\bar{C}_7)^5 \approx 41.8144$ since those two parameters are also multiplicative under the lexicographic product~\cite{geller1975chromatic,lovasz1979shannon,cubitt2014bounds}. Hence, the integer part of $\beta(G)$, $\alpha(G)$ and $\vartheta(G)$ can be all different. This closes the open question in Ref.~\cite{hastings2022optimizing} with the answer: there are indeed graphs with beta number strictly larger than the independence number, the gap between them can even be large.

The removal of one edge is also one basic graph operation, which can relate different graph products.
One important property shared by the independence number and the Lov\'{a}sz number is that they do not decrease under the edge-removal. However, this does not hold for the beta number. Here we take $\bar{C}_7$ and its subgraph $G_7$ (see Fig.~\ref{fig:antihepta}) as an example. 
\begin{theorem}
    $\beta(G_7) = 2 < \beta(\bar{C}_7) $.
\end{theorem}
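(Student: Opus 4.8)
The plan is to pin down $\beta(G_7)=2$ by a two-sided estimate and then read off the strict inequality from the value of $\beta(\bar C_7)$ already computed in Sec.~\ref{beta}. The lower bound $\beta(G_7)\ge 2$ is immediate from the general ordering \eqref{eq:mainineq2}, which gives $\beta(G)\ge\alpha(G)$ for every graph, together with the elementary observation that the graph $G_7$ (the solid blue subgraph of $\bar C_7$ in Fig.~\ref{fig:antihepta}, obtained by deleting the two dotted edges) has independence number $\alpha(G_7)=2$.

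The content of the statement is therefore the upper bound $\beta(G_7)\le 2$. Here I would first exhibit $G_7$ as an \emph{induced} subgraph of the lexicographic product $C_5[K_2]$, where $K_2$ is a single edge: blowing up each vertex of the pentagon into a pair of adjacent vertices and joining two such pairs completely whenever the underlying pentagon vertices are adjacent produces a $10$-vertex graph, and one checks by inspection that a suitable $7$-vertex subset, with the induced adjacencies, is isomorphic to $G_7$. Granting this embedding, the vertex-removal monotonicity of the beta number (namely $\beta(G')\le\beta(G)$ for $G'$ an induced subgraph of $G$, recorded just before Theorem~\ref{thm:cycles}) yields $\beta(G_7)\le\beta(C_5[K_2])$; multiplicativity under the lexicographic product (Theorem~\ref{thm:lexi}), together with $\beta(C_5)=2$ (Corollary~\ref{cor:gh10}) and $\beta(K_2)=1$ (a single edge is a complete graph, for which $\alpha=\vartheta=1$, so $\beta=1$), then gives $\beta(C_5[K_2])=\beta(C_5)\,\beta(K_2)=2$. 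Combining the two bounds, $\beta(G_7)=2$.

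For the strict inequality it remains only to quote Sec.~\ref{beta}: the standard SAUR of $\bar C_7$ displayed there achieves $\sum_i\langle S_i\rangle_\rho^2=(9+4\sqrt 2)/7\approx 2.094$, so $\beta(\bar C_7)\ge(9+4\sqrt 2)/7>2=\beta(G_7)$. In particular this shows that, unlike $\alpha$ and $\vartheta$, the beta number can strictly drop when an edge is removed.

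The step I expect to be the main obstacle is the combinatorial verification that $G_7$ embeds as an induced subgraph of $C_5[K_2]$: one must identify the exact adjacency pattern of $G_7$ and produce an explicit injective vertex map into the blown-up pentagon that reproduces adjacency and non-adjacency on the nose. Everything else is a direct application of results already established — the lower bound from \eqref{eq:mainineq2}, the product identity from Theorem~\ref{thm:lexi}, and the estimate $\beta(\bar C_7)>2$ from Sec.~\ref{beta}.
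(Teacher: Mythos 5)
Your proposal is correct and follows essentially the same route as the paper: the lower bound $\beta(G_7)\ge\alpha(G_7)=2$, the upper bound via the induced embedding of $G_7$ into $C_5[K_2]$ combined with vertex-removal monotonicity and multiplicativity under the lexicographic product, and the strict inequality from the value $(9+4\sqrt{2})/7>2$ established for $\bar C_7$. The paper likewise asserts the embedding $G_7\hookrightarrow C_5[K_2]$ without spelling out the explicit vertex map, so you are not missing anything the authors supply.
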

\begin{proof}
    Notice that $G_7$ is isomorphic to an induced subgraph of $C_5[K_2]$, where $K_2$ is just one edge. Thus, 
    \begin{equation}
        \beta(G_7) \le \beta(C_5)\beta(K_2) = 2.
    \end{equation}
    On the other hand, $\beta(G_7) \ge \alpha(G_7) = 2$, which completes the proof.
\end{proof}
Although the beta number is between the independence number and the Lov\'{a}sz number, its behavior under edge-removal is rather strange.
Nevertheless, the beta number does not increase under the vertex-removal, same as the independence number and the Lov\'{a}sz number. More explicitly, $\beta(G') \le \beta(G)$ if $G'$ is an induced subgraph of $G$.

Tensor product of systems is used in quantum mechanics often. Denote $G$ the anti-commutativity and commutativity graph corresponding to the tensor product of the SAURs of $G_1$ and $G_2$. We can directly verify that $G$ is the XOR product of $G_1, G_2$, that is, $(i_1,j_1) \sim (i_2, j_2)$ if and only if only one of $i_1\sim i_2$ and $j_1\sim j_2$ holds. In this case, denote $G = G_1 \times G_2$.
\begin{theorem}
  For two given graphs $G_1$ and $G_2$, $\beta(G_1\times G_2) \ge \beta(G_1) \beta(G_2)$.
\end{theorem}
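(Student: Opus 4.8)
The plan is to exhibit, starting from near-optimal data for $G_1$ and for $G_2$, an explicit SAUR of $G_1\times G_2$ together with a product state whose sum of squared expectations equals (or approaches) $\beta(G_1)\beta(G_2)$. Concretely, suppose $\{A_i\}$ is a SAUR of $G_1$ on $\mathcal{H}_1$ with state $\rho$ achieving (or $\varepsilon$-approaching) $\beta(G_1)$, and $\{B_j\}$ is a SAUR of $G_2$ on $\mathcal{H}_2$ with state $\sigma$ achieving $\beta(G_2)$. Mimicking the construction used in the lexicographic-product proof, I would set
\begin{equation}
  S_{ij} = A_i \otimes \Big(\bigotimes_{k\in V(G_1)} B_j^{\,\delta_{ik}}\Big),
\end{equation}
i.e.\ on a Hilbert space $\mathcal{H}_1 \otimes \mathcal{H}_2^{\otimes n_1}$ where $n_1 = |V(G_1)|$, the factor $A_i$ acts on the first tensor slot and a copy of $B_j$ is placed in the tensor slot indexed by $i$ (and $\id$ elsewhere). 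Then I would take the product state $\tau = \rho \otimes \sigma^{\otimes n_1}$.

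The key steps, in order, are: (1) verify that $\{S_{ij}\}$ is a SAUR of $G_1\times G_2$, i.e.\ check that $S_{ij}$ is a self-adjoint unitary and that $\{S_{i_1 j_1}, S_{i_2 j_2}\}$ versus $[S_{i_1 j_1}, S_{i_2 j_2}]$ correctly tracks the XOR-product adjacency. For this one distinguishes cases on whether $i_1 = i_2$ or not: if $i_1 \neq i_2$, the $A$-factors see the $G_1$-relation between $i_1,i_2$, while the $B$-factors occupy disjoint tensor slots and contribute a commuting sign if both $i_1\not\sim i_2$-type and... more carefully, the $B_{j_1}$ in slot $i_1$ and $B_{j_2}$ in slot $i_2$ commute (different slots), so the overall (anti-)commutation is governed solely by the $A$-factors, which is exactly what XOR demands when one reads that as ``$i_1\sim i_2$ xor ($i_1=i_2$ and $j_1\sim j_2$)'' degenerating correctly; if $i_1 = i_2$, the $A$-factors commute (same operator up to the fact $A_i^2=\id$... careful, $A_{i_1}$ and $A_{i_2}=A_{i_1}$ commute trivially) and the two $B$-factors land in the same slot $i_1$, so the relation is the $G_2$-relation between $j_1,j_2$, again matching XOR. (2) Compute $\mean{S_{ij}}_\tau = \mean{A_i}_\rho \, \mean{B_j}_\sigma$, which is immediate from the tensor-product structure and the fact that $\sigma^{\otimes n_1}$ is a product over the $n_1$ slots. (3) Conclude
\begin{equation}
  \sum_{i,j} \mean{S_{ij}}_\tau^2 = \sum_i \mean{A_i}_\rho^2 \sum_j \mean{B_j}_\sigma^2 = \beta(G_1)\beta(G_2),
\end{equation}
so by definition $\beta(G_1\times G_2) \ge \beta(G_1)\beta(G_2)$. (If one prefers to avoid assuming the suprema defining $\beta(G_\ell)$ are attained, run the same computation with $\varepsilon$-optimal data and let $\varepsilon\to 0$, using that $\beta$ is a supremum.)

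The main obstacle is step (1): carefully checking that the asymmetric placement of the $B_j$'s (a single copy in slot $i$, rather than one copy per slot) produces exactly the XOR adjacency and not, say, the lexicographic one. The subtlety is that in the lexicographic construction the placement $\bigotimes_k B_j^{\delta_{ik}}$ is tailored so that vertices with $i_1 = i_2$ inherit the $G_2$-edge while vertices with $i_1 \neq i_2$ are \emph{never} adjacent via the $B$-part — whereas here, for the XOR product, we additionally need pairs with $i_1 \neq i_2$ but $i_1 \not\sim i_2$, $j_1 \sim j_2$ to be \emph{non}-adjacent, which holds precisely because the two $B$-factors sit in different tensor slots and hence commute regardless of whether $j_1\sim j_2$. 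Making this case analysis airtight — and confirming that the same construction simultaneously realizes the correct relation in all four adjacency patterns — is the one place where care is required; the rest is a routine tensor-product expectation calculation. Note also this only yields the inequality $\ge$; equality fails in general (indeed, as the paper later discusses, $\beta$ need not be multiplicative under the XOR/OR product), so no matching upper bound should be attempted here.
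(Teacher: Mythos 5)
There is a genuine gap, and it sits exactly where you flagged the ``main obstacle'': your operators $S_{ij} = A_i \otimes \bigl(\bigotimes_{k} B_j^{\delta_{ik}}\bigr)$ do \emph{not} form a SAUR of $G_1\times G_2$; their frustration graph is the lexicographic product $G_1[G_2]$. To see this, take $i_1\neq i_2$. The two $B$-factors occupy different tensor slots and always commute, so the (anti-)commutation of $S_{i_1j_1}$ and $S_{i_2j_2}$ is governed solely by $A_{i_1},A_{i_2}$, i.e.\ by whether $i_1\sim i_2$ in $G_1$ --- independently of $j_1,j_2$. But the XOR product as defined in the paper reads ``$(i_1,j_1)\sim(i_2,j_2)$ iff exactly one of $i_1\sim i_2$ and $j_1\sim j_2$ holds''. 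So when $i_1\sim i_2$ \emph{and} $j_1\sim j_2$ the XOR product demands commutation while your operators anticommute, and when $i_1\not\sim i_2$ but $j_1\sim j_2$ the XOR product demands anticommutation while your operators commute. Your reading of the adjacency as ``$i_1\sim i_2$ xor ($i_1=i_2$ and $j_1\sim j_2$)'' is precisely the lexicographic adjacency, not the XOR one; consequently your computation reproves the lower-bound half of Theorem~\ref{thm:lexi} rather than Theorem~\ref{thm:times}.

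The repair is simpler than what you attempted: take the plain tensor product $S_{ij}=A_i\otimes B_j$ on $\mathcal{H}_1\otimes\mathcal{H}_2$, which is how the paper motivates the XOR product in the first place. Writing $A_{i_2}A_{i_1}=\epsilon_A A_{i_1}A_{i_2}$ and $B_{j_2}B_{j_1}=\epsilon_B B_{j_1}B_{j_2}$ with $\epsilon_A,\epsilon_B\in\{\pm1\}$ recording adjacency, the pair $S_{i_1j_1},S_{i_2j_2}$ anticommutes iff $\epsilon_A\epsilon_B=-1$, i.e.\ iff exactly one of $i_1\sim i_2$, $j_1\sim j_2$ holds --- the XOR condition in all four cases at once, with no slot bookkeeping. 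Then with $\tau=\rho\otimes\sigma$ one has $\mean{S_{ij}}_\tau=\mean{A_i}_\rho\mean{B_j}_\sigma$, the double sum factorizes, and $\beta(G_1\times G_2)\ge\beta(G_1)\beta(G_2)$ follows; your remark about using $\varepsilon$-optimal data (or, as the paper does, the standard SAURs for which the suprema are attained) then closes the argument. The rest of your write-up --- the product-state expectation computation and the caution that only the inequality $\ge$ is claimed --- is fine once the construction is corrected.
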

\begin{proof}
  Denote $\{A_i\}$ and $\{B_j\}$ the standard SAUR of $G_1$ and $G_2$, respectively. Then we know that
  \begin{equation}
    \sos{\{A_i\}} = \beta(G_1), \sos{\{B_j\}} = \beta(G_2).
  \end{equation} 
  Hence,
  \begin{align}
    \beta(G_1\times G_2) &\ge \sos{\{A_i\otimes B_j\}}\nonumber\\ 
    &\ge \max_{\rho_1\otimes\rho_2} \sum_{ij} \mean{A_i\otimes B_j}_{\rho_1\otimes\rho_2}^2\nonumber\\
    &=\beta(G_1)\beta(G_2),
  \end{align} 
  since $\{A_i\otimes B_j\}_{ij}$ is a SAUR of $G_1 \times G_2$. 
\end{proof}
For a perfect graph $G$, we know that $\alpha(G) = \vartheta(G) = \alpha^\star(G)$ which implies that $\alpha(G) = \beta(G) = \vartheta(G) = \alpha^\star(G)$. For imperfect graphs, odd cycles and odd anticycles are basic building blocks. To continue, we make the following claim:
\begin{theorem}\label{cl:ladder2}
\begin{equation}\label{eq:ladder}
    \max_\rho\!\left[\mean{\id Y}_{\rho}^2\!+\!\mean{XX}_{\rho}^2\!+\!\mean{ZZ}_{\rho}^2\!-\!\mean{YY}_{\rho}^2\right]\!=\!1.
\end{equation}
\end{theorem}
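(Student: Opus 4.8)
The plan is to diagonalise the three mutually commuting operators $XX$, $ZZ$, $YY$ and thereby reduce the claimed equality to an elementary optimisation over a probability vector. First I would record the relevant algebra: $XX$ and $ZZ$ commute and $YY=-(XX)(ZZ)$, so all three are simultaneously diagonal in the Bell basis $\{|\Phi^+\rangle,|\Phi^-\rangle,|\Psi^+\rangle,|\Psi^-\rangle\}$, which I label $1,2,3,4$; on these four vectors the eigenvalue triples $(\mean{XX},\mean{ZZ},\mean{YY})$ are $(1,1,-1)$, $(-1,1,1)$, $(1,-1,1)$, $(-1,-1,-1)$, respectively. Moreover $\id Y$ anticommutes with both $XX$ and $ZZ$, so in the Bell basis it has no diagonal entries and only couples sector $1$ with sector $4$ and sector $2$ with sector $3$; a one-line computation gives $(\id Y)|\Phi^\pm\rangle=i|\Psi^\mp\rangle$ and $(\id Y)|\Psi^\pm\rangle=-i|\Phi^\mp\rangle$, hence $\mean{\id Y}_\rho=-2\,\mathrm{Im}\,\rho_{14}-2\,\mathrm{Im}\,\rho_{23}$, where $\rho_{jk}$ denotes the $(j,k)$-entry of $\rho$ in the Bell basis.

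Writing $p_i=\rho_{ii}$, the linear formulas for $\mean{XX}_\rho,\mean{ZZ}_\rho,\mean{YY}_\rho$ read off from the eigenvalue triples, together with $\sum_i p_i=1$, simplify after a short cancellation to the identity $\mean{XX}_\rho^2+\mean{ZZ}_\rho^2-\mean{YY}_\rho^2=1-8p_1p_4-8p_2p_3$. For the remaining term, positivity of $\rho$ gives $|\rho_{14}|\le\sqrt{p_1p_4}$ and $|\rho_{23}|\le\sqrt{p_2p_3}$ (the two relevant $2\times2$ principal minors), so $\mean{\id Y}_\rho^2\le(2\sqrt{p_1p_4}+2\sqrt{p_2p_3})^2=4p_1p_4+8\sqrt{p_1p_2p_3p_4}+4p_2p_3$. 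Adding the two estimates, everything collapses: the bracket in the claim is at most $1-4(\sqrt{p_1p_4}-\sqrt{p_2p_3})^2\le 1$, which is the upper bound. For the matching lower bound I would just evaluate the expression on the product state $|0\rangle\otimes\frac{1}{\sqrt2}(|0\rangle+i|1\rangle)$: this is a $+1$ eigenstate of $\id Y$ with $\mean{XX}=\mean{ZZ}=\mean{YY}=0$, so it attains the value $1$ (here all $p_i=1/4$ and the coherence bound on $\mean{\id Y}$ is saturated).

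The argument is short and I do not anticipate a genuine difficulty; the one point requiring care is bookkeeping --- pinning down the signs and factors of $i$ in the action of $\id Y$ on the Bell basis, and matching each Bell sector to the correct eigenvalue triple of $XX$, $ZZ$, $YY$ --- since any slip there would destroy the cancellation that produces the clean identity $1-8p_1p_4-8p_2p_3$. A reassuring feature is that no ``without loss of generality'' step is needed: the bounds on $|\rho_{14}|,|\rho_{23}|$ and the exact expressions for $\mean{XX}_\rho$, $\mean{ZZ}_\rho$, $\mean{YY}_\rho$ hold for \emph{every} two-qubit state, so the estimate applies to an arbitrary $\rho$ directly rather than only to states that are block-diagonal in the Bell basis.
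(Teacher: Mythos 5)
Your proof is correct, and I checked the delicate points: the eigenvalue triples of $(XX,ZZ,YY)$ on the four Bell sectors, the off-diagonal action $(\id Y)|\Phi^\pm\rangle=i|\Psi^\mp\rangle$, $(\id Y)|\Psi^\pm\rangle=-i|\Phi^\mp\rangle$, the identity $\mean{XX}_\rho^2+\mean{ZZ}_\rho^2-\mean{YY}_\rho^2=1-8p_1p_4-8p_2p_3$ (which follows from writing the left side as $2(p_1-p_4)^2+2(p_2-p_3)^2-\bigl[(p_2+p_3)-(p_1+p_4)\bigr]^2$ and using $\sum_i p_i=1$), and the final collapse to $1-4\bigl(\sqrt{p_1p_4}-\sqrt{p_2p_3}\bigr)^2\le 1$ all hold. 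Your route is genuinely different from the paper's: there, the quantity is first rewritten as a linear functional $\tr[\gamma\,(\id Y\!\otimes\!\id Y+XX\!\otimes\!XX+ZZ\!\otimes\!ZZ-YY\!\otimes\!YY)]$ on two copies of the state, relaxed to an SDP with the extra valid constraints $\tr[(XZ\!\otimes\!XZ)\gamma]\ge 0$, $\tr[(Y\id\!\otimes\!Y\id)\gamma]\ge 0$, $\tr[(ZX\!\otimes\!ZX)\gamma]\ge 0$ (valid because each is a square of an expectation on product states $\rho\otimes\rho$), and the bound $1$ is certified by exhibiting a feasible point of the dual SDP. The paper's argument is short once the right witness operators are guessed, and it illustrates the general SDP machinery of their Appendix B; yours is more elementary and self-contained (no duality, no doubling of the system), works directly for an arbitrary two-qubit $\rho$ without any convexity reduction to pure states, and as a bonus yields an exact characterization of the maximizers, namely $\sqrt{p_1p_4}=\sqrt{p_2p_3}$ together with saturation of the two coherence bounds with aligned phases — consistent with your explicit optimal product state, which has all $p_i=1/4$.
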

\begin{proof}
    It is enough to show that the maximum of the following SDP is $1$, as a relaxation of the original theorem:
    \begin{align}
    l = & \max_{\gamma} \tr \gamma W \nonumber\\
    \text{such that } & \tr A_0 \gamma = 1, \gamma \succeq 0,\nonumber\\
                      & \tr A_i\gamma \ge 0, i=1,2,3, \label{eq:extraConds}
\end{align}
where $W = \id Y\id Y + XXXX + ZZZZ - YYYY$, $A_0 = \id\id\id\id$, $A_1 = XZXZ$, $A_2=Y\id Y\id$ and $A_3=ZXZX$. 

Then the dual SDP is
\begin{align}
    l' = & \min y_0\nonumber\\
    \text{such that } & \sum_{i=0}^3 y_i A_i - {W} \succeq 0,\nonumber\\
    & y_i \le 0, i=1,2,3.
\end{align}
Since the case that $y_0 = 1$ and $y_i=-1$ for $i=1,2,3$ is a feasible solution, we know that $l \le l' \le 1$. However, by taking $\rho = |\Psi^+\rangle\langle \Psi^+|$ with $|\Psi^+\rangle = (|00\rangle + |11\rangle)/\sqrt{2}$ in Eq.~\eqref{eq:ladder}, we have $l\ge 1$. Consequently, $l=1$.
\end{proof}
We remark that it is already known that~\cite{luo2008quantum} $l=1$ if we only consider the state in the form $\rho = (\id + x XX + y YY + z ZZ)/4$, for which $\mean{\id Y}_{\rho} = 0$. Hence, our result generalizes the known one, which is considered often in quantum correlations like discord~\cite{luo2008quantum}.

By considering the relation between the standard SAUR of different odd cycles, we have
\begin{lemma}\label{thm:ladder2}
   $\beta(C_{2n+3}) - \beta(C_{2n+1}) \le 1$. 
\end{lemma}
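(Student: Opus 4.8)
The plan is to compare the \emph{standard} SAURs of $C_{2n+3}$ and $C_{2n+1}$, so that Theorem~\ref{thm:standard} reduces the statement to a computation with these particular realisations. Assume $n\ge 1$ (the case $n=0$, i.e.\ $C_3$ versus $C_1$, is trivial since both beta numbers equal $1$). Fix an edge $(i_0,j_0)$ of $C_{2n+3}$; because $2n+3\ge 5$ no vertex is adjacent to both $i_0,j_0$, so $V_2=\emptyset$, and one checks that the Pauli-$(i_0,j_0)$-induced subgraph is again an odd cycle $C_{2n+1}$, in which the two distinguished vertices $a\in V_1$ (second neighbour of $j_0$) and $b\in V_3$ (second neighbour of $i_0$) have become adjacent. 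Hence, if $\{S'_k\}$ is a standard SAUR of $C_{2n+1}$ on a space $\mathcal H'$ then $\{S'_a,S'_b\}=0$, and by construction the standard SAUR $\{S_i\}$ of $C_{2n+3}$ on $\mathbb C^2\otimes\mathcal H'$ is
\[
  \{X\otimes\id,\ Z\otimes\id\}\ \cup\ \{S_a:=X\otimes S'_a,\ S_b:=Z\otimes S'_b\}\ \cup\ \{\id\otimes S'_k:k\in V_0\}.
\]
It therefore suffices to bound $\sum_i\langle S_i\rangle_\rho^2$ over states $\rho$ on $\mathbb C^2\otimes\mathcal H'$, which by convexity may be taken pure.

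The first step is to peel off the auxiliary qubit. The product $S_aS_b=(XZ)\otimes(S'_aS'_b)$ is a self-adjoint unitary (both tensor factors are anti-self-adjoint, since $\{S'_a,S'_b\}=0$), and it anti-commutes with $X\otimes\id$ and with $Z\otimes\id$ (it acts as a multiple of $Y$ against $X,Z$ on the first factor). Thus $\{X\otimes\id,\ Z\otimes\id,\ S_aS_b\}$ is a SAUR of the clique $C_3=K_3$, so $\langle X\otimes\id\rangle^2+\langle Z\otimes\id\rangle^2+\langle S_aS_b\rangle^2\le\beta(C_3)=1$. Substituting this into the splitting $\sum_i\langle S_i\rangle^2=\big(\langle X\otimes\id\rangle^2+\langle Z\otimes\id\rangle^2\big)+\big(\langle S_a\rangle^2+\langle S_b\rangle^2+\sum_{k\in V_0}\langle S_k\rangle^2\big)$, the lemma reduces to the single inequality
\begin{equation}\tag{$\ast$}
  \langle S_a\rangle^2+\langle S_b\rangle^2-\langle S_aS_b\rangle^2+\sum_{k\in V_0}\langle S_k\rangle^2\ \le\ \beta(C_{2n+1}),
\end{equation}
which is where the real work lies.

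To attack $(\ast)$ I would use that the operators on the left are not generic: $\{\id\otimes S'_k:k\in C_{2n+1}\}$ is a SAUR of $C_{2n+1}$, with $\id\otimes S'_a=(X\otimes\id)S_a$ and $\id\otimes S'_b=(Z\otimes\id)S_b$, and the $\{S'_k\}_{k\in V_0}$ are themselves built from a standard SAUR of the next cycle $C_{2n-1}\subset\mathcal H'$. Using the Clifford structure of the anti-commuting pair $(S'_a,S'_b)$ — each a single-qubit Pauli on a distinguished qubit of $\mathcal H'$, the $X/Z$-roles being pinned down by choosing the recursion edge of $C_{2n+1}$ itself to be $(b,a)$ — and tracing out the multiplicity space of that pair, the ``boundary'' part of $(\ast)$ collapses to an explicit few-qubit expression to which the ladder claim (Theorem~\ref{cl:ladder2}) applies, after conjugation by local unitaries and after rewriting its $\langle YY\rangle^2$ term as $\langle (XX)(ZZ)\rangle^2$. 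Intuitively, the ladder claim is exactly the inequality that lets one trade the $X$--$Z$ plane of the new qubit for one extra commuting observable on $\mathcal H'$ already ``charged'' to the term $\langle S_aS_b\rangle^2$; feeding in the inductive hypothesis $\beta(C_{2n+1})=\alpha(C_{2n+1})$ (the even cycles being perfect, they cause no difficulty) should then close $(\ast)$, and with it the lemma.

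The main obstacle is getting the constant to be exactly $1$. A crude estimate — bounding $\langle X\otimes\id\rangle^2+\langle Z\otimes\id\rangle^2\le 1$ and bounding the rest by the beta number of the path obtained from $C_{2n+1}$ by deleting the edge $ab$, which is $n+1$ — only yields $\beta(C_{2n+3})\le\beta(C_{2n+1})+2$. Recovering the missing unit is precisely the job of the negative term $-\langle YY\rangle^2$ in the ladder claim, and making that cancellation manifest forces one to keep careful track of which single-qubit Pauli each of $S'_a,S'_b$ corresponds to relative to the seeds $X\otimes\id,\ Z\otimes\id$; this bookkeeping — equivalently, the reason the statement is phrased through the standard SAUR and why $C_3$ must be handled as a base case — is where I expect the proof to be most delicate.
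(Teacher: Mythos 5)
Your setup is sound: the identification of the Pauli-$(i_0,j_0)$-induced subgraph of $C_{2n+3}$ as $C_{2n+1}$, the explicit form of the standard SAUR, and the observation that $\{X\otimes\id,\,Z\otimes\id,\,S_aS_b\}$ is a SAUR of $K_3$ are all correct, and you have correctly identified Theorem~\ref{cl:ladder2} as the indispensable auxiliary fact. But the proof stops exactly where you say ``the real work lies'': inequality $(\ast)$ is never established. The paragraph that is supposed to close it (``tracing out the multiplicity space\ldots the ladder claim applies after conjugation by local unitaries\ldots should then close $(\ast)$'') does not describe an argument that works as stated. In your decomposition the operators $S_a=X\otimes S'_a$, $S_b=Z\otimes S'_b$ and $S_aS_b=XZ\otimes S'_aS'_b$ drag along the operators $S'_a,S'_b$ of the smaller cycle's SAUR, which are spread over many qubits and are not independent of the $S'_k$, $k\in V_0$; so the ``boundary part'' of $(\ast)$ does not reduce to a two-qubit expectation to which the two-qubit SDP statement of Theorem~\ref{cl:ladder2} can be applied. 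Making that reduction rigorous is not bookkeeping — it is the entire content of the lemma, and as written it is missing.

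The paper's proof sidesteps this by choosing the split on the other side of the recursion. Writing the standard SAUR of $C_{2n+3}$ on $n+1$ qubits so that the three \emph{new} operators are $\id_{2^{n-1}}XX$, $\id_{2^{n-1}}ZZ$ and $\id_{2^{n}}Y$ (supported on the last two qubits only), one adds and subtracts $\mean{\id_{2^{n-1}}YY}^2$. The operator $\id_{2^{n-1}}YY$ has the same commutation pattern with $S'_1,\dots,S'_{2n}$ as the deleted operator $\id_{2^{n-1}}Y$ of $C_{2n+1}$, so $\{S'_1,\dots,S'_{2n},\id_{2^{n-1}}YY\}$ is a genuine SAUR of $C_{2n+1}$ and its contribution is at most $\beta(C_{2n+1})$ by Theorem~\ref{thm:standard}; the leftover $\mean{\id Y}^2+\mean{XX}^2+\mean{ZZ}^2-\mean{YY}^2$ depends only on the reduced state of the last two qubits, so Theorem~\ref{cl:ladder2} applies verbatim and gives the missing $+1$. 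No inductive hypothesis $\beta(C_{2n+1})=\alpha(C_{2n+1})$ is needed for the lemma itself (that equality is a \emph{consequence}, via Theorem~\ref{thm:cycles}). If you want to salvage your route, you would essentially have to re-derive this decomposition from your $(\ast)$, at which point you are reproducing the paper's argument; as it stands, your proposal is an incomplete sketch with the decisive step unproven.
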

\begin{proof}
    For convenience, we label the vertice of $C_{2k+1}$ in the way such that 
    \begin{equation}
        1\sim 3\sim \cdots \sim 2k+1 \sim 2(k-1) \sim \cdots \sim 2,
    \end{equation}
    where $k=n, n+1$.
    
    The proof is based on the observation that the standard SAUR $\{S'_i\}_{i=1}^{2n+3}$ of $C_{2n+3}$ can be constructed as follows:
    \begin{align}
        &S'_i = S_i \id_2,\ \forall i=1, \ldots, 2n,\nonumber\\
        &S'_{2n+1} = \id_{2^{n-1}}  XX, S'_{2n+2} = \id_{2^{n-1}}  ZZ,\nonumber\\ 
        &S'_{2n+3} = \id_{2^{n}}  Y,
    \end{align}
    where $\{S_i\}_{i=1}^{2n} \cup \{\id_{2^{n-1}} Y\}$ is the standard SAUR of $C_{2n+1}$.

    Denote $S''_{2n+1} = \id_{2^{n-1}}  YY$, we have
    \begin{align}
        \beta(C_{2n+3})
        =& \max_{\sigma'} \left[\sum_{i=1}^{2n} \mean{S'_i}_{\sigma'}^2 + \sum_{i=2n+1}^{2n+3} \mean{S'_i}_{\sigma'}^2\right]\nonumber\\ 
        =& \max_{\sigma'} \left[\left(\sum_{i=1}^{2n} \mean{S'_i}_{\sigma'}^2 + \mean{S''_{2n+1}}_{\sigma'}^2\right) + \right.\nonumber\\ 
        &\left. \qquad \left(\sum_{i=2n+1}^{2n+3} \mean{S'_i}_{\sigma'}^2 -\mean{S''_{2n+1}}_{\sigma'}^2\right)\right]\nonumber\\ 
        \le& \beta(C_{2n+1}) + 1,
    \end{align}
    where the last inequality is from the fact that $\{S'_i\}_{i=1}^{2n} \cup \{S''_{2n+1}\}$ is one SAUR of $C_{2n+1}$ and Theorem~\ref{cl:ladder2}.
\end{proof}
Since $\alpha(C_{2n+1}) = n$, $\beta(C_5) = \alpha(C_5) = 2$, Theorem~\ref{cl:ladder2} and Lemma~\ref{thm:ladder2} lead to the following theorem.
\begin{theorem}
    $\cSOS{\saur(C_m)} = \stab(C_m)$.
\end{theorem}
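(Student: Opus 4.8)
The plan is to start from the sandwich $\stab(C_m)\subseteq\cSOS{\saur(C_m)}\subseteq\th(C_m)$ furnished by Proposition~\ref{pro:main_results}, so that the only thing left to prove is the reverse inclusion $\cSOS{\saur(C_m)}\subseteq\stab(C_m)$, and to obtain it from the explicit facet description of the stable-set polytope of a cycle. First I would dispose of the easy parities. If $m$ is even, then $C_m$ is bipartite, hence perfect, and for perfect graphs $\stab(C_m)=\th(C_m)$ (this is the polytope form of $\alpha(\cdot,w)=\vartheta(\cdot,w)$ for perfect graphs); the sandwich then collapses immediately. The same applies to $m=3$, where $C_3=K_3$ is a clique, hence perfect. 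So the real work is confined to the odd cycles $C_{2n+1}$ with $n\ge2$.

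For such a graph I would invoke the classical fact (``$t$-perfection'' of odd circuits) that
\begin{equation*}
    \stab(C_{2n+1})=\Bigl\{\,v\in\mathbb{R}^{2n+1}\ :\ v_i\ge0\ \ \forall i,\quad v_i+v_j\le1\ \ \forall\, i\sim j,\quad \textstyle\sum_i v_i\le n\,\Bigr\},
\end{equation*}
so that it suffices to verify these three families of inequalities on every point of $\cSOS{\saur(C_{2n+1})}$. The nonnegativity and edge constraints are already satisfied on all of $\th(C_{2n+1})$, hence on the smaller set: coordinates of points of $\SOS{\saur(C_{2n+1})}$ are squares $\mean{S_i}^2\ge0$, and for $i\sim j$ the operators $S_i,S_j$ are anti-commuting $\pm1$ observables, so $\cos\theta\,S_i+\sin\theta\,S_j$ is again a $\pm1$ observable for every $\theta$ and therefore $\mean{S_i}^2+\mean{S_j}^2\le1$; both properties pass to convex hulls. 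Hence the whole statement reduces to the single odd-hole inequality $\sum_i v_i\le n$, i.e.\ to the bound $\beta(C_{2n+1})=\sos{\saur(C_{2n+1})}\le n$.

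This remaining bound I would establish by induction on $n$, for which the preceding results are tailor-made: the base case is $\beta(C_5)=2=\alpha(C_5)$ (Corollary~\ref{cor:gh102}), and the inductive step is $\beta(C_{2n+3})\le\beta(C_{2n+1})+1\le n+1$ by Lemma~\ref{thm:ladder2}. Since $\beta(G)\ge\alpha(G)$ always and $\alpha(C_{2n+1})=n$, this forces $\beta(C_{2n+1})=n$; plugging this back in shows $\cSOS{\saur(C_{2n+1})}\subseteq\stab(C_{2n+1})$, and with the reverse inclusion already in hand, equality follows. Assembling the parity cases then yields $\cSOS{\saur(C_m)}=\stab(C_m)$ for all $m$.

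The only genuinely non-routine input is the odd-hole bound $\beta(C_{2n+1})\le n$, and it is entirely carried by Lemma~\ref{thm:ladder2} (the ``ladder'' recursion relating consecutive odd cycles), which in turn rests on the small optimization of Theorem~\ref{cl:ladder2}, namely the identity $\max_\rho\bigl[\mean{\id Y}_\rho^2+\mean{XX}_\rho^2+\mean{ZZ}_\rho^2-\mean{YY}_\rho^2\bigr]=1$, proved there via semidefinite duality. Everything above that layer is bookkeeping with the standard polyhedral description of $\stab(C_{2n+1})$ and with the elementary two-anti-commuting-observable inequality; the one spot calling for a little care is the claim that $\cSOS{\saur}$ really inherits the edge inequalities (equivalently $\beta(K_2)=1$), but this is immediate from the $\pm1$-observable argument together with convexity.
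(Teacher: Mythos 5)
Your proposal is correct and takes essentially the same route as the paper: perfection disposes of even cycles, and for odd cycles the ladder recursion of Lemma~\ref{thm:ladder2} (resting on Theorem~\ref{cl:ladder2}) together with the base case $\beta(C_5)=2$ yields $\beta(C_{2n+1})=n=\alpha(C_{2n+1})$, after which the facet description of $\stab(C_{2n+1})$, whose only nontrivial facet is the odd-hole inequality, finishes the argument. Your explicit verification of the nonnegativity and edge facets is a detail the paper leaves implicit, but it does not change the approach.
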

\begin{proof}
    Since $C_m$ is a perfect graph when $m$ is an even number, the fact that $\stab(C_m) = \th(C_m)$ implies $\cSOS{\saur(C_m)} = \stab(C_m)$.

    When $m=2n+1$ is an odd number, Theorem~\ref{cl:ladder2} and Lemma~\ref{thm:ladder2} result in the fact that $\beta(C_{2n+1}) = \alpha(C_{2n+1})$. Consequently, this implies that $\cSOS{\saur(C_{2n+1})} = \stab(C_{2n+1})$, since the only nontrivial facet of $\stab(C_{2n+1})$ has the norm vector  $(1,\cdots,1)$~\cite{benchetrit2017integer}.
\end{proof}
%%%%%%%%%%%%%%%%%%%%%%%%%%%%%%%%%%%%%%%%%%%%%%%%%%%%%%%%%%%%%%%%%%%%%%%%%%%%%%%%%%%%%%%
\begin{theorem}\label{thm:sara}
    $\SOS{\sara(G)} = \SOS{\saura{G}} = \th(G)$.
\end{theorem}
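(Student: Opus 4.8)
The plan is to prove the single missing inclusion $\SOS{\sara(G)}\subseteq\th(G)$. Indeed, the reverse containment $\th(G)\subseteq\SOS{\saura(G)}$ is already furnished by the explicit orthonormal-representation construction in the proof of Theorem~\ref{thm:thetavartheta2} (built from an optimal OR of $G$ via Lemma~\ref{lm:vec2mat}), and $\SOS{\saura(G)}\subseteq\SOS{\sara(G)}$ is immediate from $\saura(G)\subseteq\sara(G)$; so the chain $\th(G)\subseteq\SOS{\saura(G)}\subseteq\SOS{\sara(G)}\subseteq\th(G)$ would close, giving all three equalities at once — and, as a by-product, also the weighted upper bound $\sos{\saura(G),w}\le\vartheta(G,w)$ whose proof was deferred from Sec.~\ref{sec:saura}.

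To prove $\SOS{\sara(G)}\subseteq\th(G)$, fix a SARA $\{A_i\}$ of $G$ on some Hilbert space $\mathcal H$ and a state $\rho$; the goal is to show $\bigl(\EXP{A_i}^2\bigr)\in\th(G)$. The device is a GNS-type construction tailored to produce a \emph{real} inner product directly: on the real vector space of self-adjoint operators, set $g(X,Y):=\tfrac12\tr(\rho\{X,Y\})$, which is symmetric and positive semidefinite since $g(X,X)=\tr(\rho X^2)\ge 0$. Passing to the quotient by $\ker g$ and completing yields a real Hilbert space $K$ in which, writing $\widehat X$ for the class of $X$, one has $\|\widehat\id\|=1$, $\langle\widehat\id,\widehat{A_i}\rangle=\tr(\rho A_i)=\EXP{A_i}$, $\|\widehat{A_i}\|^2=\tr(\rho A_i^2)\le 1$ (here the axiom $A_i^2\le\id$ enters), and — decisively — $\langle\widehat{A_i},\widehat{A_j}\rangle=\tfrac12\tr(\rho\{A_i,A_j\})=0$ whenever $i\sim j$. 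If $\widehat{A_i}=0$ then $\EXP{A_i}=0$ as well, by Cauchy-Schwarz for $g$.

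The family $\{\widehat{A_i}\}$ is thus an orthonormal representation of $G$ \emph{up to normalization}, so the next step is to normalize: put $v_i:=\widehat{A_i}/\|\widehat{A_i}\|$ wherever $\widehat{A_i}\neq0$, and let $v_i$ be a fresh unit vector orthogonal to everything else at the finitely many vertices where $\widehat{A_i}=0$. Then $\{v_i\}$ is a genuine orthonormal representation of $G$, and with the handle $|\phi_0\rangle:=\widehat\id$ one gets $\langle\phi_0,v_i\rangle^2=\EXP{A_i}^2/\|\widehat{A_i}\|^2\ge\EXP{A_i}^2$ because $\|\widehat{A_i}\|\le1$ (with equality $0=\EXP{A_i}^2$ at the adjoined vertices). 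By the defining orthonormal-representation formula, $\bigl(\langle\phi_0,v_i\rangle^2\bigr)\in\th(G)$. To descend to the coordinatewise-smaller point $\bigl(\EXP{A_i}^2\bigr)$ I would invoke down-monotonicity of $\th(G)$ in $\mathbf{R}^n_{\ge0}$: given a representation achieving $x\in\th(G)$ and scalars $t_i\in[0,1]$, replacing each $v_i$ by $\sqrt{t_i}\,v_i+\sqrt{1-t_i}\,e_i$ with fresh mutually orthogonal unit vectors $e_i$ keeps the representation orthonormal and $\phi_0$ a handle while scaling the $i$-th coordinate to $t_ix_i$; hence $(t_ix_i)\in\th(G)$. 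Taking $t_i=\EXP{A_i}^2/\langle\phi_0,v_i\rangle^2$ (and $t_i=0$ where $\langle\phi_0,v_i\rangle=0$) finishes the argument.

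The only genuinely substantive point is this passage to an orthonormal representation: unlike in the SAURA case, $A_i^2\le\id$ yields only $\|\widehat{A_i}\|\le1$, not $=1$, so one cannot read off an OR on the nose and must absorb the slack by normalizing and then using that $\th(G)$ — and not, say, $\stab(G)$ — is down-monotone. Everything else is routine bookkeeping: the GNS step, the treatment of vanishing $\widehat{A_i}$, and, if one insists on the literal ambient space $\mathbf{R}^n$ in the definition of $\th(G)$, the standard compression of an orthonormal representation of an $n$-vertex graph to dimension $n$.
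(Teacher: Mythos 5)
Your proof is correct, and it reaches the conclusion by a genuinely different (though closely parallel) route. The paper reduces the claim to the weighted functional inequality $\sos{\sara(G),w}\le\vartheta(G,w)$, invokes a quadratic inequality bounding $\sum_i w_i\EXP{A_i}^2$ by the largest eigenvalue of the anticommutator Gram matrix $\mathcal{A}_{ij}=\sqrt{w_iw_j}\,\EXP{\{A_i,A_j\}/2}$, recognizes $\mathcal{A}$ as feasible for Knuth's maximum-eigenvalue characterization of $\vartheta(G,w')$ with the reduced weights $w'_i=w_i\tr(\rho A_i^2)$, and finishes with monotonicity of $\vartheta$ in the weight vector. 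You instead work with the same Gram matrix in primal form: your GNS construction realizes it as the Gram matrix of the vectors $\widehat{A_i}$ with handle $\widehat{\id}$, and the slack $\tr(\rho A_i^2)\le 1$ coming from axiom (iia) is absorbed by normalizing to a genuine orthonormal representation and then using down-monotonicity of $\th(G)$ as a convex corner --- the exact primal counterpart of the paper's use of $\vartheta(G,w')\le\vartheta(G,w)$. What your version buys is self-containedness (no appeal to the external eigenvalue inequality or to the dual SDP characterization of $\vartheta$) and a pointwise proof of the set inclusion $\SOS{\sara(G)}\subseteq\th(G)$ rather than a support-function argument; what the paper's version buys is brevity given the cited results. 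Two small points to keep honest: your reduction to the set inclusion alone is valid because $\th(G)\subseteq\SOS{\saura(G)}$ holds pointwise via the construction $\rho=(\id+\sum_k u_kA_k)/d$ in Sec.~\ref{sec:saura} (not merely at the level of support functions), and the dimension bookkeeping for the ambient space of the orthonormal representation, which you flag, is indeed standard and harmless.
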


\begin{proof}
 Notice that $\th(G) \subseteq \SOS{\saura(G)} \subseteq \SOS{\sara{G}}$. Hence, it is sufficient to prove $\SOS{\sara{G}} \subseteq \th(G)$, equivalently, $\sos{\sara(G),w} \le \vartheta(G,w)$ for any non-negative weight vector $w$.

For a given SARA $\{A_i\}$ and a state $\rho$, denote $\omega_i = \tr A_i^2 \rho$. The fact that $A_i^2 \le \id$ leads to $\omega_i \le 1$. Then~\cite{wang2022quantum}
\begin{align}
      \sum_i w_i\mean{A_i}_{\rho}^2 \le \lambda(\mathcal{A}),
\end{align}
where $\mathcal{A}$ is a matrix with $(i,j)$-th element $\sqrt{w_iw_j}\mean{\{A_i,A_j\}}/2$. Thus, the $i$-th diagonal term of $\mathcal{A}$ is $w_i\omega_i$, and $(i,j)$-th element is $0$ if $i\sim j$. Denote $w' = (w_i\omega_i)$. Notice that~\cite{knuth1994sandwich}
  \begin{align}
      \vartheta(G,w') &= \max_{\mathcal{B}} \lambda(\mathcal{B}),\nonumber\\
      \text{ s.t. } & \mathcal{B}_{ij} = 0, \text{ if } i\sim j,\\
                    & \mathcal{B}_{ii} = w'_i,\\
                    & \mathcal{B} \ge 0.
  \end{align}
  Thus, by definition, we have $\lambda(\mathcal{A}) \le \vartheta(G,w')$. Meanwhile, $\vartheta(G,w') \le \vartheta(G,w)$ since $w'_i \le w_i$ for each $i$. 
\end{proof}

\begin{lemma}\label{lm:sar_dec}
  For a given graph $G$ and one SAR $\{A_i\}$ of it, there is a unitary $U$ such that 
  \begin{equation}\label{eq:decSar}
    UA_iU^\dagger = \oplus_{t=1}^T A_i^{(t)},\ [A_i^{(t)}]^2 = [\lambda_{i}^{(t)}]^2 \id_{d_t},
  \end{equation}
  where $\{\lambda_i^{(t)}\}$ are singular values of $A_i$.
  Besides, $\{A_i^{(t)}\}$ is an SAR of $G$ for any $t=1, \ldots, T$.
\end{lemma}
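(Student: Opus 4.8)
The plan is to block-diagonalise the family $\{A_i\}$ along the common eigenspaces of the commuting operators $A_i^2$. The first step is to verify that $\{A_i^2\}$ is indeed a commuting family; in fact I claim the stronger statement $[A_i^2,A_j]=0$ for all $i,j$. When $i\not\sim j$ this is immediate from property (iv), since $A_i$ and $A_j$ commute outright. When $i\sim j$ we use the anti-commutation $A_iA_j=-A_jA_i$ from (iii) twice:
\begin{align}
  A_i^2 A_j = A_i(A_iA_j) = -A_i(A_jA_i) = -(A_iA_j)A_i = (A_jA_i)A_i = A_j A_i^2 .
\end{align}
Hence $[A_i^2,A_j^2]=0$ for all $i,j$. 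It is worth stressing that this is exactly the place where property (iv) enters, so the argument genuinely needs $\{A_i\}$ to be a SAR and not merely a SARA; this is consistent with the observation in Section~\ref{sec:results} that the analogous decomposition does not hold for SAURA.

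Since $\{A_i^2\}_{i=1}^n$ is a finite commuting family of self-adjoint (positive semidefinite) operators, they are simultaneously unitarily diagonalisable. Assuming $\dim\mathcal{H}<\infty$ (the case relevant here), there is a unitary $U$ and an orthogonal decomposition $\mathcal{H}=\bigoplus_{t=1}^T\mathcal{H}_t$, $\dim\mathcal{H}_t=d_t$, into the common eigenspaces, labelled by the finitely many distinct tuples $\big([\lambda_1^{(t)}]^2,\dots,[\lambda_n^{(t)}]^2\big)$, such that $UA_i^2U^\dagger$ acts as the scalar $[\lambda_i^{(t)}]^2\id_{d_t}$ on $\mathcal{H}_t$. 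Since $A_i^{*}A_i=A_i^2$, the non-negative numbers $\lambda_i^{(t)}$ are exactly the singular values of $A_i$, and property (iia) forces $[\lambda_i^{(t)}]^2\le 1$.

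It then remains to see that each $\mathcal{H}_t$ is invariant under every $A_i$ and that the restrictions again form a SAR. Invariance is immediate from the first step: for $v\in\mathcal{H}_t$ and any $j$, $A_j^2(A_iv)=A_i(A_j^2v)=[\lambda_j^{(t)}]^2A_iv$, so $A_iv\in\mathcal{H}_t$. Writing $A_i^{(t)}:=(UA_iU^\dagger)\big|_{\mathcal{H}_t}$ therefore gives $UA_iU^\dagger=\bigoplus_{t=1}^T A_i^{(t)}$ with $[A_i^{(t)}]^2=[\lambda_i^{(t)}]^2\id_{d_t}$, which is the claimed form. Finally, restriction to a common invariant subspace preserves self-adjointness, the bound $A_i^2\le\id$, and the relations $\{A_i,A_j\}=0$ and $[A_i,A_j]=0$ (since $\bigoplus_t\{A_i^{(t)},A_j^{(t)}\}=U\{A_i,A_j\}U^\dagger$, and likewise for commutators), so $\{A_i^{(t)}\}$ is a SAR of $G$ for every $t$.

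I do not expect a serious obstacle: in finite dimensions the whole argument is a few lines, and the only genuinely load-bearing observation is the commutativity of $\{A_i^2\}$ in the first step. The one point deserving care is the finiteness of $T$, which is automatic once $\mathcal{H}$ is finite-dimensional; if one insists on treating infinite-dimensional separable representations, the finite direct sum should be replaced by a direct integral over the joint spectrum furnished by the spectral theorem applied to the commuting family $\{A_i^2\}$, while the invariance and inheritance arguments go through verbatim.
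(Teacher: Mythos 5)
Your proof is correct and follows essentially the same route as the paper's: establish $[A_i^2,A_j]=0$ for all pairs (anti-commutation applied twice when $i\sim j$, property (iv) when $i\not\sim j$), simultaneously diagonalise the commuting family $\{A_i^2\}$, and use that commutation relation to show each joint eigenspace is invariant under every $A_j$, so the restrictions inherit all SAR properties. Your explicit verification of $[A_i^2,A_j]=0$ and the remark pinpointing where (iv) enters (and hence why the decomposition fails for SARA) are welcome additions, but the argument is the same.
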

\begin{proof}
  We notice that $[A_i^2, A_j]=0$ and $[A_i^2, A_j^2] = 0$ for any pair $(i,j)$. Hence, there is a unitary $U$ to diagonalize all the $A_i^2$'s simultaneously. By ordering the diagonal terms properly, we have the decomposition
  \begin{equation}
    UA_i^2U^\dagger = \oplus_{t=1}^T [\lambda_{i}^{(t)}]^2 \id_{d_t}.
  \end{equation} 
  Denote $\{\langle u|\}$ the rows of $U$. Then by choosing $|u\rangle, |v\rangle$ such that $A_i^2 |u\rangle = [\lambda_i^{(t)}]^2 |u\rangle$ and $A_i^2|v\rangle = [\lambda_i^{(l)}]^2 |v\rangle$, we have
  \begin{align}
    & \langle u| A_i^2 A_j|v\rangle = [\lambda_i^{t}]^2 \langle u|A_j|v\rangle\nonumber\\
    = & \langle u| A_j A_i^2|v\rangle = [\lambda_i^{l}]^2 \langle u|A_j|v\rangle.
  \end{align}
  Hence, $\langle u|A_j|v\rangle = 0$ whenever $\lambda_i^{(t)} \neq \lambda_i^{(l)}$. This leads to the desired decomposition as in Eq.~\eqref{eq:decSar}.
\end{proof}
\begin{lemma}\label{lm:star}
    For any given $\{S_i\} \in \saur(G)$, state $\rho$ and weight vector $w$ where $|w_i|\le 1$, there exists $\{P_i\} \in \saur(G)$ and state $\tau$ such that $\mean{P_i}_\tau = w_i \mean{S_i}_\rho$.
\end{lemma}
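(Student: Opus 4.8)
The plan is to absorb the rescaling by $w$ into an auxiliary system and then tensor it onto the given representation. Concretely, I would introduce $n$ ancilla qubits and, for each vertex $i$, set $R_i = \id^{\otimes(i-1)}\otimes Z\otimes\id^{\otimes(n-i)}$ on $(\mathbb{C}^2)^{\otimes n}$, and define $P_i := S_i\otimes R_i$ on $\mathcal{H}\otimes(\mathbb{C}^2)^{\otimes n}$. The $R_i$ are self-adjoint unitaries that pairwise commute, since they act on disjoint tensor factors; this commutativity is exactly what will make $\{P_i\}$ a SAUR of the \emph{same} graph $G$.

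First I would verify the algebraic conditions. Each $P_i$ is self-adjoint and $P_i^2 = S_i^2\otimes R_i^2 = \id$. Since the $R_i$ commute, $P_iP_j = S_iS_j\otimes R_iR_j$ and $P_jP_i = S_jS_i\otimes R_iR_j$, so $\{P_i,P_j\} = \{S_i,S_j\}\otimes R_iR_j$ and $[P_i,P_j] = [S_i,S_j]\otimes R_iR_j$. Hence $P_i$ anticommutes with $P_j$ if and only if $S_i$ anticommutes with $S_j$, i.e.\ if and only if $i\sim j$, and likewise $P_i$ commutes with $P_j$ if and only if $i\nsim j$; therefore $\{P_i\}\in\saur(G)$.

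Then I would choose the product ancilla state $\tau_{\mathrm{anc}} := \bigotimes_{i=1}^{n}\frac{\id + w_iZ}{2}$, which is positive semidefinite and of unit trace precisely because $|w_i|\le 1$, and which satisfies $\tr\!\bigl[R_i\,\tau_{\mathrm{anc}}\bigr] = w_i$. Taking $\tau := \rho\otimes\tau_{\mathrm{anc}}$ yields $\mean{P_i}_\tau = \mean{S_i}_\rho\,\tr\!\bigl[R_i\,\tau_{\mathrm{anc}}\bigr] = w_i\mean{S_i}_\rho$, which is the claim. (Specialising to $w_i\equiv\sqrt{t}$ with $t\in[0,1]$ then gives the star-convexity of $\SOS{\saur(G)}$ mentioned in Sec.~\ref{sec:sar}, since $\mean{P_i}_\tau^2 = t\,\mean{S_i}_\rho^2$.)

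The construction is elementary, so there is no deep obstacle; the only point that genuinely needs care is the bookkeeping in the second step — checking that tensoring each generator with a family of mutually commuting unitaries neither introduces nor removes any (anti)commutation relation, so that the resulting set is a SAUR of $G$ itself and not of a different graph. I expect this to be the main thing to state cleanly, the verification that $\tau$ is a legitimate density matrix being entirely routine.
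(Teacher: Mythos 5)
Your proposal is correct and follows essentially the same route as the paper: tensor each $S_i$ with a self-adjoint unitary acting on a dedicated ancilla qubit (so the mutually commuting ancilla factors leave the frustration graph unchanged) and choose a product ancilla state whose expectation on the $i$-th factor equals $w_i$. The only cosmetic difference is that the paper keeps the ancilla state pure, $|+\rangle^{\otimes n}$, and puts the weight into a rotated operator $w_iX+\sqrt{1-w_i^2}\,Z$, whereas you keep the operator fixed as $Z$ and put the weight into the mixed ancilla state $(\id+w_iZ)/2$.
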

\begin{proof}
    Denote 
    \begin{align}
        P_i = S_i \otimes (\otimes_j Z(w_i)^{\delta_{ij}}),\ \tau = \rho\otimes (|+\rangle\langle +|)^{\otimes n},
    \end{align}
    where $Z(w) = w X + \sqrt{1-w^2} Z$ for $t\in [-1,1]$, $\delta_{ij} = 1$ if $i=j$, otherwise, $\delta_{ij} = 0$, and $n$ is the number of vertices in $G$.
    Direct calculation concludes the proof.
\end{proof}

We remark here that Lemma~\ref{lm:star} also implies that $\SOS{\saur(G)}$ is star-convex.
\begin{theorem}
    $\SOS{\sar(G)} = \SOS{\saur{G}}$.
\end{theorem}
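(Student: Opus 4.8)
The plan is to prove the nontrivial half of the claimed equality, $\SOS{\sar(G)}\subseteq\SOS{\saur(G)}$; the reverse inclusion is immediate from $\saur(G)\subseteq\sar(G)$. Concretely, I would show that for every SAR $\{A_i\}$ of $G$ and every state $\rho$ there is a SAUR $\{P_i\}$ of $G$ and a state $\tau$ with $\mean{P_i}_\tau=\mean{A_i}_\rho$ for all $i$; squaring these equalities then places $(\mean{A_i}_\rho^2)_i$ in $\SOS{\saur(G)}$.

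First I would invoke Lemma~\ref{lm:sar_dec}: after a unitary conjugation (absorbed into $\rho$) we may assume $A_i=\bigoplus_{t}A_i^{(t)}$ with $(A_i^{(t)})^2=(\lambda_i^{(t)})^2\id$ and each $\{A_i^{(t)}\}_i$ again a SAR of $G$. Writing the diagonal blocks of $\rho$ as $p_t\rho_t$ with $p_t\ge 0$, $\sum_t p_t=1$, and $\rho_t$ a state, block-diagonality of the $A_i$ gives $\mean{A_i}_\rho=\sum_{t:p_t>0}p_t\mean{A_i^{(t)}}_{\rho_t}$. Since a (possibly countable) direct sum of SAURs of $G$ is again a SAUR of $G$ and $\bigoplus_t p_t\tau_t$ is a state, it is enough to realize each blockwise tuple $(\mean{A_i^{(t)}}_{\rho_t})_i$ by some SAUR of $G$ and some state; the full realization is then the $p_t$-weighted direct sum over $t$.

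So fix a block and write $B_i:=A_i^{(t)}$, $\lambda_i:=\lambda_i^{(t)}$, $\sigma:=\rho_t$. If all $\lambda_i>0$ then $S_i:=B_i/\lambda_i$ is a self-adjoint unitary inheriting the (anti-)commutation pattern of $G$, hence $\{S_i\}\in\saur(G)$ and $\mean{B_i}_\sigma=\lambda_i\mean{S_i}_\sigma$ with $\lambda_i\le 1$; Lemma~\ref{lm:star} then hands me a SAUR $\{P_i\}$ of $G$ and a state $\tau$ with $\mean{P_i}_\tau=\lambda_i\mean{S_i}_\sigma=\mean{B_i}_\sigma$, as required. I expect the only genuine obstacle to be vanishing singular values: if $\lambda_i=0$ then $B_i=0$, the quotient $B_i/\lambda_i$ is undefined, and in general no self-adjoint unitary on the ambient Hilbert space has the prescribed (anti-)commutation relations with the other $B_j$. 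What remains true is that on $U:=\{i:\lambda_i>0\}$ one still has $\{B_i/\lambda_i\}_{i\in U}\in\saur(G[U])$ (the subgraph of $G$ induced on $U$), while $\mean{B_i}_\sigma=0$ for $i\notin U$. Thus the crux is an extension statement: a SAUR of an induced subgraph of $G$ can be extended to a SAUR of $G$ in which the added coordinates have vanishing expectation.

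I would prove this extension by induction on the number of vertices added, one vertex $z$ at a time. Given a SAUR $\{S_i\}_{i\in U}$ of the subgraph induced on $U$, acting on $\mathcal{H}$, split $U$ into the set $N$ of neighbours and the set $M$ of non-neighbours of $z$ in $G$, and define on $\mathcal{H}\otimes\mathbb{C}^2$ the operators $P_i:=S_i\otimes X$ for $i\in N$, $P_i:=S_i\otimes\id$ for $i\in M$, and $P_z:=\id\otimes Z$. Using $\{X,Z\}=0$ and $X^2=Z^2=\id$ one checks directly that $\{P_i\}_{i\in U\cup\{z\}}$ is a SAUR of the subgraph induced on $U\cup\{z\}$; and taking the new qubit in the state $|+\rangle\langle+|$ leaves $\mean{P_i}=\mean{S_i}_\rho$ for $i\in U$ while forcing $\mean{P_z}=\mean{Z}_{|+\rangle}=0$. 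Iterating over all vertices of $V\setminus U$, then applying Lemma~\ref{lm:star} to install the scale factors $\lambda_i$ (or folding them into the construction), and finally taking the $p_t$-weighted direct sum over the blocks, produces the desired $\{P_i\}\in\saur(G)$ and $\tau$ with $\mean{P_i}_\tau=\mean{A_i}_\rho$, which finishes the argument.
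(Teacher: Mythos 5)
Your proposal is correct and follows essentially the same route as the paper's proof: decompose the SAR into blocks with $[A_i^{(t)}]^2=[\lambda_i^{(t)}]^2\id$ via Lemma~\ref{lm:sar_dec}, rescale each block to a SAUR, reinstall the factors $\lambda_i^{(t)}$ with Lemma~\ref{lm:star}, and reassemble by a $p_t$-weighted direct sum. The one genuine addition is your explicit one-vertex-at-a-time extension handling blocks with $\lambda_i^{(t)}=0$ (where $A_i^{(t)}/\lambda_i^{(t)}$ is undefined), a degenerate case the paper's argument passes over silently; your $S_i\otimes X$, $S_i\otimes\id$, $\id\otimes Z$ construction with the ancilla in $|+\rangle$ closes that gap correctly.
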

\begin{proof}
    It is equivalent to prove that, for any given $\{S_i\} \in \sar(G)$ and state $\rho$, there is a $\{P_i\} \in \saur(G)$ and state $\tau$ such that $\mean{S_i}_{\rho} = \mean{P_i}_\tau$.
    Without loss of generality, we assume that $S_i = \oplus_{t=1}^T S_i^{(t)}$, where $S_i^{(t)}$ acts on $\mathcal{H}^{(t)}$. A direct calculation shows that
    \begin{equation}
        \mean{S_i}_\rho = \sum_t p_t \mean{S_i^{(t)}}_{\rho_t},
    \end{equation}
    where $\rho_t$ is the block of $\rho$ in $\mathcal{H}^{(t)}$ up to the normalization coefficient $p_t$.

    From Lemma~\ref{lm:star}, we know that there exists $\{P_i^{(t)}\} \in \saur(G)$ and state $\tau_t$ such that $\mean{S_i^{(t)}}_{\rho_t} = \mean{P_i^{(t)}}_{\tau_t}$. Consequently,
    \begin{equation}\label{eq:middleStep}
        \mean{S_i}_\rho = \sum_t p_t \mean{P_i^{(t)}}_{\tau_t}.
    \end{equation}
    Denote $P_i = \oplus_t P_i^{(t)}$ and $\tau = \oplus_t p_t\tau_t$, we have $\{P_i\} \in \saur(G)$ and $\tau$ is a quantum state. Eq.~\eqref{eq:middleStep} implies that $\mean{S_i}_\rho = \mean{P_i}_{\tau}$.
\end{proof}

\twocolumngrid
\bibliography{ref.bib}
\end{document}